\newcounter{lastnote}
\DeclareMathOperator*{\arginf}{arg\,inf}
\newcommand{\ci}{\perp\!\!\!\perp}
\newtheorem{Conjecture}{Conjecture}
\newtheorem{Theorem}{Theorem}[section]
\newtheorem{Lemma}{Lemma}
\newtheorem{Proposition}{Proposition}
\theoremstyle{definition}
\newtheorem{Definition}{Definition}
\newtheorem{Property}{Property}
\newtheorem{Example}{Example}
\title{Complexity as Causal Information Integration}
\author{Carlotta Langer$^{1 }$ and Nihat Ay $^{1, 2, 3}$
\\
\\
\normalsize{$^{1}$ \quad Max Planck Institute for Mathematics in the Sciences, Leipzig, Germany }\\
\normalsize{$^{2}$ \quad Leipzig University, Leipzig, Germany} \\
\normalsize{$^{3}$ \quad Santa Fe Institute, Santa Fe, USA}}
\date{}
\begin{document}
%%%%%%%%%%%%%%%%%%%%%%%%%%%%%%%%%%%%%%%%%%
\maketitle 
%%%%%%%%%%%%%%%%%%%%%%%%%%%%%%%%%%%%%%%%%%
\begin{abstract}
Complexity measures in the context of the Integrated Information Theory of consciousness try to quantify the strength of the causal connections between different neurons. This is done by minimizing the KL-divergence between a full system and one without causal cross-connections. Various measures have been proposed and compared in this setting. We will discuss a class of information geometric measures that aim at assessing the intrinsic causal cross-influences in a system. One promising candidate of these measures, denoted by $\Phi_{CIS}$, is based on conditional independence statements and does satisfy all of the properties that have been postulated as desirable. Unfortunately it does not have a graphical representation, which makes it less intuitive and difficult to analyze.
We propose an alternative approach using a latent variable, which models a common exterior influence. This leads to a measure $\Phi_{CII}$, Causal Information Integration, that satisfies all of the required conditions. Our measure can be calculated using an iterative information geometric algorithm, the em-algorithm. Therefore we are able to compare its behavior to existing integrated information measures.
\end{abstract}

% Keywords
\textit{keywords:} Complexity; Integrated Information; Causality; Conditional Independence; em-Algorithm
\section{Introduction}
The theory of Integrated Information aims at quantifying the amount and quality of consciousness of a neural network. 
It was originally proposed by Tononi and went through various phases of evolution, starting with one of the first papers "Consciousness and {Complexity"} \cite{Tononi1999} in 1999 to "Consciousness as Integrated Information--a Provisional {Manifesto"} \cite{Tononi} in 2008 and Integrated Information Theory (IIT) 3.0 \cite{IIT3.0} in 2014 to ongoing research. Although important parts of the methodology of this theory changed or got extended the two key concepts determining consciousness that virtually stayed fixed are "Information" and "Integration". 
Information refers to the number of different states a system can be in and Integration describes the amount to which the information is integrated among different parts of it. 
{Tononi summarizes this idea in Reference \cite{Tononi} with the following sentence:
\begin{quote}
\textit{In short, integrated information captures the information
generated by causal interactions in the whole, over and
above the information generated by the parts.}
\end{quote}    
Therefore Integrated Information can be seen as a measure of the systems complexity. In this context it belongs to the class of theories that define complexity as to what extent the whole is more than the sum of its parts. } 

There are various ways to define a split system and the difference between them. Therefore, there exist different branches of complexity measures in the context of Integrated Information. {The most recent theory, IIT 3.0 \cite{IIT3.0}, goes far beyond the original measures and includes a different level of definitions corresponding to the quality of the measured consciousness, including the maximally irreducible conceptual structure (MICS) and the integrated conceptual information. In order to focus on the information geometric aspects of IIT, we follow the strategy of Oizumi et al.~\cite{uniframe} and Amari et al.~\cite{GeomInfInt}, restricting attention to measuring the integrated information in discrete $n$-dimensional stationary Markov processes from an information geometric point of view.}

In detail we will measure the distance between the full and the split system using the KL-divergence as proposed in Reference \cite{stochIntPreprint}, {published in Reference \cite{stochInt}}. This framework was further discussed in Reference \cite{GeomAppr}. Oizumi et al.~\cite{uniframe} and Amari et al.~\cite{GeomInfInt} summarize these ideas and add a Markov condition and an upper bound to clarify what a complexity measure should satisfy. {The Markov condition intends to model the removal of certain cross-time connections, which we call causal cross-connections. These connections are the ones that integrate information among the different nodes across different points in time. The upper bound was originally proposed in Reference \cite{decoding} and is given by the mutual information, which aims at quantifying the total information flow from one timestep to the next. These conditions are defined as necessary and do not specify a measure uniquely.} We will discuss the conditions in the next section. 

Additionally  Oizumi et al.~\cite{uniframe} and Amari et al.~\cite{GeomInfInt} introduce one measure that satisfies all of these requirements. This measure is described by conditional independence statements and will be denoted here by $\Phi_{CIS}$. We will introduce $\Phi_{CIS}$ along with two other existing measures, namely Stochastic Interaction $\Phi_{SI}$ \cite{stochInt} and Geometric Integrated Information $\Phi_{G}$ \cite{amari}. {The measure $\Phi_{SI}$ is not bounded from above by the mutual information and $\Phi_{G}$ does not satisfy the postulated Markov condition. }

Although $\Phi_{CIS}$ fits perfectly in the proposed framework, this measure does not correspond to a graphical representation and it is therefore difficult to analyze the causal nature of the measured information flow. {We focus on the notion of causality defined by Pearl in Reference \cite{pearl}, in which the correspondence between conditional independence statements and graphs, for instance DAGs or more generally chain graphs, is a key concept.
Moreover, we demonstrate that it is not possible to express the conditional independence statements corresponding to $\Phi_{CIS}$ using a chain graph even after adding latent variables. Following the reasoning of Pearls causality theory, however, this would be a desirable property.}

The main purpose of this paper is to propose a more intuitive approach {that ensures the consistency between graphical representation and conditional independence statements. This is achieved by using a latent variable that models a common exterior influence. Doing so leads to a new measure, which we call Causal Information Integration $\Phi_{CII}$.} This measure is specifically created to only measure the intrinsic causal cross-influences { in a setting with an unknown exterior influence} and it satisfies all the required conditions postulated by Oizumi et al. To assume the existence of an unknown exterior influence is not unreasonable, in fact one point of criticism concerning $\Phi_{SI}$ is that this measure does not account for exterior influences and therefore measures them erroneously as internal, see Section 6.9.~in Reference \cite{amari}. In a setting with known external influences, these can be integrated in the model as visible variables. This leads to a model discussed in Section \ref{Sectgroundtruth} that we call $\Phi_{T}$, which is an upper bound for $\Phi_{CII}$.

We discuss the relationships between the introduced measures in Section \ref{SectRel} and present a way of calculating $\Phi_{CII}$ by using an iterative information geometric algorithm, the em-algorithm described in Section \ref{SectEm}. { This algorithm is guaranteed to converge to a minimum, but this might be a local minimum. Therefore we have to run the algorithm multiple times to find a global minimum.}
Utilizing this algorithm we are able to compare the behavior of $\Phi_{CII}$ to existing integrated information measures.
%%%%%%%%%%%%%%%%%%%%%%%%%%%%%%%%%%%%%%%%%%
\newpage
\subsection{Integrated Information Measures} \label{sectIntInfMeas}

Measures corresponding to Integrated Information investigate the information flow in a system from a time $t$ to $t+1$.
This flow is represented by the connections from the nodes $X_{i}$ in $t$ to the nodes $Y_{i}$ in $t+1, \, i \in \{1, \dots, n\}$  as displayed in Figure \ref{interactInt}. 
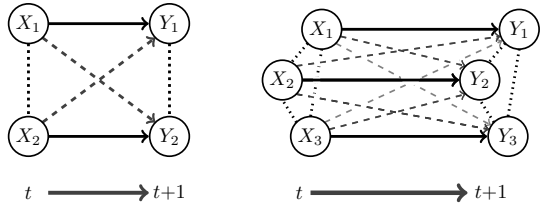
\begin{figure}[H]
\centering
\scalebox{0.75}{
\begin{tikzpicture}[rounded corners]
\draw[] (0,0) node {$X_{1}$};
\draw[->,line width=0.5mm] (0.35,0)--(2.133,0);
\draw[->,line width=0.5mm] (0.35,-2)--(2.133,-2);
\draw[->,line width=0.7mm, darkgray] (0.35,-3)--(2.133,-3);
\draw[->,dashed, darkgray, line width=0.5mm] (0.25,-0.25)--(2.25,-1.75);
\draw[->,dashed, darkgray,line width=0.5mm] (0.25,-1.75)--(2.25,-0.25);
\draw[line width=0.5mm,dotted] (0,-0.35)--(0,-1.65);
\draw[line width=0.5mm,dotted] (2.5,-0.35)--(2.5,-1.65);
\draw[] (0,-2) node {$X_{2}$};
\draw[] (2.5,0) node { $Y_{1}$};
\draw[] (2.5,-2) node { $Y_{2}$};
\draw[] (0, -3) node {$t$};
\draw[] (2.5,-3) node {$t$+$1$};
\draw[line width=0.3mm] (0,0) circle (10pt);
\draw[line width=0.3mm] (0,-2) circle (10pt);
\draw[line width=0.3mm] (2.5,0) circle (10pt);
\draw[line width=0.3mm] (2.5,-2) circle (10pt);
\draw[]  (5.2,-0.1) node (A) {$X_{1}$};
\draw[] (5,-2) node (B) {$X_{3}$};
\draw[] (4.5,-1) node (C) { $X_{2}$};
\draw[] (8.7,-0.1) node (D) {$Y_{1}$};
\draw[] (8.5,-2) node (E){$Y_{3}$};
\draw[] (8,-1) node (F) {$Y_{2}$};
\draw[->,thick, line width=0.5mm] (5.55,-0.1)--(8.333,-0.1);
\draw[->,thick,line width=0.5mm] (5.35,-2)--(8.133,-2);
\draw[->,dashed,gray,line width=0.3mm] (5.45,-0.35) coordinate (a_1) -- (8.2,-1.8) coordinate (a_2);
\draw[->, dashed, gray,line width=0.3mm] (5.25,-1.75) coordinate (b_1) --(8.4,-0.3) coordinate (b_2);
\draw[->,dashed, gray,line width=0.3mm] (4.75,-0.75) coordinate (c_1) --(8.366,-0.225) coordinate (c_2);
\draw[->,dashed,line width=0.3mm] (4.75,-1.25) coordinate (f_1)--(8.166,-1.875) coordinate (f_2);
\draw[->,dashed,gray,line width=0.3mm] coordinate (d_1) (5.53,-0.225)--(7.75,-0.75) coordinate (d_2);
\draw[->,dashed,gray,line width=0.3mm] (5.33,-1.875) coordinate (e_1)--(7.75,-1.25) coordinate (e_2);
\draw[thick, dotted, line width=0.5mm] (5.2,-0.45) coordinate (g_1) --(5,-1.65) coordinate (g_2);
\draw[thick,->,line width=0.5mm] (4.85,-1) coordinate (h_1)--(7.633,-1) coordinate (h_2);
%\fill[white] (8.25,-0.25) arrow (2.5pt);
\draw[thick, dotted, line width=0.5mm] (4.95,-0.35)--(4.65,-0.7);
\draw[thick, dotted,line width=0.5mm] (4.75,-1.75)--(4.5,-1.35);
\draw[thick, dotted,line width=0.5mm] (8.7,-0.45) --(8.5,-1.65);
\draw[thick, dotted,line width=0.5mm] (8.45,-0.35)--(8.15,-0.7);
\draw[thick, dotted,line width=0.5mm] (8.25,-1.75)--(8,-1.35);
\coordinate (c) at (intersection of a_1--a_2 and b_1--b_2);
\path (c) +(0,0.12) coordinate (m);
\fill[white] (m) circle (5pt);
\coordinate (d) at (intersection of a_1--a_2 and c_1--c_2);
\fill[white] (d) circle (3pt);
\coordinate (e) at (intersection of b_1--b_2 and d_1--d_2);
\path (e) +(0.06,0.06) coordinate (f);
\fill[white] (f) circle (3pt);
\coordinate (g) at (intersection of e_1--e_2 and a_1--a_2);
\fill[white] (g) circle (3pt);
\coordinate (h) at (intersection of b_1--b_2 and f_1--f_2);
\fill[white] (h) circle (3pt);
\coordinate (i) at (intersection of c_1--c_2 and g_1--g_2);
\fill[white] (i) circle (2pt);
\coordinate (j) at (intersection of f_1--f_2 and g_1--g_2);
\fill[white] (j) circle (2pt);
\coordinate (k) at (intersection of h_1--h_2 and g_1--g_2);
\fill[white] (k) circle (2pt);
\draw[->,dashed,darkgray,line width=0.3mm] (5.53,-0.225)--(7.75,-0.75);
\draw[->, dashed ,darkgray,line width=0.3mm] (5.33,-1.875)--(7.75,-1.25);
\draw[->,dashed,darkgray,line width=0.3mm] (4.75,-1.25)--(8.166,-1.875);
\draw[thick,->, line width=0.5mm] (4.85,-1)--(7.633,-1);

\draw[thick,->,darkgray, line width=0.8mm] (5,-3)--(7.783,-3);
\draw[] (4.8, -3) node{$t$};
\draw[] (8.2,-3) node{$t$+$1$};
\draw[->,darkgray, dashed,line width=0.3mm] (4.75,-0.75) --(8.366,-0.225);
\draw[line width=0.3mm] (4.5,-1) circle (10pt);
\draw[line width=0.3mm] (5.2,-0.1) circle (10pt);
\draw[line width=0.3mm] (5,-2) circle (10pt);
\draw[line width=0.3mm] (8.7,-0.1) circle (10pt);
\draw[line width=0.3mm] (8.5,-2) circle (10pt);
\draw[line width=0.3mm] (8,-1) circle (10pt);
\end{tikzpicture} }
\caption{The fully connected system for $n = 2$ and $n=3$. } \label{interactInt}
\end{figure}  
The systems are modeled as discrete, stationary, $n$-dimensional Markov processes $(Z_{t})_{t \in \mathbb{N}}$
\begin{equation*}
X = (X_{1}, \dots , X_{n}) = (X_{1,t}, \dots , X_{n,t}), \quad \quad Y = (Y_{1}, \dots , Y_{n}) = (X_{1,t+1}, \dots , X_{n,t+1}), \quad \quad Z = (X,Y)
\end{equation*}
on a finite set $\mathcal{Z} \neq \emptyset$, which is the Cartesian product of the sample  spaces of $X_{i}$ $i \in \{1 \dots n\}$ , denoted by $\mathcal{X}_{i}$
\begin{equation*}
\mathcal{Z} = \mathcal{X} \times \mathcal{Y} = \bigtimes\limits_{i=1}^{n} \mathcal{X}_{i} \times \bigtimes\limits_{i=1}^{n} \mathcal{Y}_{i}.
\end{equation*} 
{It is possible to apply the following methods to non-stationary distributions, but this assumption in addition to the process being Markovian allows us to restrict the discussion to one time step.

Let $MP(\mathcal{Z})$ be set of distributions that belong to these Markov processes. }

Denote the complement of $X_{i}$ in $ X $ by  
$X_{I \setminus \{i\}} = (X_{1}, \dots , X_{i-1},X_{i+1}, \dots , X_{n}) $ with $I = \{1, \dots, n\}$. Corresponding to this notation $x_{I \setminus \{i\}} \in \mathcal{X}_{I \setminus \{i\}}$ describes the elementary events of $X_{I \setminus \{i\}}$. We will use the analogue notation in the case of $Y$ and we will write $z \in \mathcal{Z}$ instead of  $ (x,y) \in \mathcal{X} \times \mathcal{Y} $. The set of probability distributions on $\mathcal{Z}$ will be denoted by $\mathcal{P}(\mathcal{Z})$. Throughout this article we will restrict attention to strictly positive distributions.

The core idea of measuring Integrated Information is to determine how much the initial system differs from one in which no information integration takes place. The former will be called a "full" system, because we allow all possible connections between the nodes, and the latter will be called a "split" system.
Graphical representations of the full systems for $n=2,3$ and their connections are depicted in Figure  \ref{interactInt}. 
In this article we are using graphs that describe the conditional independence structure of the corresponding sets of distributions. An introduction to those is given in Appendix \ref{AppGraph}. 

{Graphs are not only a tool to conveniently represent conditional independence statements, but the connection between conditional independence and graphs is a core concept of Pearls causality theory. The interplay between graphs and conditional independence statements provides a consistent foundation of causality. In Reference \cite{pearl} Section 1.3 Pearl emphasizes the importance  of a graphical representation with the following statement:
\begin{quote}
\textit{It seems that if conditional independence judgments are by-products of
stored causal relationships, then tapping and representing those relationships directly
would be a more natural and more reliable way of expressing what we know or believe
about the world. This is indeed the philosophy behind causal Bayesian networks.}
\end{quote} 
Therefore, measures of the strength of causal cross-connections should be based on split models, that have a graphical representation.} 

Following the concept introduced in References \cite{stochIntPreprint} \cite{stochInt}, the difference between the measures corresponding to the full and split systems will be calculated by using the KL-divergence.
\begin{Definition}[Complexity]
Let $\mathcal{M}$ be a set of probability distributions on $\mathcal{Z}$ corresponding to a split system. 
Then we minimize the KL-divergence between $\mathcal{M}$ and the distribution of the fully connected system $\tilde{P}$ to calculate the complexity
\begin{equation*}
\Phi_{\mathcal{M}} = \inf\limits_{Q \in \mathcal{M}} D_{\mathcal{Z}}(\tilde{P} \parallel Q)  = \sum\limits_{z \in \mathcal{Z}} \tilde{P}(z) \, log \dfrac{ \tilde{P}(z)}{Q(z)}.
\end{equation*}
\end{Definition}
Minimizing the KL-divergence with respect to the second argument is called $m$-projection or rI-projection. Hence we will call $P^{\star}$ with 
\begin{equation*}
P^{\star} = \arginf\limits_{Q \in \mathcal{M}} D_{\mathcal{Z}}(\tilde{P} \parallel Q)
\end{equation*} 
the projection of $\tilde{P}$ to $\mathcal{M}$.

The question remains how to define the split model $\mathcal{M}$. We want to measure the information that gets integrated between different nodes in different points in time. In Figure \ref{interactInt} these are the dashed connections, {also called cross-influences in Reference \cite{uniframe}. We will refer to the dashed connections as causal cross-connections}.

In order to ensure that these connections are removed in the split system, the authors of Reference \cite{uniframe} and Reference \cite{GeomInfInt} argue that $Y_{j}$ should be independent of $X_{i}$ given $X_{I \setminus \{i\}}$, $i \neq j$, leading to the following property.
\begin{Property} \label{postulate1}
A valid split system should satisfy the Markov condition
\begin{equation}
Q(X_{i},Y_{j} \mid X_{I \setminus \{i\}}) = Q(X_{i}\mid X_{I \setminus \{i\}})Q(Y_{j}\mid X_{I \setminus \{i\}}), \quad \, i \neq j \label{MarkovPop},
\end{equation}
with $Q \in \mathcal{P}(\mathcal{Z}) $. This can also be written in the following form
\begin{equation}
Y_{j} \ci X_{i} \vert X_{I \setminus \{i\}}.
\end{equation}
\end{Property} 
Now we take a closer look at the remaining connections.
The dotted lines connect nodes belonging to the same point in time. {These connections between the $Y_{i}$s might result from common internal influences, meaning a correlation between the $X_{i}$s passed on to the next point in time via the dashed or solid connections. Additionally Amari points out in Section 6.9 in Reference \cite{amari} that there might exist a common exterior influence on the $Y_{i}$s. Although the measured integrated information should be internal and independent of external influences, the system itself is in general not completely independent of its environment.  }

Since we want to measure the amount of integrated information between $t$ and $t+1$, the distribution in $t$, and therefore the connection between the $X_{i}$s, should stay unchanged in the split system. The dotted connections between the $Y_{i}$s play an important role in Property \ref{postulate2}. For this property, we will consider the split system in which the solid and dashed connections are removed.

The solid arrows represent the influence of a node in $t$ on itself in $t+1$ and removing these arrows, in addition to the causal cross-connections, leads to a system with completely disconnected points in time as shown on the right in {Figure} \ref{interiorexteriorInfluence}. The distributions corresponding to this split system are 
%Figure 3 should be mentioned after Figure 2. Please correct the order of the Figures. 
%We now reference Figure 2 before Figure 3.
\begin{equation*}
\mathcal{M}_{I} =\{ Q \in \mathcal{P} (\mathcal{Z}) \vert Q(z) =Q(x) Q(y), \forall z = (x,y) \in \mathcal{Z}\}
\end{equation*} 
and the measure $\Phi_{I}$  is given by the mutual information $I(X;Y)$, which is defined in the following way
\begin{equation*}
\Phi_{I} = I(X;Y) = \sum\limits_{z \in \mathcal{Z}} P(x,y) \, log \left( \dfrac{P(x,y)}{P(x)P(y)} \right).
\end{equation*}
Since there is no information flow between the time steps Oizumi et al.~argue in Reference \cite{uniframe} that an integrated information measure should be bounded from above by the mutual information.

\begin{Property} \label{postulate2} The mutual information should be an upper bound for an Integrated Information measure
\begin{equation*}
\Phi_{\mathcal{M}} = \inf\limits_{Q \in \mathcal{M}} D_{\mathcal{Z}}( \tilde{P} \mid Q) \leq I(X;Y).
\end{equation*}
\end{Property}
Oizumi et al.~\cite{decoding},\cite{uniframe} and Amari et al.~\cite{GeomInfInt} state that this property is {natural, because an Integrated Information measure should be bounded by the total amount of information flow between the different points in time. The postulation of this property led to a discussion in Reference \cite{Comparison}. The point of disagreement concerns the edge between the $Y_{i}$s.  On the one hand this connection takes into account that the $Y_{i}$s might have a common exterior influence that affects all the $Y_{i}$s, as pointed out by Amari in Reference \cite{amari}. } This is symbolized by the additional node $W$ in Figure  \ref{interiorexteriorInfluence} and this should not contribute to the value of Integrated Information between the different points in time.
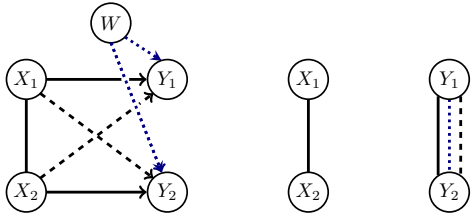
\begin{figure}[H]
\centering
\scalebox{0.75}{
\begin{tikzpicture}[rounded corners]
\draw[] (10,0) node {$X_{1}$};
%\draw[->,line width=0.5mm] (10.35,0)--(12.133,0);
%\draw[->,line width=0.5mm] (10.35,-2)--(12.133,-2);
%\draw[->, line width=0.5mm] (10.25,-0.25)--(12.25,-1.75);
%\draw[->,line width=0.5mm] (10.25,-1.75)--(12.25,-0.25);
\draw[line width=0.5mm] (10,-0.35)--(10,-1.65);
\draw[line width=0.5mm, blue!50!black, dotted] (12.5,-0.35)--(12.5,-1.65);
\draw[line width=0.5mm, dashed] (12.7,-0.3)--(12.7,-1.7);
\draw[line width=0.5mm] (12.3,-0.3)--(12.3,-1.7);
%\draw[line width=0.5mm] (7.5,-0.35)--(7.5,-1.65);
\draw[] (10,-2) node {$X_{2}$};
\draw[] (12.5,0) node { $Y_{1}$};
\draw[] (12.5,-2) node { $Y_{2}$};
\draw[line width=0.3mm] (10,0) circle (10pt);
\draw[line width=0.3mm] (10,-2) circle (10pt);
\draw[line width=0.3mm] (12.5,0) circle (10pt);
\draw[line width=0.3mm] (12.5,-2) circle (10pt);

\draw[line width=0.5mm] (5,0) node {$X_{1}$};
\draw[->,line width=0.5mm] (5.35,0) coordinate (b_1)--(7.133,0) coordinate (b_2);
\draw[->,line width=0.5mm] (5.35,-2)--(7.133,-2);
\draw[line width=0.3mm] (6.5,1) circle (10pt);
%\draw[line width=0.3mm] (-2,1) circle (10pt);
\draw[->,>=stealth,color =blue!50!black, dotted,line width=0.5mm] (6.75, 0.75) --(7.4,0.33);
\draw[->,color =blue!50!black,dotted,line width=0.5mm] (6.5, 0.65) coordinate (a_1) --(7.4,-1.65) coordinate (a_2);
\coordinate (k) at (intersection of b_1--b_2 and a_1--a_2);
\fill[white] (k) circle (2pt);
\draw[thick, line width=0.5mm] (5,-0.35) coordinate (g_1) --(5,-1.65) coordinate (g_2);
\draw[->, line width = 0.5mm, dashed] (5.25, -1.75)--(7.25, -0.25);
\draw[->, line width = 0.5mm, dashed] (5.25, -0.25)--(7.25, -1.75);
\fill[white] (6.95, -0.48) circle (2pt);
\draw[->,>=stealth,color = blue!50!black, dotted, line width=0.5mm] (6.5, 0.65) coordinate (a_1) --(7.4,-1.65) coordinate (a_2);
\draw[line width=0.5mm] (6.5,1) node { $W$};
\draw[line width=0.5mm] (5,-2) node {$X_{2}$};
\draw[line width=0.5mm] (7.5,0) node { $Y_{1}$};
\draw[line width=0.5mm] (7.5,-2) node { $Y_{2}$};
\draw[line width=0.3mm] (5,0) circle (10pt);
\draw[line width=0.3mm] (5,-2) circle (10pt);
\draw[line width=0.3mm] (7.5,0) circle (10pt);
\draw[line width=0.3mm] (7.5,-2) circle (10pt);
%\draw[line width=0.3mm] (2.5,1) circle (10pt);
\end{tikzpicture} }
\caption{Interior and exterior influences on $Y$ in the full and the split system corresponding to $\Phi_{I}$.} \label{interiorexteriorInfluence}
\end{figure}
On the other hand, we know that if the $X_{i}$s are correlated, then the correlation is passed to the $Y_{i}$s via the solid and dashed arrows. {The edges created by calculating the marginal distribution on $Y$ also contain these correlations.} The question now is, how much of these correlations {integrate information in the system} and should therefore be measured. Kanwal et al.~discuss this problem in Reference \cite{Comparison}. They distinguish between intrinsic and extrinsic influences that cause the connections between the $Y_{i}$s in the way displayed in Figure \ref{interiorexteriorInfluence}. By calculating the split system for $\Phi_{I}$ the edge between the $Y_{i}$s might compensate for the solid arrows and common exterior influences, but also for the dashed, causal cross-connections, as shown in Figure \ref{interiorexteriorInfluence} on the right. Kanwal et al.~analyze an example of a full system without a common exterior influence with the result that there are cases in which a measure that only removes the causal cross-connections has a larger value than $\Phi_{I}$. This is only possible if the undirected edge between the $Y_{i}$s compensates a part of the causal cross-connections. Hence $\Phi_{I}$ does not measure all the intrinsic causal cross-influences. Therefore Kanwal et al.~question the use of the mutual information as an upper bound.

Then again, we would like to contribute a different perspective.
Admitting to Property \ref{postulate2} does not necessarily mean that the connections between the $Y_{i}$s are fixed. It may merely mean that $\mathcal{M}_{I}$ is a subset of the set of split distributions. 
We will see that the measures $\Phi_{CIS}$ and $\Phi_{CII}$ do satisfy Property \ref{postulate2} in this way. Although the argument that $\Phi_{I}$ measures all the intrinsic influences is no longer valid, satisfying Property \ref{postulate2} is still desirable in general. Consider an initial system with the distribution $\tilde{P}(z) = \tilde{P}(x) \tilde{P}(y), \, \forall z \in \mathcal{Z}$. This system has a common exterior influence on the $Y_{i}$s and no connection between the different points in time. Since there is no information flow between the points in time, a measure for Integrated Information $\Phi_{\mathcal{M}}$ should be zero for all distributions of this form. This is the case exactly when $\mathcal{M}_{I} \subseteq \mathcal{M}$, hence when $\Phi_{I}$ is an upper bound for $\Phi_{\mathcal{M}}$. In order to emphasize this point we propose a modified version of Property \ref{postulate2}.

\begin{Property}\label{postulate2neu}
The set $\mathcal{M}_{I}$ should be a subset of the split model $\mathcal{M}$ corresponding to the Integrated Information measure $\Phi_{\mathcal{M}}$. Then the inequality
\begin{equation*}
\Phi_{\mathcal{M}} = \inf\limits_{Q \in \mathcal{M}} D_{\mathcal{Z}}( \tilde{P} \mid Q) \leq I(X;Y)
\end{equation*}
holds.
\end{Property}

Note that the new formulation is stronger, hence Property \ref{postulate2} is a consequence of Property \ref{postulate2neu}. Every measure discussed here that satisfies Property \ref{postulate2} also fulfills Property \ref{postulate2neu}. Therefore we will keep referring to Property \ref{postulate2} in the following sections.

Figure \ref{overall} displays an overview over the different measures and whether they satisfy Properties \ref{postulate1} and \ref{postulate2}. 
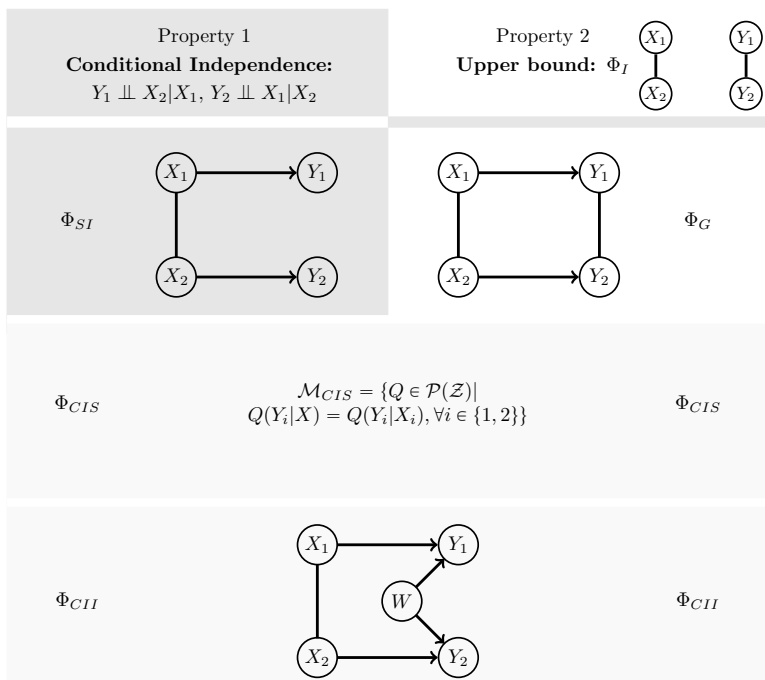
\begin{figure}[H]
\centering
\scalebox{.75}{
\begin{tikzpicture}
%\draw[fill = gray!20, gray!20]  (-5.5,-4.25) rectangle (1.25,-6.2);
\draw[fill = gray!20, gray!20]  (-5.5,-4.25) rectangle (1.25,-10);
\draw[fill = gray!5, gray!5]  (-5.5,-9.75) rectangle (8,-16.25);
\draw[] (-2, -4.75) node {Property \ref{postulate1}};
\draw[] (4, -4.75) node {Property \ref{postulate2}};
\draw[] (4, -5.25) node {\textbf{Upper bound: }$\Phi_{I}$};
\draw[] ( -2, -5.25) node[ align=center] {\textbf{Conditional Independence: }};
\draw[] (6,-4.75) node {\small $X_{1}$};
\draw[] (6.75, -8) node{$\Phi_{G}$};
\draw[] (-4.25, -8) node{$\Phi_{SI}$};
\draw[] (-4.25, -11.25) node{$\Phi_{CIS}$};
\draw[] (-4.25, -14.75) node{$\Phi_{CII}$};
\draw[] (6.75, -11.25) node{$\Phi_{CIS}$};
\draw[] (6.75, -14.75) node{$\Phi_{CII}$};
\draw[line width=0.5mm] (6,-5.45)--(6,-5.05);
\draw[line width=0.5mm] (7.6, -5.45)--(7.6,-5.05);
\draw[] (6,-5.75) node {\small $X_{2}$};
\draw[] (7.6,-4.75) node {\small $Y_{1}$};
\draw[] (7.6,-5.75) node {\small $Y_{2}$};
\draw[line width=0.3mm] (6,-4.75) circle (8pt);
\draw[line width=0.3mm] (6,-5.75) circle (8pt);
\draw[line width=0.3mm] (7.6,-4.75) circle (8pt);
\draw[line width=0.3mm] (7.6,-5.75) circle (8pt);

\draw[] (-2, -5.75) node {$Y_{1} \ci X_{2} \vert X_{1},$ $Y_{2} \ci X_{1} \vert X_{2}$};
\draw[] (-2.5,-7.15) node {$X_{1}$};
\draw[line width=0.5mm] (-2.5,-7.5)--(-2.5,-8.65);
%\draw[line width=0.5mm] (0, -7.35)--(0,-8.65);
\draw[] (-2.5,-9) node {$X_{2}$};
\draw[] (0,-7.15) node { $Y_{1}$};
\draw[] (0,-9) node { $Y_{2}$};
\draw[line width=0.3mm] (-2.5,-7.15) circle (10pt);
\draw[line width=0.3mm] (-2.5,-9) circle (10pt);
\draw[line width=0.3mm] (0,-7.15) circle (10pt);
\draw[line width=0.3mm] (0,-9) circle (10pt);
\draw[->,line width=0.5mm] (-2.15,-7.15)--(-0.366,-7.15);
\draw[->,line width=0.5mm] (-2.15,-9)--(-0.366,-9);

\draw[] (2.5,-7.15) node {$X_{1}$};
\draw[line width=0.5mm] (2.5,-7.5)--(2.5,-8.65);
\draw[line width=0.5mm] (5,-7.5)--(5,-8.65);
\draw[] (2.5,-9) node {$X_{2}$};
\draw[] (5,-7.15) node { $Y_{1}$};
\draw[] (5,-9) node { $Y_{2}$};
\draw[line width=0.3mm] (2.5,-7.15) circle (10pt);
\draw[line width=0.3mm] (2.5,-9) circle (10pt);
\draw[line width=0.3mm] (5,-7.15) circle (10pt);
\draw[line width=0.3mm] (5,-9) circle (10pt);
\draw[->,line width=0.5mm] (2.85,-7.15)--(4.633,-7.15);
\draw[->,line width=0.5mm] (2.85,-9)--(4.633,-9);

\draw[] (1.25, -11.25) node[align = center] {$\mathcal{M}_{CIS} = \{ Q \in \mathcal{P}( \mathcal{Z}) \vert $ \\ $ Q(Y_{i} \vert X) = Q(Y_{i} \vert X_{i} ), \forall i \in \{1,2\} \}$};

\draw[->,line width=0.5mm] (0.35,-13.75)--(2.133,-13.75);
\draw[->,line width=0.5mm] (0.35,-15.75)--(2.133,-15.75);
\draw[->,line width=0.5mm] (1.75,-14.5)--(2.25,-14);
\draw[->,line width=0.5mm] (1.75,-15)--(2.25,-15.5);
\draw[ white, line width=2mm] (-5.5, -6.25) -- (6.25,-6.25);
\draw[ gray!20, line width=2mm] (1.25, -6.25) -- (8,-6.25);
\draw[ white, line width=1.5mm] (-5.5, -9.75) -- (8,-9.75);
\draw[ white, line width=1.5mm ] (-5.5, -13) -- (8,-13);
\draw[line width=0.5mm] (0,-14.1)--(0,-15.41);
\draw[] (0,-15.75) node {$X_{2}$};
\draw[] (0,-13.75) node {$X_{1}$};
\draw[] (2.5,-13.75) node { $Y_{1}$};
\draw[] (2.5,-15.75) node { $Y_{2}$};
%\draw[line width=0.3mm, fill = white] (1.5,-15) circle (10pt);
\draw[] (1.5, -14.75) node{ $W$ };
\draw[line width=0.3mm] (0,-13.75) circle (10pt);
\draw[line width=0.3mm] (0,-15.75) circle (10pt);
\draw[line width=0.3mm] (2.5,-13.75) circle (10pt);
\draw[line width=0.3mm] (2.5,-15.75) circle (10pt);
\draw[line width=0.3mm] (1.5,-14.75) circle (10pt);
\end{tikzpicture}
}
\caption{The different measures and their properties in the case of $n = 2$.} \label{overall}
\end{figure} 
The first complexity measure that we are discussing does not fulfill Property \ref{postulate2}. It is  called Stochastic Interaction and was introduced by Ay in Reference \cite{stochIntPreprint} in 2001, later published in Reference \cite{stochInt}. Barrett and Seth discuss it in Reference \cite{Barrett} in the context of Integrated Information. In Reference \cite{GeomInfInt} the corresponding model is called  "fully split model". 

The core idea is to allow only the connections among the random variables in $t$ and additionally the connections between $X_{i}$ and $Y_{i}$, meaning the same random variable in different points in time. The last ones correspond to the solid arrows in Figure \ref{interactInt}. A graphical representation for $n=2$ can be found in the first column of Figure \ref{overall}.

\begin{Definition}[Stochastic Interaction]
The set of distributions belonging to the split model in the sense of Stochastic Interaction can be defined as 
\begin{equation*}
\mathcal{M}_{SI}=\left\lbrace Q \in \mathcal{P}(\mathcal{Z}) \mid Q(Y \mid X) = \bigotimes\limits_{i=1}^{n} Q(Y_{i} \mid X_{i}) \right\rbrace 
\end{equation*}
and the complexity measure can be calculated as follows
\begin{equation*}
\Phi_{SI}=\inf\limits_{Q \in M_{SI}} D_{\mathcal{Z}}(\tilde{P} \parallel Q)= \sum\limits_{i=1}^{n} H(Y_{i} \mid X_{i}) - H(Y \mid X),
\end{equation*}
as shown in Reference \cite{stochInt}. In the definition above, $H$ denotes the conditional entropy 
\begin{equation*}
H(Y_{i} \mid X_{i}) = - \sum\limits_{x_{i} \in \mathcal{X}_{i}} \sum\limits_{y_{i} \in \mathcal{Y}_{i}} \tilde{P}(x_{i},y_{i}) \, log \, \tilde{P}(y_{i} \vert x_{i}).
\end{equation*}
\end{Definition}
This does not satisfy Property \ref{postulate2} and therefore the corresponding graph is displayed only in the first column of Figure \ref{overall}.  {Amari points out in Reference \cite{amari} that this measure is not applicable in the case of an exterior influences on the $Y_{i}$s. Such an influence can cause the $Y_{i}$s to be correlated even in the case of independent $X_{i}$s and no causal cross-connections.} 

Consider a setting without exterior influences, then $\Phi_{SI}$ quantifies the strength of the causal cross-connections alone and is therefore a reasonable choice for an Integrated Information measure. Accounting for an exterior influence that does not exist leads to a split system, which compensates a part of the removal of the causal cross-connections so that the resulting measure does not quantify all of the interior causal cross-influences. 

To force the model to satisfy Property \ref{postulate2}, 
one can add the interaction between $Y_{i}$ and $Y_{j}$, which results in the measure Geometric Integrated Information \cite{amari}.  
\begin{Definition}[Geometric Integrated Information]
The graphical model corresponding to the graph in the second row and first column of Figure \ref{overall}
is the set 
\begin{equation*}
\mathcal{M}_{G}= \left\lbrace P \in \mathcal{P}(\mathcal{Z}) \vert \exists f_{1},\dots ,f_{n +2} \in \mathbb{R}_{+}^{\mathcal{Z}} \text{ s.t. } P(z) = f_{n+ 1}(x)f_{n+2}(y) \prod\limits_{i = 1}^{n} f_{i}(x_{i},y_{i}) \right\rbrace
\end{equation*} 
and the measure is defined as 
\begin{equation*}
\Phi_{G} = \inf\limits_{Q \in \mathcal{M}_{G}} D_{\mathcal{Z}}( \tilde{P} \parallel Q).
\end{equation*}
\end{Definition}
$\mathcal{M}_{G}$ is called the diagonally split model in Reference \cite{GeomInfInt}.
This is not causally split in the sense that the corresponding distributions in general do not satisfy Property \ref{postulate1}. It can be seen by analyzing the conditional independence structure of the graph as described in Appendix \ref{AppGraph}. By introducing the edges between the $Y_{i}$s as fixed, $\Phi_{G}$ might force these connections to be stronger than they originally are. A result of this might be that an effect of the causal cross-connections gets atoned for by the new edge. We discussed this above in the context of Property \ref{postulate2}. 

This measure has no closed form solution, but we are able to calculate the corresponding split system with the help of the iterative scaling algorithm (see, for example, Section 5.1 in Reference \cite{CsiszarShields}).

The first measure that satifies both properties is called "Integrated Information" \cite{uniframe}, its model is referred to by "Causally split model" in Reference \cite{GeomInfInt} and it is derived from the first property. Since we are able to define it using conditional independence statements, we will denote it by $\Phi_{CIS}$.  
It requires $Y_{i}$ to be independent of $X_{I \setminus \{i\}}$ given $X_{i}$. 
\begin{Definition}[Integrated Information]
The set of distributions, that belongs to the split system corresponding to integrated information, is defined as
\begin{equation}
\mathcal{M}_{CIS} = \left\lbrace Q \in \mathcal{P}(\mathcal{Z}) \mid Q(Y_{i} \mid X) = Q(Y_{i} \mid X_{i}) , \, \text{for all } i \in \{1, \dots , n\} \right\rbrace \label{Mg}
\end{equation}
and this leads to the measure
\begin{equation*}
\Phi_{CIS} = \inf\limits_{Q \in \mathcal{M}_{CIS}} D_{\mathcal{Z}}(\tilde{P} \parallel Q). 
\end{equation*}
\end{Definition}
We write the requirements to the distributions in (\ref{Mg}) as conditional independent statements
\begin{equation*}
Y_{i} \ci X_{I \setminus \{i\}} \mid X_{i}.
\end{equation*}
A detailed analysis of probabilistic independence statements can be found in Reference \cite{Milan}. 
Unfortunately, these conditional independence statements can not be encoded in terms of a chain graph in general. The definition of this measure arises naturally from Property \ref{postulate1} by applying the relation (\ref{MarkovPop}) 
\begin{equation*}
Q(X_{i},Y_{j} \mid X_{I \setminus \{i\}}) = Q(X_{i}\mid X_{I \setminus \{i\}})Q(Y_{j}\mid X_{I \setminus \{i\}}), \quad \, i \neq j 
\end{equation*}
to all pairs $i,j \in \{1, \dots , n \}$. This leads to 
\begin{equation}
Q(Y_{j} \vert X) = Q(Y_{j} \vert X_{j}) \label{CIS},
\end{equation}
as shown in Appendix \ref{AppProof}.

Note that this implies that every model satisfying Property \ref{postulate1} is a submodel of $\mathcal{M}_{CIS}$.
In order to show that $\Phi_{CIS}$ satisfies Property \ref{postulate1}, we are going to rewrite the condition in Property \ref{postulate1} as
\begin{equation*}
Q(Y_{j} \vert X) = Q(Y_{j} \vert X_{I \setminus \{i\}}).
\end{equation*}
The definition of $\mathcal{M}_{CIS}$ allows us to write 
\begin{equation*}
Q(Y_{j} \vert X) = Q(Y_{j} \vert X_{j}) = Q(Y_{j} \vert X_{I \setminus \{i\}}),
\end{equation*}
for $Q \in \mathcal{M}_{CIS}$.
Therefore $\Phi_{CIS}$ satisfies Property \ref{postulate1} and since $\mathcal{M}_{I}$ meets the conditional independence statements of Property \ref{postulate1} the relation $\mathcal{M}_{I} \subseteq \mathcal{M}_{CIS}$ holds and $\Phi_{CIS}$ fulfills Property \ref{postulate2}. 

In Reference \cite{uniframe} Oizumi et al.~derive an analytical solution for Gaussian variables, but there does not exist a closed form solution for discrete variables in general. Therefore they use Newton's method in the case of discrete variables. 

Due to the lack of a graphical representation, it is difficult to interpret the causal nature of the elements of $\mathcal{M}_{CIS}$. In Example \ref{ex} we will see a type of model that is part of $\mathcal{M}_{CIS}$, but which {has a graphical representation.  This model does not lie in the set of Markovian processes discussed in this article $MP(\mathcal{Z})$. Hence this implies that not all the split distributions in $\mathcal{M}_{CIS}$ arise from removing connections from a full distribution as depicted in Figure \ref{interactInt}. } 

%%%%%%%%%%%%%%%%%%%%%%%%%%%%%%%%%%%%%%%%%%
\section{Causal Information Integration} \label{SectCII}
Inspired by the discussion about extrinsic and intrinsic influences in the context of Property \ref{postulate2}, we now utilize the notion of a common exterior influence to define the measure $\Phi_{CII}$, which we call Causal Information Integration. This measure should be used in case of an unknown exterior influence.  

\subsection{Definition} 
Explicitly including a common exterior influence allows us to avoid the problems of a fixed edge between the $Y_{i}$s discussed earlier. 
This leads to the graphs in Figure \ref{newapproach}.
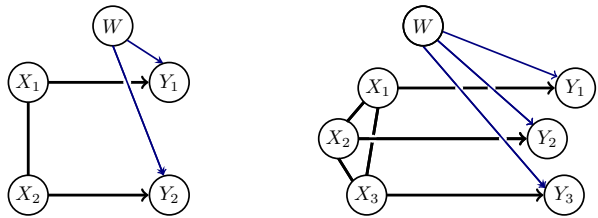
\begin{figure}[H]
\centering
\scalebox{0.75}{
\begin{tikzpicture}[rounded corners]
\draw[line width=0.5mm] (-1,0) node {$X_{1}$};
\draw[->,line width=0.5mm] (-0.65,0) coordinate (b_1)--(1.133,0) coordinate (b_2);
\draw[->,line width=0.5mm] (-0.65,-2)--(1.133,-2);
\draw[line width=0.3mm] (6,1) circle (10pt);
\draw[line width=0.3mm] (0.5,1) circle (10pt);
%\draw[line width=0.3mm] (-2,1) circle (10pt);
\draw[->,>=stealth,color =blue!50!black,line width=0.3mm] (0.75, 0.75) --(1.4,0.33);
\draw[->,color =blue!50!black,line width=0.3mm] (0.5, 0.65) coordinate (a_1) --(1.4,-1.65) coordinate (a_2);
\coordinate (k) at (intersection of b_1--b_2 and a_1--a_2);
\fill[white] (k) circle (2pt);
\draw[thick, line width=0.5mm] (5.2,-0.45) coordinate (g_1) --(5,-1.65) coordinate (g_2);
\draw[thick, line width=0.5mm] (4.95,-0.35)--(4.65,-0.7);
\draw[thick,line width=0.5mm] (4.75,-1.75)--(4.5,-1.35);
\draw[line width=0.5mm] (-1,-0.35)--(-1,-1.65);
\draw[->,>=stealth,color = blue!50!black,line width=0.3mm] (0.5, 0.65) coordinate (a_1) --(1.4,-1.65) coordinate (a_2);
\draw[line width=0.5mm] (0.5,1) node { $W$};
\draw[line width=0.5mm] (-1,-2) node {$X_{2}$};
\draw[line width=0.5mm] (1.5,0) node { $Y_{1}$};
\draw[line width=0.5mm] (1.5,-2) node { $Y_{2}$};
\draw[line width=0.3mm] (-1,0) circle (10pt);
\draw[line width=0.3mm] (-1,-2) circle (10pt);
\draw[line width=0.3mm] (1.5,0) circle (10pt);
\draw[line width=0.3mm] (1.5,-2) circle (10pt);
%\draw[line width=0.3mm] (2.5,1) circle (10pt);

\draw[]  (5.2,-0.1) node (A) {$X_{1}$};
\draw[] (5,-2) node (B) {$X_{3}$};
\draw[] (4.5,-1) node (C) { $X_{2}$};
\draw[] (8.7,-0.1) node (D) {$Y_{1}$};
\draw[] (8.5,-2) node (E){$Y_{3}$};
\draw[] (8.2,-1) node (F) {$Y_{2}$};
\draw[] (6,1) node {$W$};
\draw[->,color =blue!50!black,line width=0.3mm] (6.25, 0.75) coordinate (a_1)--(7.95,-0.75) coordinate (a_2);
\draw[->,color = blue!50!black,line width=0.3mm] (6, 0.64) coordinate (b_1)--(8.166,-1.875) coordinate (b_2);
\draw[->,thick, line width=0.5mm] (5.55,-0.1) coordinate (c_1)--(8.333,-0.1) coordinate (c_2);
\draw[->,line width=0.5mm] (5.35,-2)--(8.133,-2);
\draw[->,line width=0.5mm] (4.85,-1) coordinate (h_1)--(7.833,-1) coordinate (h_2);
\draw[thick, line width=0.5mm] (5.2,-0.45) coordinate (g_1) --(5,-1.65) coordinate (g_2);
\draw[thick, line width=0.5mm] (4.95,-0.35)--(4.65,-0.7);
\draw[thick,line width=0.5mm] (4.75,-1.75)--(4.5,-1.35);
\coordinate (k) at (intersection of h_1--h_2 and g_1--g_2);
\fill[white] (k) circle (2pt);
\coordinate (j) at (intersection of a_1--a_2 and c_1--c_2);
\fill[white] (j) circle (2pt);
\coordinate (i) at (intersection of b_1--b_2 and c_1--c_2);
\fill[white] (i) circle (2pt);
\draw[->,line width=0.5mm] (4.85,-1) coordinate (h_1)--(7.833,-1) coordinate (h_2);
\coordinate (l) at (intersection of b_1--b_2 and h_1--h_2);
\fill[white] (l) circle (2pt);
\draw[->,color =blue!50!black,line width=0.3mm] (6.35, 0.9)--(8.38,0.1);
\draw[->,color = blue!50!black,line width=0.3mm] (6.25, 0.75) coordinate (a_1)--(7.95,-0.75) coordinate (a_2);
\draw[->,>=stealth,color = blue!50!black,line width=0.3mm] (6, 0.64) coordinate (b_1)--(8.166,-1.875) coordinate (b_2);

\draw[line width=0.3mm] (6,1) circle (10pt);
\draw[line width=0.3mm] (4.5,-1) circle (10pt);
\draw[line width=0.3mm] (5.2,-0.1) circle (10pt);
\draw[line width=0.3mm] (5,-2) circle (10pt);
\draw[line width=0.3mm] (8.7,-0.1) circle (10pt);
\draw[line width=0.3mm] (8.5,-2) circle (10pt);
\draw[line width=0.3mm] (8.2,-1) circle (10pt);
\end{tikzpicture} }
\caption{Split systems with exterior influences for $n=2$ and $n=3$. } \label{newapproach}
\end{figure}
The factorization of the distributions belonging to these graphical models is the following one
\begin{equation*}
P(z,w) = P(x) \prod\limits_{i=1}^{n}P(y_{i} \vert x_{i}, w)P(w).
\end{equation*} 
By marginalizing over the elements of $\mathcal{W}$ we get a distribution on $\mathcal{Z}$ defining our new model.

\begin{Definition}[Causal Information Integration] The set of distributions belonging to the marginalized model for $\vert \mathcal{W}^{m} \vert = m$ is 
\begin{equation*}
\mathcal{M}_{CII}^{m} = \left\lbrace P \in \mathcal{P}(\mathcal{Z}) \vert \exists Q \in \mathcal{P}(\mathcal{Z} \times \mathcal{W}^{m}) : P(z) = \sum\limits_{j = 1}^{m} Q(x) Q(w_{j}) \prod\limits_{i=1}^{n} Q(y_{i} \vert x_{i}, w_{j}) \right\rbrace. 
\end{equation*}
We will define the split model for Causal Integrated Information as the closure (denoted by a bar) of the union of $\mathcal{M}_{CII}^{m}$s:
\begin{equation}
\mathcal{M}_{CII} = \overline{\bigcup\limits_{m \in \mathbb{N}} \mathcal{M}_{CII}^{m}}. \label{setMCII}
\end{equation}
This leads to the measure
\begin{equation*}
\Phi_{CII} = \inf\limits_{Q \in \mathcal{M}_{CII}} D_{\mathcal{Z}}(\tilde{P} \parallel Q). 
\end{equation*}
\end{Definition}
{Since the split system $\mathcal{M}_{CII}$ was defined by utilizing graphs, we are able to use the graphical representation to get a more precise notion of the cases in which $\Phi_{CII}(\tilde{P})=0$ holds. In those cases the initial distribution can be completely explained as a limit of marginalized distributions without causal cross-influences and with exterior influences.
\begin{Proposition} \label{IfandOnlyIf}
The measure $\Phi_{CII}(\tilde{P})$ is 0 if and only if there exists a sequence of distributions $Q^{m} \in \mathcal{P}(\mathcal{Z})$ with the following properties.
\begin{itemize}
\item[1.] $ \tilde{P} = \lim\limits_{m \rightarrow \infty} Q^{m}.$
\item[2.] For every $m \in \mathbb{N}$ there exists a distribution $\hat{Q}^{m} \in \mathcal{P}(\mathcal{Z} \times \mathcal{W}^{m})$ that has $\mathcal{Z}$ marginals equal to $Q^{m}$
\begin{equation*}
Q^{m}(z) = \hat{Q}^{m}(z), \quad \forall z \in \mathcal{Z}.
\end{equation*}
Additionally  $\hat{Q}^{m}$ factors according to the graph corresponding to the split system
\begin{equation*}
\hat{Q}^{m}(z,w) = \hat{Q}(x)^{m} \prod\limits_{i=1}^{n} \hat{Q}^{m}(y_{i} \vert x_{i}, w)\hat{Q}^{m}(w), \quad \forall (z,w) \in \mathcal{Z} \times \mathcal{W}^{m}.
\end{equation*}
\end{itemize}
\end{Proposition}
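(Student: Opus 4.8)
The plan is to reduce the statement to the defining equation (\ref{setMCII}) for $\mathcal{M}_{CII}$ together with two elementary facts about the Kullback--Leibler divergence on the finite space $\mathcal{Z}$: that $D_{\mathcal{Z}}(\tilde{P}\parallel Q)\ge 0$ with equality if and only if $\tilde{P}=Q$ (Gibbs' inequality), and that $D_{\mathcal{Z}}(\tilde{P}\parallel\cdot)$ is lower semicontinuous --- in fact $D_{\mathcal{Z}}(\tilde{P}\parallel Q_{k})\to 0$ forces $Q_{k}\to\tilde{P}$, e.g.\ by Pinsker's inequality. The key geometric observation is that each $\mathcal{M}_{CII}^{m}$ is the image of the compact parameter set $\Delta(\mathcal{X})\times\Delta(\mathcal{W}^{m})\times\prod_{i=1}^{n}\Delta(\mathcal{Y}_{i})^{\mathcal{X}_{i}\times\mathcal{W}^{m}}$ under the continuous (polynomial) map $(Q(x),Q(w_{j}),Q(y_{i}\mid x_{i},w_{j}))\mapsto\sum_{j=1}^{m}Q(x)Q(w_{j})\prod_{i=1}^{n}Q(y_{i}\mid x_{i},w_{j})$, hence $\mathcal{M}_{CII}^{m}$ is compact, in particular closed. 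Moreover $\mathcal{M}_{CII}^{m}\subseteq\mathcal{M}_{CII}^{m+1}$: given a realising triple for $P\in\mathcal{M}_{CII}^{m}$, adjoin one extra latent state of probability $0$ (extending the conditionals arbitrarily) to obtain a realising triple on $\mathcal{W}^{m+1}$. Thus $\bigcup_{m}\mathcal{M}_{CII}^{m}$ is an increasing union of compact sets and, by (\ref{setMCII}), $\mathcal{M}_{CII}$ is its closure.

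For the direction ``$\Leftarrow$'', I would assume a sequence $Q^{m}$ with properties 1 and 2 is given. Property 2 says precisely that $Q^{m}\in\mathcal{M}_{CII}^{m}\subseteq\mathcal{M}_{CII}$ for every $m$, and property 1 together with closedness of $\mathcal{M}_{CII}$ yields $\tilde{P}\in\mathcal{M}_{CII}$. Hence $\Phi_{CII}(\tilde{P})=\inf_{Q\in\mathcal{M}_{CII}}D_{\mathcal{Z}}(\tilde{P}\parallel Q)\le D_{\mathcal{Z}}(\tilde{P}\parallel\tilde{P})=0$, and since the divergence is nonnegative, $\Phi_{CII}(\tilde{P})=0$.

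For the direction ``$\Rightarrow$'', I would assume $\Phi_{CII}(\tilde{P})=0$ and choose $R^{(k)}\in\mathcal{M}_{CII}$ with $D_{\mathcal{Z}}(\tilde{P}\parallel R^{(k)})\to 0$; then $R^{(k)}\to\tilde{P}$, so $\tilde{P}\in\overline{\mathcal{M}_{CII}}=\mathcal{M}_{CII}$ (equivalently, the infimum defining $\Phi_{CII}$ is attained on the compact set $\mathcal{M}_{CII}$ at a minimiser $Q^{\star}$ with $D_{\mathcal{Z}}(\tilde{P}\parallel Q^{\star})=0$, forcing $Q^{\star}=\tilde{P}\in\mathcal{M}_{CII}$). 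Set $d_{m}:=\min_{Q\in\mathcal{M}_{CII}^{m}}\|\tilde{P}-Q\|$, which exists by compactness; since the sets $\mathcal{M}_{CII}^{m}$ increase, $(d_{m})_{m}$ is non-increasing, and $\inf_{m}d_{m}=\operatorname{dist}\!\big(\tilde{P},\,\overline{\bigcup_{m}\mathcal{M}_{CII}^{m}}\big)=\operatorname{dist}(\tilde{P},\mathcal{M}_{CII})=0$, so $d_{m}\to 0$. Picking $Q^{m}\in\mathcal{M}_{CII}^{m}$ attaining $d_{m}$ gives $Q^{m}\to\tilde{P}$ (property 1), and membership $Q^{m}\in\mathcal{M}_{CII}^{m}$ is, by the definition of $\mathcal{M}_{CII}^{m}$, exactly the existence of a distribution $\hat{Q}^{m}\in\mathcal{P}(\mathcal{Z}\times\mathcal{W}^{m})$ of the stated product form whose $\mathcal{Z}$-marginal equals $Q^{m}$ (property 2).

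I expect the only genuinely delicate point to be the bookkeeping around the closure: verifying that $\mathcal{M}_{CII}^{m}$ is closed (so that no further limit has to be taken at each fixed $m$) and that the union is increasing, so that the ``for every $m$'' quantifier in property 2 can be met with the latent alphabet of exactly $m$ states used in the statement. If one instead keeps the ambient restriction to strictly positive distributions for the auxiliary $Q$, the same argument goes through with $\mathcal{M}_{CII}^{m}$ replaced by its closure and the padding latent state given probability $\varepsilon\to 0$; everything else is unchanged.
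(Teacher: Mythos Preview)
Your argument is correct and follows the same route as the paper's proof: both directions reduce to the equivalence $\Phi_{CII}(\tilde{P})=0\iff\tilde{P}\in\mathcal{M}_{CII}$, with the forward direction using compactness so that the infimum is attained, and the backward direction using that $\mathcal{M}_{CII}$ is a closure. You are in fact more careful than the paper in one place: the paper simply asserts that ``the existence of a sequence $Q^{m}$ follows from the definition of $\mathcal{M}_{CII}$'', whereas you explicitly verify the monotonicity $\mathcal{M}_{CII}^{m}\subseteq\mathcal{M}_{CII}^{m+1}$ (via padding with a null latent state) needed to arrange $Q^{m}\in\mathcal{M}_{CII}^{m}$ for \emph{every} $m$, not merely for a subsequence of alphabet sizes.
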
  
}

In order to show that $\Phi_{CII}$ satisfies the conditional independence statements in Property \ref{postulate1}, we will calculate the conditional distributions $P(y_{i} \vert x_{i})$ and $P(y_{i} \vert x)$ of 
\begin{equation*}
P(z) = \sum\limits_{w} P(x)\prod\limits_{j=1}^{n}P(y_{j} \vert x_{j}, w)P(w).  
\end{equation*}
This results in
\begin{equation*}
\begin{split}
P(y_{i} \vert x_{i}) &=  \dfrac{\sum\limits_{y_{I \setminus \{i\}}}\sum\limits_{x_{I \setminus \{i\}}} \sum\limits_{w}P(x)\prod\limits_{i=j}^{n}P(y_{j} \vert x_{j}, w)P(w) }{P(x_{i})} =  \dfrac{\sum\limits_{x_{I \setminus \{i\}}} \sum\limits_{w}P(x)P(y_{i} \vert x_{i}, w)P(w) }{P(x_{i})} =  \sum\limits_{w} P(y_{i} \vert x_{i}, w)P(w)  \\  
P(y_{i} \vert x) &= \dfrac{\sum\limits_{y_{I \setminus \{i\}}}\sum\limits_{w}P(x)\prod\limits_{i=j}^{n}P(y_{j} \vert x_{j}, w)P(w) }{P(x)} =  \sum\limits_{w} P(y_{i} \vert x_{i}, w)P(w)  \\ 
\end{split} 
\end{equation*}
for all $z \in \mathcal{Z}$. Hence $P(y_{i} \vert x_{i}) = P(y_{i} \vert x)$, for every $P \in \mathcal{M}_{CII}^{m}, \, m \in \mathbb{N}$. Since every element in $\hat{P} \in \mathcal{M}_{CII}$ is a limit point of distributions that satisfy the conditional independence statements, $\hat{P}$ also fulfills those. A proof can be found in Reference \cite{graphModels} Proposition 3.12. Therefore $\Phi_{CII}$ satisfies Property \ref{postulate1} and the set of all such distributions is a subset of $\mathcal{M}_{CIS}$
\begin{equation*}
\mathcal{M}_{CII}  \subseteq \mathcal{M}_{CIS}.
\end{equation*}

We are able to represent the marginalized model by using the methods from Reference \cite{Marginal}. Up to this point we have been using chain graphs. These are graphs consisting of directed and undirected edges such that there are no semi-directed cycles as described in Appendix \ref{AppGraph}. In order to be able to gain a graph that represents the conditional independence structure of the marginalized model, we need the concept of chain mixed graphs (CMGs). In addition to the directed and undirected edges belonging to chain graphs, chain mixed graphs also have arcs $\leftrightarrow$. Two nodes connected by an arc are called spouses. The connection between spouses appears when we marginalize over a common influence, hence spouses do not have a directed information flow from one node to the other but are affected by the same mechanisms.
The Algorithm \ref{Algorithm} from Reference \cite{Marginal}  allows us to transform a chain graph with latent variables into a chain mixed graph that represents the conditional independence structures of the marginalized chain graph. Using this on the graphs in Figure \ref{newapproach} leads to the CMGs in Figure \ref{MarginalizedModel}. Unfortunately, there exists no new factorization corresponding to the CMGs known to the authors.

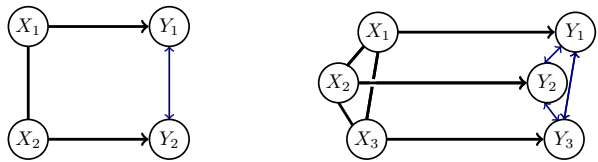
\begin{figure}[H]
\centering
\scalebox{0.75}{
\begin{tikzpicture}[rounded corners]
\draw[line width=0.5mm] (-1,0) node {$X_{1}$};
\draw[->,line width=0.5mm] (-0.65,0) coordinate (b_1)--(1.133,0) coordinate (b_2);
\draw[->,line width=0.5mm] (-0.65,-2)--(1.133,-2);
\draw[thick, line width=0.5mm] (5.2,-0.45) coordinate (g_1) --(5,-1.65) coordinate (g_2);
\draw[thick, line width=0.5mm] (4.95,-0.35)--(4.65,-0.7);
\draw[thick,line width=0.5mm] (4.75,-1.75)--(4.5,-1.35);
\draw[line width=0.5mm] (-1,-0.35)--(-1,-1.65);
\draw[line width=0.3mm,blue!50!black,<->] (1.5,-0.35)--(1.5,-1.65);
\draw[line width=0.5mm] (-1,-2) node {$X_{2}$};
\draw[line width=0.5mm] (1.5,0) node { $Y_{1}$};
\draw[line width=0.5mm] (1.5,-2) node { $Y_{2}$};
\draw[line width=0.3mm] (-1,0) circle (10pt);
\draw[line width=0.3mm] (-1,-2) circle (10pt);
\draw[line width=0.3mm] (1.5,0) circle (10pt);
\draw[line width=0.3mm] (1.5,-2) circle (10pt);

\draw[]  (5.2,-0.1) node (A) {$X_{1}$};
\draw[] (5,-2) node (B) {$X_{3}$};
\draw[] (4.5,-1) node (C) { $X_{2}$};
\draw[] (8.7,-0.1) node (D) {$Y_{1}$};
\draw[] (8.5,-2) node (E){$Y_{3}$};
\draw[] (8.2,-1) node (F) {$Y_{2}$};
\draw[->,thick, line width=0.5mm] (5.55,-0.1) coordinate (c_1)--(8.333,-0.1) coordinate (c_2);
\draw[->,line width=0.5mm] (5.35,-2)--(8.133,-2);
\draw[->,line width=0.5mm] (4.85,-1) coordinate (h_1)--(7.833,-1) coordinate (h_2);
\draw[thick, line width=0.5mm] (5.2,-0.45) coordinate (g_1) --(5,-1.65) coordinate (g_2);
\draw[line width=0.3mm, blue!50!black,<->] (8.5,-1.65) --(8.7,-0.45);
\draw[thick, line width=0.5mm] (4.95,-0.35)--(4.65,-0.7);
\draw[thick,line width=0.3mm, blue!50!black, <->]  (8.45,-0.35)--(8.15,-0.65);
\draw[thick,line width=0.3mm, blue!50!black,<->] (8.5,-1.65) --(8.7,-0.45);
\draw[thick, line width=0.3mm, blue!50!black, <->] (8.15,-1.35)--(8.4,-1.68);
\coordinate (k) at (intersection of h_1--h_2 and g_1--g_2);
\fill[white] (k) circle (2pt);
\draw[->,line width=0.5mm] (4.85,-1) coordinate (h_1)--(7.833,-1) coordinate (h_2);

\draw[line width=0.3mm] (4.5,-1) circle (10pt);
\draw[line width=0.3mm] (5.2,-0.1) circle (10pt);
\draw[line width=0.3mm] (5,-2) circle (10pt);
\draw[line width=0.3mm] (8.7,-0.1) circle (10pt);
\draw[line width=0.3mm] (8.5,-2) circle (10pt);
\draw[line width=0.3mm] (8.2,-1) circle (10pt);
\end{tikzpicture} }
\caption{Marginalized Model for $n=2$ and $n=4$. } \label{MarginalizedModel}
\end{figure}

In order to prove that $ \Phi_{CII}$ satisfies Property \ref{postulate2}, we will show that $\mathcal{M}_{I}$ is a  subset of $\mathcal{M}_{CII}$. At first we will consider the following subset of $\mathcal{M}_{CII}$
\begin{equation*}
\begin{split}
\mathcal{M}_{CI}^{m} &= \left\lbrace P \in \mathcal{P}(\mathcal{Z}) \vert \exists Q \in \mathcal{P}(\mathcal{Z} \times \mathcal{W}^{m}) : P(z) = \sum\limits_{j = 1}^{m} Q(x) Q(w_{j})\prod\limits_{i=1}^{n} Q(y_{i} \vert w_{j})   \right\rbrace \\
\mathcal{M}_{CI} &= \overline{\bigcup\limits_{m \in \mathbb{N}} \mathcal{M}_{CI}^{m}},
\end{split}
\end{equation*} where we remove the connections between the different stages, as shown in Figure \ref{I}. 

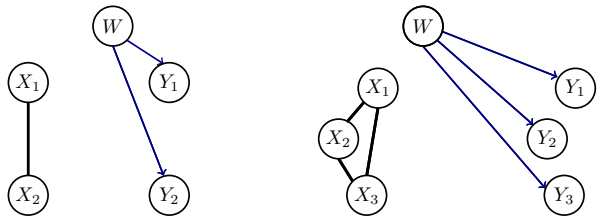
\begin{figure}[H]
\centering
\scalebox{0.75}{
\begin{tikzpicture}[rounded corners]
\draw[line width=0.5mm] (-1,0) node {$X_{1}$};
\draw[line width=0.3mm] (6,1) circle (10pt);
\draw[line width=0.3mm] (0.5,1) circle (10pt);
\draw[->,color = blue!50!black,line width=0.3mm] (0.75, 0.75) --(1.4,0.33);
\draw[->,color = blue!50!black,line width=0.3mm] (0.5, 0.65) coordinate (a_1) --(1.4,-1.65) coordinate (a_2);
\draw[thick, line width=0.5mm] (5.2,-0.45) coordinate (g_1) --(5,-1.65) coordinate (g_2);
\draw[thick, line width=0.5mm] (4.95,-0.35)--(4.65,-0.7);
\draw[thick,line width=0.5mm] (4.75,-1.75)--(4.5,-1.35);
\draw[line width=0.5mm] (-1,-0.35)--(-1,-1.65);
\draw[->,color = blue!50!black,line width=0.3mm] (0.5, 0.65) coordinate (a_1) --(1.4,-1.65) coordinate (a_2);
\draw[line width=0.5mm] (0.5,1) node { $W$};
\draw[line width=0.5mm] (-1,-2) node {$X_{2}$};
\draw[line width=0.5mm] (1.5,0) node { $Y_{1}$};
\draw[line width=0.5mm] (1.5,-2) node { $Y_{2}$};
\draw[line width=0.3mm] (-1,0) circle (10pt);
\draw[line width=0.3mm] (-1,-2) circle (10pt);
\draw[line width=0.3mm] (1.5,0) circle (10pt);
\draw[line width=0.3mm] (1.5,-2) circle (10pt);
%\draw[line width=0.3mm] (2.5,1) circle (10pt);

\draw[]  (5.2,-0.1) node (A) {$X_{1}$};
\draw[] (5,-2) node (B) {$X_{3}$};
\draw[] (4.5,-1) node (C) { $X_{2}$};
\draw[] (8.7,-0.1) node (D) {$Y_{1}$};
\draw[] (8.5,-2) node (E){$Y_{3}$};
\draw[] (8.2,-1) node (F) {$Y_{2}$};
\draw[] (6,1) node {$W$};
\draw[->,color = blue!50!black,line width=0.3mm] (6.35, 0.9)--(8.38,0.1);
\draw[->,color = blue!50!black,line width=0.3mm] (6.25, 0.75) coordinate (a_1)--(7.95,-0.75) coordinate (a_2);
\draw[->,color = blue!50!black,line width=0.3mm] (6, 0.64) coordinate (b_1)--(8.166,-1.875) coordinate (b_2);
\draw[thick, line width=0.5mm] (5.2,-0.45) coordinate (g_1) --(5,-1.65) coordinate (g_2);
\draw[thick, line width=0.5mm] (4.95,-0.35)--(4.65,-0.7);
\draw[thick,line width=0.5mm] (4.75,-1.75)--(4.5,-1.35);
%\draw[] (8.5,-0.35)--(8.5,-1.65);
%\draw[] (8.375,-0.33)--(8,-0.65);
%\draw[] (8.375,-1.66)--(8,-1.35);
\coordinate (l) at (intersection of b_1--b_2 and h_1--h_2);
\draw[->,color = blue!50!black,line width=0.3mm] (6.35, 0.9)--(8.38,0.1);
\draw[->,color = blue!50!black,line width=0.3mm] (6.25, 0.75) coordinate (a_1)--(7.95,-0.75) coordinate (a_2);
\draw[->,color = blue!50!black,line width=0.3mm] (6, 0.64) coordinate (b_1)--(8.166,-1.875) coordinate (b_2);

\draw[line width=0.3mm] (6,1) circle (10pt);
\draw[line width=0.3mm] (4.5,-1) circle (10pt);
\draw[line width=0.3mm] (5.2,-0.1) circle (10pt);
\draw[line width=0.3mm] (5,-2) circle (10pt);
\draw[line width=0.3mm] (8.7,-0.1) circle (10pt);
\draw[line width=0.3mm] (8.5,-2) circle (10pt);
\draw[line width=0.3mm] (8.2,-1) circle (10pt);
\end{tikzpicture} }
\caption{Submodels of the split models with exterior influences for $n=2$ and $n=3$. }  \label{I}
\end{figure}
Now $X$ and $Y$ are independent of each other
\begin{equation*}
Q(z) = Q(x) \cdot Q(y)
\end{equation*} 
with 
\begin{equation*}
Q(y) =  \sum\limits_{w} Q(w) \prod\limits_{i=1}^{n} Q(y_{i} \vert w)
\end{equation*}
for $Q \in \mathcal{M}_{CI}^{m}$ and since independence structures of discrete distributions are preserved in the limit we have $\mathcal{M}_{CI} \subseteq \mathcal{M}_{I}$. In order to gain equality it remains to show that $Q(Y)$ can approximate every distribution on $\mathcal{Y}$ if the state space of $W$ is sufficiently large. These distributions are mixtures of discrete product distributions, where \begin{equation*}
\prod\limits_{i=1}^{n} Q(y_{i} \vert w)
\end{equation*} 
are the mixture components and $Q(w)$ are the mixture weights. Hence we are able to use the following result.

\begin{Theorem}[Theorem 1.3.1 from Reference \cite{Guido}] \label{ThmGuido}
Let $q$ be a prime power. The smallest $m$ for which any probability distribution on $\{1, \dots ,q\}$ can be approximated arbitrarily well as mixture of $m$ product distributions is $q^{n-1}$.
\end{Theorem}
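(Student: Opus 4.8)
The plan is to restate the claim in tensor language and prove the two directions separately: the bound $m\le q^{n-1}$ is a one-line explicit construction, whereas $m\ge q^{n-1}$ carries the content and is where I would exhibit an extremal distribution. I would identify a probability distribution on $\{1,\dots,q\}^{n}$ with a nonnegative tensor of unit mass in $(\mathbb R^{q})^{\otimes n}$, so that a mixture of $m$ product distributions is exactly a convex combination of $m$ nonnegative rank-one tensors; the quantity at issue is then the largest, over the simplex $\mathcal P(\{1,\dots,q\}^{n})$, of the minimal number of product components needed. A preliminary remark dispenses with the word ``approximately'': for fixed $m$ the set of such mixtures is the image of the compact set $\Delta_{m-1}\times\big(\prod_{i=1}^{n}\mathcal P(\{1,\dots,q\})\big)^{m}$ under the continuous (polynomial) parametrization, hence it is compact, in particular closed, so it suffices to determine for which $m$ this closed set equals all of $\mathcal P(\{1,\dots,q\}^{n})$.

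For $m\le q^{n-1}$ I would use the conditional decomposition along the last coordinate,
\[
P(y_{1},\dots,y_{n})=\sum_{(a_{1},\dots,a_{n-1})\in\{1,\dots,q\}^{n-1}}P(a_{1},\dots,a_{n-1})\,\delta_{a_{1}}(y_{1})\cdots\delta_{a_{n-1}}(y_{n-1})\,P(y_{n}\mid a_{1},\dots,a_{n-1}),
\]
which writes an arbitrary $P$ exactly as a convex combination of at most $q^{n-1}$ product distributions (each component a product of $n-1$ point masses with one conditional), so $m=q^{n-1}$ always suffices.

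For $m\ge q^{n-1}$ I would produce one distribution that no mixture of $q^{n-1}-1$ products can equal, and then invoke the closedness above to conclude it cannot be approximated by such mixtures either. I would take $P^{\star}$ uniform on the code $C=\{x\in\{1,\dots,q\}^{n}:\sum_{i}x_{i}\equiv 0\pmod q\}$ (identifying $\{1,\dots,q\}$ with $\mathbb Z/q\mathbb Z$), which has $|C|=q^{n-1}$ and minimum Hamming distance at least $2$; when $q$ is a prime power one may instead take any $[n,n-1]$ MDS code over $\mathbb F_{q}$. If $P^{\star}=\sum_{j=1}^{m}w_{j}\pi_{j}$ with all $w_{j}>0$ and each $\pi_{j}$ a product distribution, then each $\pi_{j}$ must vanish off $C$; but the support of a product distribution is a combinatorial box $B_{1}\times\cdots\times B_{n}$, and if some $|B_{i}|\ge 2$ one obtains two points of $C$ differing only in coordinate $i$, contradicting minimum distance $2$. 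Hence every $\pi_{j}$ is a point mass, $P^{\star}=\sum_{j}w_{j}\delta_{x_{j}}$, and $\{x_{1},\dots,x_{m}\}\supseteq C$ forces $m\ge q^{n-1}$. Together with the upper bound this places the threshold at exactly $q^{n-1}$.

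The main obstacle is conceptual rather than computational: a dimension count does not settle the lower bound, because for most $(q,n)$ the Zariski closure of the set of $(q^{n-1}-1)$-mixtures is already the whole simplex, so the obstruction has to come from the nonnegativity (semialgebraic) structure, not from dimension. The key move is to replace ``dimension'' by ``support'' and use a support-minimal witness such as a distance-$2$ code; once that idea is in place the remaining verification is exactly the short box/support argument above, and the prime-power hypothesis enters only if one prefers to phrase the witness through an MDS code over $\mathbb F_{q}$.
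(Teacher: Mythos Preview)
The paper does not prove this theorem; it is quoted verbatim from an external reference (Mont\'ufar's work) and used only as a black box to conclude $\mathcal{M}_{I}\subseteq\mathcal{M}_{CII}$. There is therefore no proof in the paper to compare your attempt against.

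That said, your argument is correct and self-contained. The upper bound via conditioning on the first $n-1$ coordinates is the standard construction. For the lower bound, your compactness remark (the set of $m$-term mixtures is the continuous image of a product of simplices, hence closed) legitimately reduces ``approximation'' to ``exact membership'', and the witness $P^{\star}$ uniform on the parity-check code $C=\{x:\sum_{i}x_{i}\equiv 0\ (\mathrm{mod}\ q)\}$ works exactly as you describe: any product component in a convex decomposition of $P^{\star}$ must have box support contained in $C$, minimum distance $2$ forces each box to be a singleton, and uniformity over $|C|=q^{n-1}$ points then forces $m\ge q^{n-1}$.

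Two minor comments. First, your parity-check construction over $\mathbb{Z}/q\mathbb{Z}$ already works for every integer $q\ge 2$, so your proof does not actually use the prime-power hypothesis; you note this yourself, and it means you have proved slightly more than the stated theorem. Second, your closing remark about dimension counts versus semialgebraic obstructions is apt: the generic (Zariski-closure) rank of the Segre variety is typically much smaller than $q^{n-1}$, so the lower bound genuinely requires the nonnegativity/support argument you give rather than a parameter count.
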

Universal approximation results like the theorem above may suggest that the models $\mathcal{M}_{CII}$ and $\mathcal{M}_{CIS}$ are equal. However we will present numerically calculated examples of elements belonging to $\mathcal{M}_{CIS}$, but not to $\mathcal{M}_{CII}$, even with an extremely large state space. We will discuss this matter further in Section \ref{SectRel}.

In conclusion, $\Phi_{CII}$ satisfies Property \ref{postulate1} and \ref{postulate2}.

 {Note that using $\Phi_{CII}$ in cases without an exterior influence might not capture all the internal cross-influences, since the additional latent variable can compensate some of the difference between the initial distribution and the split model. This can only be avoided when the exterior influence is known and can therefore be included in the model. We will discuss that case in the next section. }

\setcounter{subsection}{1}
\subsubsection{Ground Truth} \label{Sectgroundtruth}
{The} concept of an exterior influence suggests that there exists a ground truth in a larger model in which $W$ is a visible variable. This is shown in Figure \ref{groundtruth} on the right.
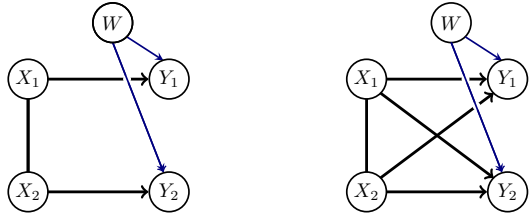
\begin{figure}[H]
\centering
\scalebox{0.75}{
\begin{tikzpicture}[rounded corners]
\draw[line width=0.5mm] (-1,0) node {$X_{1}$};
\draw[->,line width=0.5mm] (-0.65,0) coordinate (b_1)--(1.133,0) coordinate (b_2);
\draw[->,line width=0.5mm] (-0.65,-2)--(1.133,-2);
%\draw[line width=0.3mm] (6,1) circle (10pt);
\draw[line width=0.3mm] (0.5,1) circle (10pt);
%\draw[line width=0.3mm] (-2,1) circle (10pt);
\draw[->,>=stealth,color =blue!50!black,line width=0.3mm] (0.75, 0.75) --(1.4,0.33);
\draw[->,color =blue!50!black,line width=0.3mm] (0.5, 0.65) coordinate (a_1) --(1.4,-1.65) coordinate (a_2);
\coordinate (k) at (intersection of b_1--b_2 and a_1--a_2);
\fill[white] (k) circle (2pt);
%\draw[thick, line width=0.5mm] (5.2,-0.45) coordinate (g_1) --(5,-1.65) coordinate (g_2);
%\draw[thick, line width=0.5mm] (4.95,-0.35)--(4.65,-0.7);
%\draw[thick,line width=0.5mm] (4.75,-1.75)--(4.5,-1.35);
\draw[line width=0.5mm] (-1,-0.35)--(-1,-1.65);
\draw[->,>=stealth,color = blue!50!black,line width=0.3mm] (0.5, 0.65) coordinate (a_1) --(1.4,-1.65) coordinate (a_2);
\draw[line width=0.5mm] (0.5,1) node { $W$};
\draw[line width=0.5mm] (-1,-2) node {$X_{2}$};
\draw[line width=0.5mm] (1.5,0) node { $Y_{1}$};
\draw[line width=0.5mm] (1.5,-2) node { $Y_{2}$};
\draw[line width=0.3mm] (-1,0) circle (10pt);
\draw[line width=0.3mm] (-1,-2) circle (10pt);
\draw[line width=0.3mm] (1.5,0) circle (10pt);
\draw[line width=0.3mm] (1.5,-2) circle (10pt);
%\draw[line width=0.3mm] (2.5,1) circle (10pt);

\draw[line width=0.5mm] (5,0) node {$X_{1}$};
\draw[->,line width=0.5mm] (5.35,0) coordinate (b_1)--(7.133,0) coordinate (b_2);
\draw[->,line width=0.5mm] (5.35,-2)--(7.133,-2);
\draw[line width=0.3mm] (6.5,1) circle (10pt);
\draw[line width=0.3mm] (0.5,1) circle (10pt);
%\draw[line width=0.3mm] (-2,1) circle (10pt);
\draw[->,>=stealth,color =blue!50!black,line width=0.3mm] (6.75, 0.75) --(7.4,0.33);
\draw[->,color =blue!50!black,line width=0.3mm] (6.5, 0.65) coordinate (a_1) --(7.4,-1.65) coordinate (a_2);
\coordinate (k) at (intersection of b_1--b_2 and a_1--a_2);
\fill[white] (k) circle (2pt);
\draw[thick, line width=0.5mm] (5,-0.35) coordinate (g_1) --(5,-1.65) coordinate (g_2);
\draw[line width=0.5mm] (-1,-0.35)--(-1,-1.65);
\draw[->, line width = 0.5mm] (5.25, -1.75)--(7.25, -0.25);
\draw[->, line width = 0.5mm] (5.25, -0.25)--(7.25, -1.75);
\fill[white] (6.95, -0.48) circle (2pt);
\draw[->,>=stealth,color = blue!50!black,line width=0.3mm] (6.5, 0.65) coordinate (a_1) --(7.4,-1.65) coordinate (a_2);
\draw[line width=0.5mm] (6.5,1) node { $W$};
\draw[line width=0.5mm] (5,-2) node {$X_{2}$};
\draw[line width=0.5mm] (7.5,0) node { $Y_{1}$};
\draw[line width=0.5mm] (7.5,-2) node { $Y_{2}$};
\draw[line width=0.3mm] (5,0) circle (10pt);
\draw[line width=0.3mm] (5,-2) circle (10pt);
\draw[line width=0.3mm] (7.5,0) circle (10pt);
\draw[line width=0.3mm] (7.5,-2) circle (10pt);
%\draw[line width=0.3mm] (2.5,1) circle (10pt);
\end{tikzpicture} }
\caption{The graphs corresponding to $\mathcal{E}$ and $\mathcal{E}^{f}$ (right).} \label{groundtruth}
\end{figure}
Assuming that we know the distribution of the whole model, we are able to apply the concepts discussed above to define an Integrated Information measure $\Phi_{T}$ on the larger space. This allows us to really only remove the causal cross-connections as shown in Figure \ref{groundtruth} on the left. Thus we can interpret $\Phi_{T}$ as the ultimate measure of Integrated Information, if the ground truth is available.
{Note that using the measure $\Phi_{SI}$ in the setting with no external influences is a special case of $\Phi_{T}$.}

The set of distributions belonging to the larger, fully connected model will be called $\mathcal{E}^{f}$ and the set corresponding to the graph on the left of Figure \ref{groundtruth} depicts the split system which will be denoted by $\mathcal{E}$. Since $W$ is now known, we are able to fix the state space $\mathcal{W}$ to its actual size $m$.
\begin{equation*}
\begin{split}
\mathcal{E} = \left\lbrace P \in \mathcal{P}(\mathcal{Z} \times \mathcal{W}^{m}) \mid P(z,w) = P(x) \prod\limits_{i=1}^{n}P(y_{i} \vert x_{i}, w)P(w), \, \forall (z,w) \in \mathcal{Z} \times \mathcal{W}^{m}, \, \vert \mathcal{W} \vert = m \right\rbrace \\
\mathcal{E}^{f} = \left\lbrace P \in \mathcal{P}(\mathcal{Z} \times \mathcal{W}^{m}) \mid P(z,w) = P(x) \prod\limits_{i=1}^{n}P(y_{i} \vert x, w)P(w), \, \forall (z,w) \in \mathcal{Z} \times \mathcal{W}^{m}, \, \vert \mathcal{W} \vert = m \right\rbrace.
\end{split}
\end{equation*}
Note that $\mathcal{E}$ is the set of all the distributions that result in an element of $\mathcal{M}_{CII}$ after marginalization over $\mathcal{W}^{m}$
\begin{equation*}
\mathcal{M}_{CII}^{m} = \left\lbrace P \in \mathcal{P}(\mathcal{Z}) \vert \exists Q \in \mathcal{E}^{m} : P(z) = \sum\limits_{j = 1 }^{m} Q(x) Q(w_{j}) \prod\limits_{i=1}^{n} Q(y_{i} \vert x_{i}, w_{j}) \right\rbrace.
\end{equation*}
Calculating the KL-divergence between $P \in \mathcal{E}^{f}$ and $\mathcal{E}$ results in the new measure.

\begin{Proposition} \label{PropGroundTruth}
Let $P \in \mathcal{E}^{f}$. Minimizing the KL-divergence between $P$ and $\mathcal{E}$ leads to
\begin{equation*}
\begin{split}
\Phi_{T} = \inf\limits_{Q \in \mathcal{E}} D_{\mathcal{Z} \times \mathcal{W}^{m}} (P \parallel Q) &= \sum\limits_{z,w} P(z,w) \, log \, \dfrac{\prod\limits_{i} P(y_{i} \vert x,w)}{\prod\limits_{i} P(y_{i} \vert x_{i}, w)} \\
 &= \sum\limits_{i} I(Y_{i} ; X_{I \setminus \{i\}} \vert X_{i}, W).
\end{split}
\end{equation*}
\end{Proposition}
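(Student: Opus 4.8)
The plan is to exhibit the $m$-projection $Q^\star$ of $P$ onto $\mathcal{E}$ explicitly and then verify, term by term, that it achieves the infimum; this is essentially a direct computation and does not require invoking the generalized Pythagorean theorem (though one could phrase it that way, viewing $\mathcal{E}$ as a hierarchical log-linear model). The candidate projection is $Q^\star(z,w) := P(x)\,P(w)\prod_{i=1}^n P(y_i \mid x_i, w)$, where the factors are the marginals and conditionals of the given $P \in \mathcal{E}^f$. One checks that $Q^\star$ is a legitimate element of $\mathcal{E}$: it has the required factorized form, its $X$- and $W$-marginals are $P(x)$ and $P(w)$ with $Q^\star(y_i \mid x_i, w) = P(y_i \mid x_i, w)$, and it is strictly positive because $P$ is.

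First I would fix an arbitrary $Q \in \mathcal{E}$ and substitute both factorizations — $P(z,w) = P(x)\,P(w)\prod_i P(y_i \mid x, w)$ from $P \in \mathcal{E}^f$, and $Q(z,w) = Q(x)\,Q(w)\prod_i Q(y_i \mid x_i, w)$ from $Q \in \mathcal{E}$ — into $D_{\mathcal{Z}\times\mathcal{W}^m}(P \parallel Q)$. Splitting the logarithm of the ratio accordingly decomposes the divergence into an $X$-block equal to $\sum_x P(x)\log\frac{P(x)}{Q(x)} = D(P_X \parallel Q_X) \ge 0$, a $W$-block equal to $D(P_W \parallel Q_W) \ge 0$, and $n$ further blocks of the form $\sum_{z,w} P(z,w)\log\frac{P(y_i \mid x, w)}{Q(y_i \mid x_i, w)}$.

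The key manipulation is on each of these last $n$ blocks: insert the factor $P(y_i \mid x_i, w)$ to write $\log\frac{P(y_i \mid x, w)}{Q(y_i \mid x_i, w)} = \log\frac{P(y_i \mid x, w)}{P(y_i \mid x_i, w)} + \log\frac{P(y_i \mid x_i, w)}{Q(y_i \mid x_i, w)}$. Summed against $P(z,w)$, the first piece does not depend on $Q$ and, after marginalizing out $x_{I \setminus \{i\}}$ and $y_{I \setminus \{i\}}$, is exactly the conditional mutual information $I(Y_i; X_{I \setminus \{i\}} \mid X_i, W)$; the second piece collapses to $\sum_{x_i, w} P(x_i, w)\, D\!\big(P(Y_i \mid x_i, w)\,\|\,Q(Y_i \mid x_i, w)\big) \ge 0$. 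Hence $D_{\mathcal{Z}\times\mathcal{W}^m}(P \parallel Q) \ge \sum_i I(Y_i; X_{I \setminus \{i\}} \mid X_i, W)$ for every $Q \in \mathcal{E}$, with equality precisely when $Q(x) = P(x)$, $Q(w) = P(w)$, and $Q(y_i \mid x_i, w) = P(y_i \mid x_i, w)$ for all $i$ — i.e.\ when $Q = Q^\star$.

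It then remains to conclude: since $Q^\star \in \mathcal{E}$, the infimum is attained at $Q^\star$, so $\Phi_T = D_{\mathcal{Z}\times\mathcal{W}^m}(P \parallel Q^\star)$; cancelling the common factors $P(x)$ and $P(w)$ in $P(z,w)/Q^\star(z,w)$ yields the middle expression $\sum_{z,w} P(z,w)\log\frac{\prod_i P(y_i \mid x, w)}{\prod_i P(y_i \mid x_i, w)}$, which the previous paragraph has already identified with $\sum_i I(Y_i; X_{I \setminus \{i\}} \mid X_i, W)$. I do not expect a serious obstacle here; the only point requiring care is the bookkeeping in the key manipulation — checking that summing $P(z,w)\log\frac{P(y_i \mid x, w)}{P(y_i \mid x_i, w)}$ over the redundant coordinates produces the conditional mutual information in its standard form, and that the remainder is genuinely an $\mathbb{E}_{P(x_i,w)}$-average of KL-divergences, which relies on strict positivity so that conditioning on $(x_i, w)$ is everywhere well defined.
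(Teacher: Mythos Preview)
Your proposal is correct and follows essentially the same route as the paper: both substitute the factorizations of $P$ and $Q$, split the divergence into an $X$-block, a $W$-block, and the remaining $Y$-blocks, and show each piece is minimized by taking the corresponding marginal or conditional from $P$. The only minor difference is in the handling of the $Y$-part: the paper keeps $\prod_i$ together and appeals to the log-sum inequality, whereas you split into $n$ separate terms and recognize each remainder as an $\mathbb{E}_{P(x_i,w)}$-average of KL divergences, which is arguably cleaner and makes the equality case $Q = Q^\star$ more transparent.
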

In the definition above $ I(Y_{i} ; X_{I \setminus \{i\}} \vert X_{i}, W)$ is the conditional mutual information defined by
\begin{equation*}
I(Y_{i} ; X_{I \setminus \{i\}} \vert X_{i}, W)= \sum\limits_{y_{i}, x ,w} P(y_{i},x ,w) \, log \, \dfrac{P(y_{i},x_{I \setminus \{i\}} \vert x_{i}, w)}{P(y_{i} \vert x_{i}, w)P(x_{I \setminus \{i\}} \vert x_{i}, w)}.
\end{equation*}
It characterizes the reduction of uncertainty in $Y_{i}$ due to $X_{I\setminus \{i\}}$ when $W$ and $X_{i}$ are given.
Therefore this measure decomposes to a sum in which each addend characterizes the information flow towards one $Y_{i}$.
Writing this as conditional independence statements, $\Phi_{T}$ is 0 if and only if 
\begin{equation*}
Y_{i} \ci X_{I \setminus \{i\}} \vert \{X_{i}, W\}.
\end{equation*}
Ignoring $W$ would lead exactly to the conditional independence statements in equation (\ref{Mg}). For a more detailed description of the conditional mutual information and its properties, see Reference \cite{Wiley}.
 
{Furthermore, $\Phi_{T} = 0$ if and only if the initial distribution $P$ factors according to the graph that belongs to $\mathcal{E}$. This follows from Proposition \ref{PropGroundTruth} and the fact that the KL-divergence is 0 if and only if both distributions are equal. Hence this measure truly removes the causal cross-connections. }

Additionally, by using that $W \ci X$, we are able to split up the conditional mutual information into a part corresponding to the conditional independence statements of Property \ref{postulate1} and another conditional mutual information. 
\begin{equation*}
\begin{split}
I(Y_{i} ; X_{I \setminus \{i\}} \vert X_{i}, W) &= \sum\limits_{y_{i}, x,w} P(w) \, log \, \left( \dfrac{P(y_{i},x_{I \setminus \{i\}} \vert x_{i}) }{P(y_{i} \vert x_{i}) P(x_{I \setminus \{i\}} \vert x_{i})} \cdot  \dfrac{P(y_{i},x_{i})P(x)P(y_{i},x,w) P(x_{i},w)}{P(y_{i}, x)P(x_{i})P(y_{i}, x_{i}, w) P(x,w)}\right) \\
&= I(Y_{i}; X_{I \setminus \{i\}} \vert X_{i}) + \sum\limits_{y_{i}, x,w} P(w) \, log \,  \dfrac{P(y_{i},x_{i})P(x)P(y_{i},x,w) P(x_{i},w)}{P(y_{i}, x)P(x_{i})P(y_{i}, x_{i}, w) P(x,w)}  \\
&= I(Y_{i}; X_{I \setminus \{i\}} \vert X_{i}) + \sum\limits_{y_{i}, x,w} P(w) \, log \,  \dfrac{P(w,x_{I \setminus \{i\}} \vert y_{i}, x_{i})}{P(w \vert y_{i},x_{i})P(x_{I \setminus \{i\}} \vert y_{i}, x_{i})}  \\
&= I(Y_{i}; X_{I \setminus \{i\}} \vert X_{i}) + I(W ; X_{I \setminus \{i\}} \vert Y_{i}, X_{i}).
\end{split}
\end{equation*}
Since the conditional mutual information is non-negative, $\Phi_{T}$ is 0 if and only if the conditional independence statements of equation (\ref{Mg}) hold and additionally the reduction of uncertainty in $W$ due to $X_{I \setminus \{i\}}$ given $Y_{i}, X_{i}$ is 0.

In general, we do not know what the ground truth of our system is and therefore we have to assume that $W$ is a hidden variable. This leads us back to $\Phi_{CII}$. Minimizing over all possible $W$ might compensate a part of the causal information flow. {One example, in which accounting for an exterior influence that does not exist leads to a value smaller than the true integrated information, was discussed earlier in the context of Property \ref{postulate2}. There we refer to an example in Reference \cite{Comparison} where $\Phi_{SI}$ exceeds $\Phi_{I}$ in a setting without an exterior influence. Similarly, $\Phi_{CII}$ is smaller or equal to the true value $\Phi_{T}$. } 

\begin{Proposition} \label{relGroundTruth}
The new measure $\Phi_{T}$ is an upper bound for $\Phi_{CII}$
\begin{equation*}
\Phi_{CII} \leq \Phi_{T}.
\end{equation*}
\end{Proposition}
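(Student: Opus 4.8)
The statement is to be read with $P \in \mathcal{E}^{f}$ the ground-truth distribution on the enlarged space and $\tilde{P}$ its marginal on $\mathcal{Z}$, i.e.\ $\tilde{P}(z) = \sum_{w} P(z,w)$; the inequality then compares $\Phi_{T} = \Phi_{T}(P)$ with $\Phi_{CII} = \Phi_{CII}(\tilde{P})$. The plan is to take the minimizer of the problem defining $\Phi_{T}$, marginalize it over $\mathcal{W}^{m}$ to obtain an admissible competitor for the infimum defining $\Phi_{CII}$, and then to use that marginalization cannot increase the KL-divergence.

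First I would invoke Proposition \ref{PropGroundTruth}: the infimum defining $\Phi_{T}$ is attained at $Q^{\star} \in \mathcal{E}$ with $Q^{\star}(z,w) = P(x)\prod_{i=1}^{n}P(y_{i}\mid x_{i},w)P(w)$, so that $\Phi_{T} = D_{\mathcal{Z}\times\mathcal{W}^{m}}(P\parallel Q^{\star})$. Then I would set $\bar{Q}^{\star}(z) := \sum_{w}Q^{\star}(z,w)$. By the observation made right after the definition of $\mathcal{E}$ — that marginalizing any element of $\mathcal{E}$ over $\mathcal{W}^{m}$ produces an element of $\mathcal{M}_{CII}^{m}$ — we have $\bar{Q}^{\star}\in\mathcal{M}_{CII}^{m}\subseteq\bigcup_{k\in\mathbb{N}}\mathcal{M}_{CII}^{k}\subseteq\mathcal{M}_{CII}$. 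Consequently $\bar{Q}^{\star}$ is feasible for the $\Phi_{CII}$ infimum, giving $\Phi_{CII}\leq D_{\mathcal{Z}}(\tilde{P}\parallel\bar{Q}^{\star})$.

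It remains to bound $D_{\mathcal{Z}}(\tilde{P}\parallel\bar{Q}^{\star})$ by $D_{\mathcal{Z}\times\mathcal{W}^{m}}(P\parallel Q^{\star})$, which is the monotonicity of the KL-divergence under coarse-graining. I would prove it by the chain rule: decomposing $P(z,w) = \tilde{P}(z)P(w\mid z)$ and $Q^{\star}(z,w) = \bar{Q}^{\star}(z)Q^{\star}(w\mid z)$,
\begin{equation*}
D_{\mathcal{Z}\times\mathcal{W}^{m}}(P\parallel Q^{\star}) = D_{\mathcal{Z}}(\tilde{P}\parallel\bar{Q}^{\star}) + \sum_{z}\tilde{P}(z)\,D\big(P(\cdot\mid z)\parallel Q^{\star}(\cdot\mid z)\big) \;\geq\; D_{\mathcal{Z}}(\tilde{P}\parallel\bar{Q}^{\star}),
\end{equation*}
since the remainder term is a nonnegative combination of KL-divergences; alternatively this inequality is the log-sum inequality applied blockwise. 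Chaining the two estimates yields $\Phi_{CII}\leq D_{\mathcal{Z}}(\tilde{P}\parallel\bar{Q}^{\star})\leq D_{\mathcal{Z}\times\mathcal{W}^{m}}(P\parallel Q^{\star}) = \Phi_{T}$, which is the claim.

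I do not expect a genuine obstacle here: the argument is just \emph{the $\Phi_{T}$-projection marginalizes into $\mathcal{M}_{CII}$} plus \emph{marginalization does not increase KL}. The only care needed is notational bookkeeping — keeping straight which ambient space each divergence lives on, and noting that the fixed latent cardinality $m$ of the ground truth is among the $k$'s in the union $\bigcup_{k}\mathcal{M}_{CII}^{k}$, both immediate from the definitions. If one wished to avoid citing the explicit minimizer, one could instead run the argument along a minimizing sequence $Q^{(j)}\in\mathcal{E}$, using $\bar{Q}^{(j)}\in\mathcal{M}_{CII}$ for each $j$ and letting $j\to\infty$.
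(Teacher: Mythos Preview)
Your proof is correct and follows essentially the same route as the paper's: both arguments hinge on the fact that the $\Phi_T$-minimizer (or any $Q\in\mathcal{E}$) marginalizes into $\mathcal{M}_{CII}^{m}$, combined with the log-sum/data-processing inequality for KL. The only cosmetic difference is that the paper keeps the infimum over $\mathcal{M}_{CII}^{m}$ and applies the log-sum inequality directly to the ratio of $w$-sums, whereas you first fix the explicit minimizer $Q^{\star}$ from Proposition~\ref{PropGroundTruth} and then invoke the chain rule; these are the same inequality in two guises.
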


Hence by assuming that there exists a common exterior influence, we are able to show that $\Phi_{CII}$ is bounded from above by the true value, that measures all the intrinsic cross-influences. {We are able to observe this behavior in Section \ref{sectResults}}.    

\subsubsection{Relationships between the different measures} \label{SectRel}
Now we are going to analyze the relationship between the different measures $\Phi_{SI}, \Phi_{G}, \Phi_{CIS}$ and $\Phi_{CII}$. We will start with $\Phi_{G}$ and $\Phi_{CII}$. 
Previously we already showed that $ \Phi_{CII}$ satisfies Property \ref{postulate1} and since $\Phi_{G}$ does not satisfy Property \ref{postulate1}, we have 
\begin{equation*}
\mathcal{M}_{G} \nsubseteq \mathcal{M}_{CII}.
\end{equation*}
To evaluate the other inclusion, we will consider
the more refined parametrizations of elements $P \in \mathcal{M}_{CII}^{m}$ and $Q \in \mathcal{M}_{G}$ as defined \ref{factor}. These are
\begin{equation*}
\begin{split}
P(z) &= P(x)f_{2}(x_{1},y_{1})g_{2}(x_{2},y_{2}) \sum\limits_{w } P(w) f_{1}(w,y_{1})f_{3}(x_{1},y_{1},w)g_{1}(w,y_{2})g_{3}(x_{2},y_{2},w) \\
 &=  P(x)f_{2}(x_{1},y_{1})g_{2}(x_{2},y_{2})  \phi (x_{1}, x_{2}, y_{1}, y_{2}) \\
Q(z) &= h_{n+1}(x)h_{n+2}(y) \prod\limits_{i =1}^{n}h_{i}(y_{i},x_{i}),
\end{split}
\end{equation*}
where $f_{1},f_{2},f_{3},g_{1},g_{2},g_{3},h_{1},h_{2},h_{3},h_{4}$ are non-negative functions such that $P,Q \in \mathcal{P}(\mathcal{Z})$ and 
\begin{equation*}
\phi (x_{1}, x_{2}, y_{1}, y_{2}) =  \sum\limits_{w } P(w) f_{1}(w,y_{1})f_{3}(x_{1},y_{1},w)g_{1}(w,y_{2})g_{3}(x_{2},y_{2},w).
\end{equation*}  Since $\phi$ depends on more than $Y_{1}$ and $Y_{2}$, $P(z)$ does not factorize according to $\mathcal{M}_{G}$ in general. Hence $\mathcal{M}_{CII} \nsubseteq \mathcal{M}_{G}$ holds. 

Furthermore, looking at the parametrizations allows us to identify a subset of distributions that lies in the intersection of $\mathcal{M}_{G}$ and $\mathcal{M}_{CII}$. Allowing $P$ to only have pairwise interactions would lead to 
\begin{equation*}
\begin{split}
P(z) &= P(x)\tilde{f}_{2}(x_{1},y_{1})\tilde{g}_{2}(x_{2},y_{2}) \sum\limits_{w } P(w) \tilde{f}_{1}(w,y_{1})\tilde{g}_{1}(w,y_{2}) \\
&=P(x)\tilde{f}_{2}(x_{1},y_{1})\tilde{g}_{2}(x_{2},y_{2}) \tilde{\phi}(y_{1}, y_{2}),
\end{split}
\end{equation*}
with the non-negative functions $\tilde{f}_{1},\tilde{f}_{2}, \tilde{g}_{1},\tilde{g}_{2}$ such that $P \in \mathcal{P}(\mathcal{Z})$ and 
\begin{equation*}
\tilde{\phi}(y_{1}, y_{2}) = \sum\limits_{w } P(w) \tilde{f}_{1}(w,y_{1})\tilde{g}_{1}(w,y_{2}).
\end{equation*}
This $P$ is an element of $\mathcal{M}_{G} \cap \mathcal{M}_{CII}$.

In the next part we will discuss the relationship between $\mathcal{M}_{CII}$ and $\mathcal{M}_{CIS}$. The elements in $\mathcal{M}_{CII}$ satisfy the conditional independence statements of Property \ref{postulate1}, therefore
\begin{equation*}
\mathcal{M}_{CII} \subseteq \mathcal{M}_{CIS}.
\end{equation*}
Previously we have seen that making the state space of $W$ large enough can approximate a distribution between the $Y_{i}$s, see Theorem \ref{ThmGuido}. This gives the impression that $\mathcal{M}_{CII}$ and $\mathcal{M}_{CIS}$ coincide. However, based on numerically calculated examples, we have the following conjecture.
\begin{Conjecture} \label{Conj} It is not possible to approximate every distribution $Q \in \mathcal{M}_{CIS}$ with arbitrary accuracy by an element of $P \in \mathcal{M}_{CII}$. Therefore, we have that
\begin{equation*}
\mathcal{M}_{CII} \subsetneq \mathcal{M}_{CIS}.
\end{equation*}
\end{Conjecture}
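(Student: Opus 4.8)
The observation I would build the proof on is that, already for $n=2$ with binary alphabets, the inclusion $\mathcal{M}_{CII}\subseteq\mathcal{M}_{CIS}$ is exactly the inclusion of the \emph{local (classical) polytope} into the \emph{no-signalling polytope} of a bipartite Bell scenario with two binary inputs and two binary outputs. An element of $\mathcal{M}_{CII}^{m}$ has conditional $P(y_{1},y_{2}\mid x_{1},x_{2})=\sum_{w}Q(w)\,Q(y_{1}\mid x_{1},w)\,Q(y_{2}\mid x_{2},w)$, which is precisely a local hidden-variable model: a shared variable $W$, independent of the inputs $(x_{1},x_{2})$, with each output $Y_{i}$ produced from its own input $x_{i}$ and from $W$. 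On the other hand, for $n=2$ the defining condition of $\mathcal{M}_{CIS}$, namely $Q(Y_{i}\mid X)=Q(Y_{i}\mid X_{i})$ for $i=1,2$, is exactly the no-signalling condition. Since the no-signalling polytope strictly contains the local polytope (Bell's theorem / the Popescu--Rohrlich box), Conjecture \ref{Conj} follows once the bookkeeping is done, and this also explains \emph{why} the universal-approximation heuristic suggested by Theorem \ref{ThmGuido} does not force equality: enlarging the state space of $W$ never breaks locality. So the plan is to turn this into a short self-contained argument.

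Concretely, I would fix $n=2$, $\mathcal{X}_{i}=\mathcal{Y}_{i}=\{0,1\}$, restrict to the slice $P(x)=\tfrac14$, and work with the Bell functional $S(P)=P(Y_{1}{=}Y_{2}\mid x_{1}{=}0,x_{2}{=}0)+P(Y_{1}{=}Y_{2}\mid 0,1)+P(Y_{1}{=}Y_{2}\mid 1,0)+P(Y_{1}{\neq}Y_{2}\mid 1,1)$, which on this slice is linear, hence continuous, in the joint $P$. Step one: for a single latent value $w$ the conditional factors as $Q(y_{1}\mid x_{1},w)Q(y_{2}\mid x_{2},w)$, and $S$ is affine in each of the four parameters $Q(Y_{1}{=}1\mid x_{1},w),Q(Y_{2}{=}1\mid x_{2},w)$ separately, so its maximum over $[0,1]^{4}$ is attained at a $\{0,1\}$-vertex; there a short case check shows no deterministic assignment $Y_{i}=f_{i}(x_{i})$ can make all four equalities hold simultaneously, so $S\le 3$. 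Averaging with weights $Q(w)$ gives $S\le 3$ on every $\mathcal{M}_{CII}^{m}$ and, by continuity, on $\mathcal{M}_{CII}$. Step two: take on the same slice $\tilde P_{\varepsilon}$, the mixture of the Popescu--Rohrlich conditional (for which $Y_{1}\oplus Y_{2}=X_{1}X_{2}$ deterministically and all output marginals are uniform) with the uniform conditional, weighted $1-\varepsilon$ and $\varepsilon$. Then $\tilde P_{\varepsilon}$ is strictly positive for $0<\varepsilon<1$, its output marginals stay uniform so $Q(Y_{i}\mid X)=Q(Y_{i}\mid X_{i})$ and $\tilde P_{\varepsilon}\in\mathcal{M}_{CIS}$, while $S(\tilde P_{\varepsilon})=4(1-\varepsilon)+2\varepsilon=4-2\varepsilon>3$ for $\varepsilon<\tfrac12$. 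Since $\{P:S(P)>3\}$ is open, contains $\tilde P_{\varepsilon}$, and is disjoint from $\mathcal{M}_{CII}$, the distribution $\tilde P_{\varepsilon}$ is bounded away from $\mathcal{M}_{CII}$; in particular it is not a limit of elements of $\mathcal{M}_{CII}$, which gives $\mathcal{M}_{CII}\subsetneq\mathcal{M}_{CIS}$.

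I do not expect a genuine obstacle here: the non-routine part is recognising the correspondence with Bell non-locality, after which the argument is short. The technical points to take care of are minor: (i) the standing restriction to strictly positive distributions, handled by the $\varepsilon$-perturbation towards the uniform conditional; (ii) the closure in the definition (\ref{setMCII}), handled because $S$ is continuous and the bound $S\le 3$ already holds on $\bigcup_{m}\mathcal{M}_{CII}^{m}$ (alternatively, a Carathéodory argument shows $\bigcup_{m}\mathcal{M}_{CII}^{m}$ is itself closed, being the continuous image of a product of a simplex of inputs with the compact convex hull of the finite-parameter ``pure'' product conditionals, so the bar is redundant); and (iii) checking that the latent variable in $\mathcal{M}_{CII}$ is genuinely input-independent and output-factorising, which is immediate from the displayed factorisation. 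A secondary, purely bookkeeping task would be to extend the statement to larger alphabets and to $n>2$ by embedding this $2\times2\times2$ gadget as a subsystem, which poses no new conceptual difficulty.
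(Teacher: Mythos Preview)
Your proposal is correct and, in fact, goes well beyond what the paper does. In the paper this statement is explicitly labelled a \emph{Conjecture}: the authors do not prove it but support it by numerical evidence only. Their Example~\ref{ex} takes a particular submodel $\mathcal{N}_{CIS}\subset\mathcal{M}_{CIS}$ (the graph $X_{1}\to Y_{1}\leftarrow Y_{2}$ with $X_{2}$ isolated), draws random distributions from it, and runs the em-algorithm towards $\mathcal{M}_{CII}$ for various sizes of $|\mathcal{W}|$; the KL-gap does not shrink as $|\mathcal{W}|$ grows (Table~\ref{table}), which is taken as evidence for strict inclusion. Further numerical hints appear in the Ising-model experiments of Section~\ref{sectResults}. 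No analytic argument is given.

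Your route is therefore genuinely different: you identify, for $n=2$ with binary alphabets, the conditional part of $\mathcal{M}_{CII}$ with the local hidden-variable polytope and the conditional part of $\mathcal{M}_{CIS}$ with the no-signalling polytope, and then separate them with the CHSH-type functional $S$ and a Popescu--Rohrlich conditional. The details you list are sound: the bound $S\le 3$ on deterministic product strategies and hence on all mixtures; continuity of $S$ on the strictly positive simplex so that the bound passes to the closure in (\ref{setMCII}); the $\varepsilon$-mixture with the uniform conditional to stay strictly positive while keeping $S=4-2\varepsilon>3$; and the observation that the PR conditional has uniform one-party marginals, hence satisfies $Q(Y_{i}\mid X)=Q(Y_{i}\mid X_{i})$. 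What this buys over the paper's approach is an actual proof rather than numerical evidence, together with a conceptual explanation of \emph{why} enlarging $|\mathcal{W}|$ cannot close the gap: convex mixtures of local product kernels never leave the local polytope, regardless of the number of hidden states. The paper's numerical example, by contrast, gives no such guarantee and in principle could have been an artefact of local minima in the em-algorithm. Your remark on extending to larger $n$ or alphabets by embedding the $2{\times}2{\times}2$ gadget is also correct and routine.
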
 
The following example strongly suggests this conjecture to be true.
\begin{Example} \label{ex}
Consider the set of distributions that factor according to the graph in Figure \ref{small1}
\begin{equation*}
\mathcal{N}_{CIS} =  \{P \in \mathcal{P}(\mathcal{Z}) \vert P(z) = P(x_{1})P(x_{2})P(y_{1} \vert x_{1}, y_{2})P(y_{2}) \}.
\end{equation*}
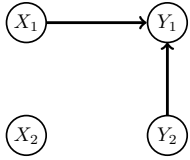
\begin{figure}[H]
\centering
\scalebox{0.75}{
\begin{tikzpicture}[rounded corners]
\draw[line width=0.5mm] (0,0) node {$X_{1}$};
\draw[->,line width=0.5mm] (0.35,0)--(2.133,0);
%\draw[->,>=stealth,line width=0.5mm] (0.35,-2)--(2.133,-2);
%\draw[line width=0.5mm] (0,-0.35)--(0,-1.65);
\draw[<-, line width=0.5mm] (2.5,-0.35)--(2.5,-1.65);
\draw[line width=0.5mm] (0,-2) node {$X_{2}$};
\draw[line width=0.5mm] (2.5,0) node { $Y_{1}$};
\draw[line width=0.5mm] (2.5,-2) node { $Y_{2}$};
\draw[line width=0.3mm] (0,0) circle (10pt);
\draw[line width=0.3mm] (0,-2) circle (10pt);
\draw[line width=0.3mm] (2.5,0) circle (10pt);
\draw[line width=0.3mm] (2.5,-2) circle (10pt);
\end{tikzpicture}
}
\caption{Graph of the model $\mathcal{N}_{CIS}$.} \label{small1}
\end{figure} 
This model satisfies the conditional independence statements of Property \ref{postulate1} and is therefore a subset of the model $\mathcal{M}_{CIS}$. In this case $X_{1}$ and $X_{2}$ are independent of each other, hence from a causal perspective the influence of $Y_{2}$ on $Y_{1}$ should be purely external. Therefore we try to model this with a subset of $\mathcal{M}_{CII}$ 
\begin{equation}
\begin{split}
\mathcal{N}_{CII} &= \overline{\bigcup\limits_{m \in \mathbb{N}} \mathcal{N}^{m}_{CII}}, \\
\mathcal{N}_{CII}^{m} &= \left\lbrace P \in \mathcal{P}(\mathcal{Z}) \vert \exists Q \in \mathcal{P}(\mathcal{Z} \times \mathcal{W}^{m}) : P(z) =Q(x_{1})Q(x_{2})\sum\limits_{j = 1}^{m} Q(y_{1} \vert x_{1}, w_{j})Q(y_{2} \vert w_{j}) Q(w_{j}) \right\rbrace
\end{split}
\end{equation}
and this corresponds to Figure \ref{small2}.
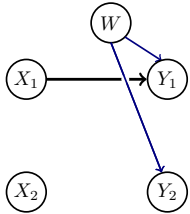
\begin{figure}[H]
\centering
\scalebox{0.75}{
\begin{tikzpicture}[rounded corners]
\draw[line width=0.5mm] (-1,0) node {$X_{1}$};
\draw[->,line width=0.5mm] (-0.65,0) coordinate (b_1)--(1.133,0) coordinate (b_2);
\draw[line width=0.3mm] (0.5,1) circle (10pt);
%\draw[line width=0.3mm] (-2,1) circle (10pt);
\draw[->,color = blue!50!black,line width=0.3mm] (0.75, 0.75) --(1.4,0.33);
\draw[->,color = blue!50!black,line width=0.3mm] (0.5, 0.65) coordinate (a_1) --(1.4,-1.65) coordinate (a_2);
\coordinate (k) at (intersection of b_1--b_2 and a_1--a_2);
\fill[white] (k) circle (2pt);
\draw[line width=0.5mm] (0.5,1) node { $W$};
\draw[line width=0.5mm] (-1,-2) node {$X_{2}$};
\draw[line width=0.5mm] (1.5,0) node { $Y_{1}$};
\draw[line width=0.5mm] (1.5,-2) node { $Y_{2}$};
\draw[line width=0.3mm] (-1,0) circle (10pt);
\draw[line width=0.3mm] (-1,-2) circle (10pt);
\draw[line width=0.3mm] (1.5,0) circle (10pt);
\draw[line width=0.3mm] (1.5,-2) circle (10pt);
%\draw[line width=0.3mm] (2.5,1) circle (10pt);
\draw[->,color = blue!50!black,line width=0.3mm] (0.5, 0.65)coordinate (a_1) --(1.4,-1.65) coordinate (a_2);
\end{tikzpicture} }
\caption{Graph of the model $\mathcal{N}_{CII}$.} \label{small2}
\end{figure}
\end{Example}
{Using} the em-algorithm described in Section \ref{SectEm} we took 500 random elements of $\mathcal{N}_{CIS}$ and calculated the closest element of $\mathcal{N}_{CII}$ by using the minimum KL-divergence of 50 different random input distributions in each run. The results are displayed in Table \ref{table}.
\begin{table}[H]
\caption{The results of the em-algorithm between $\mathcal{N}_{CIS}$ and $\mathcal{N}_{CII}$.} \label{table}
\centering
\begin{tabular}{clll}
\textbf{$\vert \mathcal{W} \vert $ }	& Minimum	& Maximum & Arithmetic Mean\\
\hline
 2 & 0.011969035529826939 & 0.5028091152589176 &  0.15263592877594967 \\
 3 & 0.021348311360946  & 0.5499395859771526 & 0.1538653506807848  \\
 4 & 0.014762084688030863 & 0.3984635189946462 & 0.15139198568055212 
 \\
 8 & 0.017334311629729246 & 0.4383731978333986 & 0.15481967618112732\\
 16 & 0.024306996171092318 & 0.4238222051787452 & 0.1490336847067273 \\
300 & 0.016524177216064712 &  0.47733473380366764 & 0.15493896625208842 \\
\end{tabular}
\end{table}
{This is an example of an element lying in $\mathcal{M}_{CIS}$, which cannot be approximated by an element in $\mathcal{M}_{CII}$.

Now we are going to look at this example from the causal perspective. Proposition \ref{IfandOnlyIf} states that $\Phi_{CII}(\tilde{P})$ is $0$ if and only if $\tilde{P}$ is the limit of a sequence of distributions in $\mathcal{M}_{CII}$ corresponding to distributions on the extended space that factor according to the split model. Hence a distribution resulting in $\Phi_{CII} > 0$ cannot be explained by a split model with an exterior influence. Taking into account that $\mathcal{M}_{CIS}$ does not correspond to a graph, we do not have a similar result describing the distributions for which $\Phi_{CIS} = 0$. Nonetheless, by looking at the graphical model $\mathcal{N}_{CIS}$, we are able to discuss the causal structure of a submodel of $\mathcal{M}_{CIS}$, a class of distributions for which $\Phi_{CIS} = 0$ holds.}

If we trust the results in Table \ref{table}, this would imply that the influence from $Y_{2}$ to $Y_{1}$ is not purely external, but that there suddenly develops an internal influence in timestep $t+1$ that did not exist in timestep $t$. {Therefore the distributions in $\mathcal{N}_{CIS}$ do not belong to the stationary Markovian processes $MP(\mathcal{Z})$, depicted in Figure \ref{interactInt}, in general. For these Markovian processes the connections between the $Y_{i}$s arise from correlated $X_{i}$s or external influences, as pointed out by Amari in Section 6.9 \cite{amari}. So from a causal perspective $\mathcal{N}_{CIS}$ does not fit into our framework. Hence the initial distribution $\tilde{P}$, which corresponds to a full model, will in general not be an element of $\mathcal{N}_{CIS}$. However, the projection of $\tilde{P}$ to $\mathcal{M}_{CIS}$ might lie in $\mathcal{N}_{CIS}$ as illustrated in Figure \ref{ncii}. 
\begin{figure}[H]
\centering
\begin{tikzpicture}
\draw[lightgray, fill= lightgray] plot [smooth cycle] coordinates
  {(0,1.5) (1,1.5) (4,1.5) (6,1.5) (5,-1) (3, 0) (0.5, 0.5 )};
\draw[fill=lightgray, lightgray] (-0.15,2) rectangle (6.01,1.5);
\draw[lightgray!20!white, fill= lightgray!20!white, opacity = 0.9] plot [smooth cycle] coordinates
  {(0,1.5) (2,1.5) (2.5,0.5) (3.1,-1) (2,-1) (1,-0.75) (0,-0.5)};
\draw[lightgray, fill= lightgray] (-0.15,1.65) rectangle (0.49,0.7);
\draw[lightgray!20!white,fill=  lightgray!20!white,  opacity=0.9] (-0.15,1.65) rectangle (0.5,0.3);
\draw[color=black] (4.5,1.5) node {$MP(\mathcal{Z})$};
\draw[color=black] (4,0) node {$\tilde{P}$};
\draw[color=black] (3.75,-0.05) node {$\bullet$};
\draw[lightgray!60!white, fill= lightgray!60!white,  opacity=0.8] plot [smooth cycle] coordinates
  {(1.75,0.25) (2.5,0.5) (3,-0.75) (2.5, -0.85) (1.5, -0.65)};
  \draw[color=black] (0.5,0.65) node {$\mathcal{M}_{CIS}$};
    \draw[color=black] (2,-0.5) node {$\mathcal{N}_{CIS}$};
\draw[color=black] (2.82,-0.25) node {$\bullet$};
\draw[line width = 0.3mm, black, ->] (3.75,-0.05) .. controls (3.5,-0.05) ..(2.95,-0.2);
\end{tikzpicture}
\caption{Sketch of the relationships among $MP(\mathcal{Z}), \mathcal{M}_{CIS}$ and $\mathcal{N}_{CIS}.$} \label{ncii}
\end{figure}
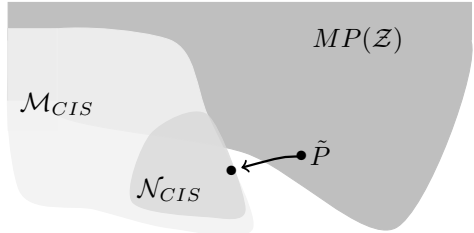

When this is the case, then $\tilde{P}$ is closer to an element with a causal structure that does not fit into the discussed setting, than to a split model in which only the causal cross-connections are removed. Hence a part of the internal cross-connections is being compensated by this type of model and therefore this does not measure all the intrinsic integrated information. }

Further examples, which hint towards $\mathcal{M}_{CII} \subsetneq \mathcal{M}_{CIS}$, can be found in Section \ref{sectResults}. 

Adding the hidden variable $W$ seems not to be sufficient to approximate elements of $\mathcal{M}_{CIS}$. Now the question naturally arises whether there are other exterior influences that need to be included in order to be able to approximate $\mathcal{M}_{CIS}$. We will explore this thought by starting with the graph corresponding to the split model $\mathcal{M}_{SI}$, depicted in Figure \ref{startinggraph01} on the left. In the next step we add hidden vertices and edges to the graph in a way such that the whole graph is still a chain graph. An example for a valid hidden structure is given in Figure \ref{startinggraph01} in the middle. Since we are going to marginalize over the hidden structure, it is only important how the visible nodes are connected via the hidden nodes. In the case of the example in Figure \ref{startinggraph01} we have a directed path from $X_{1}$ to $X_{2}$ going through the hidden nodes. Therefore we are able to reduce the structure to a gray box shown on the right in Figure \ref{startinggraph01}. 
\begin{figure}[H]
\centering
\begin{tikzpicture}
\draw[line width=0.5mm] (-7.5,0) node {\footnotesize $X_{1}$};
\draw[line width=0.5mm] (-7.5,-1) node {\footnotesize $X_{2}$};
\draw[line width=0.5mm] (-5.4, 0) node {\footnotesize $Y_{1}$};
\draw[line width=0.5mm] (-5.4, -1) node {\footnotesize $Y_{2}$};
\draw[line width = 0.45mm] (-7.5,-0.3)--(-7.5,-0.7);
\draw[->,line width=0.45mm] (-7.2,0)--(-5.75,0);
\draw[->,,line width=0.45mm] (-7.2,-1)--(-5.75,-1);
\draw[line width=0.3mm] (-7.5,0) circle (8pt);
\draw[line width=0.3mm] (-7.5,-1) circle (8pt);
\draw[line width=0.3mm] (-5.4,0) circle (8pt);
\draw[line width=0.3mm] (-5.4,-1) circle (8pt);

\node[single arrow,draw=black,fill=black!10,minimum height=1cm,shape border rotate=0] at (-4.25,-0.5) {};
\node[single arrow,draw=black,fill=black!10,minimum height=1cm,shape border rotate=0] at (0.25,-0.5) {};

\draw[ gray!60, fill = gray!20]  (-3.05,2) rectangle (-0.95,0.4);
\draw[line width=0.5mm] (-3,0) node {\footnotesize $X_{1}$};
\draw[line width=0.5mm] (-3,-1) node {\footnotesize $X_{2}$};
\draw[line width=0.5mm] (-0.9, 0) node {\footnotesize $Y_{1}$};
\draw[line width=0.5mm] (-0.9, -1) node {\footnotesize $Y_{2}$};
\draw[->,line width=0.3mm] (-2.9 ,0.275)--(-2.75,0.5);
\draw[->,line width=0.3mm] (-2.4,1.03)--(-2.25,1.25);
\draw[->,line width=0.3mm] (-1.3, 0.55)--(-1.1,0.3);
\draw[->,line width=0.3mm] (-2.235, 0.75)--(-1.85,0.75);
\draw[line width=0.3mm] (-1.8, 1.279)--(-1.6,1);
\draw[line width=0.5mm] (-2.5, 0.75) node {\footnotesize $W_{1}$};
\draw[line width=0.5mm] (-1.5, 0.75) node {\footnotesize $W_{2}$};
\draw[line width=0.5mm] (-2, 1.5) node {\footnotesize $W_{3}$};
\draw[line width=0.3mm] (-2.5, 0.75) circle (8pt);
\draw[line width=0.3mm] (-1.5, 0.75) circle (8pt);
\draw[line width=0.3mm] (-2,1.5) circle (8pt);
\draw[line width = 0.45mm] (-3,-0.3)--(-3,-0.7);
\draw[->,line width=0.45mm] (-2.7,0)--(-1.25,0);
\draw[->,,line width=0.45mm] (-2.7,-1)--(-1.25,-1);
\draw[line width=0.3mm] (-3,0) circle (8pt);
\draw[line width=0.3mm] (-3,-1) circle (8pt);
\draw[line width=0.3mm] (-0.9,0) circle (8pt);
\draw[line width=0.3mm] (-0.9,-1) circle (8pt);

\draw[line width=0.5mm] (1.5,0) node {\footnotesize $X_{1}$};
\draw[line width=0.5mm] (1.5,-1) node {\footnotesize $X_{2}$};
\draw[line width=0.5mm] (3.6, 0) node {\footnotesize $Y_{1}$};
\draw[line width=0.5mm] (3.6, -1) node {\footnotesize $Y_{2}$};
\filldraw [gray!80, fill = gray!40] (2.3,0.6) rectangle (2.8,0.4);
\draw[line width = 0.45mm] (1.5,-0.3)--(1.5,-0.7);
\draw[<-,,line width=0.25mm] (3.3,0.15)--(2.8,0.4);
\draw[->,,line width=0.25mm] (1.75,0.15)--(2.25,0.4);
\draw[fill = white, white] (2.55,-0.5) circle (2pt);
\draw[->,,line width=0.45mm] (1.8,0)--(3.25,0);
\draw[->,,line width=0.45mm] (1.8,-1)--(3.25,-1);
\draw[line width=0.3mm] (1.5,0) circle (8pt);
\draw[line width=0.3mm] (1.5,-1) circle (8pt);
\draw[line width=0.3mm] (3.6,0) circle (8pt);
\draw[line width=0.3mm] (3.6,-1) circle (8pt);
\end{tikzpicture}
\caption{Example of an exterior influence on the initial graph.} \label{startinggraph01}
\end{figure}
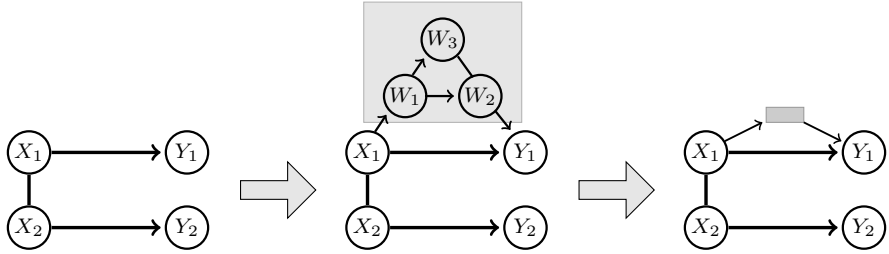
Then we use the Algorithm \ref{Algorithm} mentioned earlier, which converts a chain graph with hidden variables to a chain mixed graph reflecting the conditional independence structure of the marginalized model. This leads to a directed edge from $X_{1}$ to $X_{2}$  by marginalizing over the nodes in the hidden structures.
Seeing that this directed edge already existed, the resulting model now is a subset of $\mathcal{M}_{SI}$ and therefore does not approximate $\mathcal{M}_{CIS}$. 

Following this procedure we are able to show that adding further hidden nodes and subgraphs of hidden nodes does not lead to a chain mixed graph belonging to a model that satisfies the conditional independence statements of Property \ref{postulate1} and strictly contains $\mathcal{M}_{CII}$. 
\begin{Theorem} \label{proposition}
It is not possible to create a chain mixed graph corresponding to a model $\mathcal{M}$, such that its distributions satisfy Property \ref{postulate1} and $\mathcal{M}_{CII} \subsetneq \mathcal{M}$, by introducing a more complicated hidden structure to the graph of $\mathcal{M}_{SI}$.
\end{Theorem}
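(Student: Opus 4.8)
The plan is to prove the contrapositive in disguise: I would show that \emph{every} model $\mathcal{M}$ produced by this construction already satisfies $\mathcal{M}\subseteq\mathcal{M}_{CII}$. Since a model $\mathcal{M}$ as in the statement would by hypothesis satisfy $\mathcal{M}_{CII}\subsetneq\mathcal{M}$, combining the two inclusions forces $\mathcal{M}=\mathcal{M}_{CII}$, a contradiction. So the whole argument is a dichotomy at the level of the augmented chain graph: after marginalization it either fails Property \ref{postulate1}, or its marginalized model embeds into $\mathcal{M}_{CII}$.

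First I would set up the reduction. Let $G$ be a chain graph whose visible part is the graph of $\mathcal{M}_{SI}$ (the $X_i$ forming a complete subgraph together with the arrows $X_i\to Y_i$) and which carries an arbitrary additional hidden structure, and let $G'=\textup{Alg.\ \ref{Algorithm}}(G)$ be the resulting chain mixed graph on the visible nodes. By the description of that algorithm, every edge of $G'$ that is not already in $G$ is induced by a connecting route through the hidden part, and each such induced edge between two visible nodes is directed, undirected, or an arc $\leftrightarrow$. Hence it suffices to analyse pair-by-pair which induced edges among the visible nodes are compatible with Property \ref{postulate1}, i.e.\ with $Y_j\ci X_i\mid X_{I\setminus\{i\}}$ for all $i\neq j$. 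The case analysis, which is the technical heart, goes as follows. Any edge of any type between $X_i$ and $Y_j$ with $i\neq j$ is a single edge and therefore cannot be blocked by conditioning on $X_{I\setminus\{i\}}$, so it destroys the required independence. Any directed or undirected edge between $Y_i$ and $Y_j$ with $i\neq j$ creates, together with the already-present arrow $X_i\to Y_i$, an m-connecting path $X_i\to Y_i\to Y_j$ or $X_i\to Y_i- Y_j$ on which $Y_i$ is a non-collider, again violating Property \ref{postulate1} since $Y_i$ is not in the conditioning set. An arc $Y_i\leftrightarrow Y_j$, by contrast, is harmless: on any path $X_i\to Y_i\leftrightarrow Y_j$ the node $Y_i$ is a collider with no conditioned descendant, so the path is blocked; likewise, edges among the $X_i$ and edges between $X_i$ and $Y_i$ of the \emph{same} index never open a new active $X_i$--$Y_j$ path once $X_{I\setminus\{i\}}$ is conditioned on. One also checks that a hidden route simultaneously connected to some $X_i$ and to a $Y$-arc would make that arc dependent on $X_i$, reintroducing a forbidden $X_i$--$Y_j$ dependence. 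Conclusion of this step: a hidden structure keeps Property \ref{postulate1} exactly when, after marginalization, the only new edges among the $Y_i$ are arcs, every hidden node linked to some $X_i$ is linked only to $X_i$ and/or $Y_i$ of that same index, and those $Y$-arc--producing hidden nodes are independent of $X$.

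Next I would prove the embedding. Given such an admissible hidden structure, bundle all its nodes into a single latent variable $W$. The factorization of the augmented chain graph, restricted by admissibility, then has the form $\hat P(x,y,w)=P(x)P(w)\prod_{i}P(y_i\mid x_i,w)$: the hidden nodes touching only the $X$-side (or only a same-index pair $X_i,Y_i$) can be folded into the unrestricted marginal $P(x)$, since in $\mathcal{M}_{SI}$, and hence in $\mathcal{M}_{CII}$, both $Q(x)$ and each $Q(y_i\mid x_i)$ range freely; and the remaining hidden nodes are, by admissibility, independent of $X$. Marginalizing $W$ out then puts $P(z)$ into exactly the form defining $\mathcal{M}_{CII}^{\lvert\mathcal{W}\rvert}$, so $P\in\mathcal{M}_{CII}$, and passing to the closure gives $\mathcal{M}\subseteq\mathcal{M}_{CII}$, as required. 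In particular the largest admissible graph is the one underlying $\mathcal{M}_{CII}$ itself, the chain mixed graph of Figure \ref{MarginalizedModel}, realised by taking $W$ to be a single common parent of all the $Y_i$.

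The main obstacle I anticipate is the edge-end case analysis in the second step: it requires stating the m-separation (and, in the chain-graph part, c-separation/moralization) criterion for chain mixed graphs precisely enough to be certain that \emph{every} configuration of tails, arrowheads and undirected ends at the relevant nodes has been accounted for, and that no multi-edge route combining several induced edges with hidden connecting routes can smuggle an active $X_i$--$Y_j$ path past the conditioning set $X_{I\setminus\{i\}}$. A secondary subtlety is making the ``fold the $X$-side latents into $P(x)$'' step fully rigorous, i.e.\ showing that admissibility genuinely forces the $Y$-arc--producing latents to be independent of $X$, so that $P(w)$ factors off exactly in the way demanded by the definition of $\mathcal{M}_{CII}^{m}$.
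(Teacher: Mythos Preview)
Your proposal is correct and follows the same overall strategy as the paper---an exhaustive case analysis showing that any hidden extension of the $\mathcal{M}_{SI}$ graph either violates Property~\ref{postulate1} or yields, after marginalization, nothing beyond $\mathcal{M}_{CII}$---but the decomposition is organized differently. The paper works bottom-up: it enumerates hidden sub-structures by the set of visible nodes they touch (pairs, then triples, then all four) and by the five connection types (undirected path, two directed paths, collider, common influence), applying Algorithm~\ref{Algorithm} in each case and also checking for induced semi-directed cycles. You instead work top-down from the induced CMG, classifying possible new edges among the visible nodes by $c$-separation and arguing directly that only $Y_i\leftrightarrow Y_j$ arcs survive. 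Your route reaches the key conclusion more quickly and your explicit embedding step (bundling the surviving latents into a single $W$ and reading off the $\mathcal{M}_{CII}^{m}$ factorization) is a clean addition that the paper leaves implicit; the paper's enumeration, on the other hand, is more concrete about \emph{which} augmenting chain graphs actually realise each forbidden edge and also catches the cycle and containment-of-$\mathcal{M}_I$ obstructions that your edge-level analysis does not separate out.

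One small imprecision: your admissibility summary ``every hidden node linked to some $X_i$ is linked only to $X_i$ and/or $Y_i$ of that same index'' is too strong---hidden nodes touching several $X_i$'s (and no $Y$'s), or touching $X_i$ and $Y_i$ for a single $i$, are also admissible; the paper treats these explicitly and shows they collapse into $P(x)$ or into $\mathcal{M}_{SI}$ after marginalization. You effectively recover this in your embedding paragraph (``touching only the $X$-side \ldots can be folded into the unrestricted marginal $P(x)$''), so it is a phrasing issue rather than a genuine gap. The obstacle you anticipate---making the separation-criterion case analysis airtight across all combinations of edge-ends and multi-edge walks---is exactly where the labour sits in both proofs.
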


In conclusion, assuming that Conjecture \ref{Conj} holds, we have the following relations among the different presented models.
\begin{equation*}
\begin{split}
&\mathcal{M}_{I} \subsetneq \mathcal{M}_{G} \\
&\mathcal{M}_{I} \subsetneq \mathcal{M}_{CII} \subsetneq \mathcal{M}_{CIS} \\
&\mathcal{M}_{SI} \subsetneq \mathcal{M}_{CII} \subsetneq \mathcal{M}_{CIS}
\end{split}
\end{equation*}
A sketch of the inclusion properties among the models is displayed in Figure \ref{relationship}.
\begin{figure}[H]
\centering
\scalebox{0.9}{
\begin{tikzpicture}
    \draw[fill = gray, opacity = 0.2] (-1,0) ellipse (3cm and 2cm);
    \draw[fill = gray, opacity = 0.3] (-1.5,0) ellipse (2.25cm and 1.5cm);
    \draw[] (-0.25,0) node{$\mathcal{M}_{CII}$};
     \draw[] (1.4,0) node{$\mathcal{M}_{CIS}$};
	
    %\draw[fill = white ] (-2,1.5) ellipse (1cm and 2cm);
    %MG
     \draw[fill = black, opacity = 0.2] (-2,1.5) ellipse (1cm and 2cm);
     %MI
     \draw[gray, fill = gray, opacity = 0.6] (-2,0.45) ellipse (0.8cm and 0.75cm);
	%MSI    
     \draw[fill = gray, gray, opacity = 0.5] (-2,-0.35) ellipse (1cm and 0.8cm);
     \draw[] (-2,0.75) node{$\mathcal{M}_{I}$};
     \draw[] (-2,3) node{$\mathcal{M}_{G}$};
     \draw[] (-2,-0.8) node{$\mathcal{M}_{SI}$};
\end{tikzpicture} }
\caption{Sketch of the relationship between the manifolds corresponding to the different measures. } \label{relationship}
\end{figure}
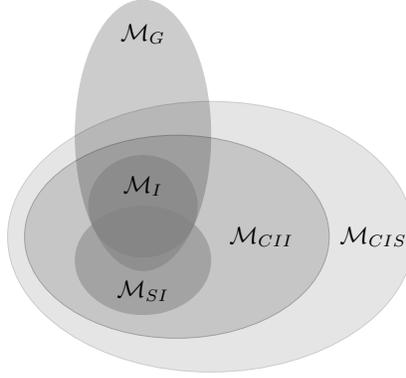
Every set that lies inside $\mathcal{M}_{CIS}$ satisfies Property \ref{postulate1} and every set that completely contains $\mathcal{M}_{I}$ fulfills Property \ref{postulate2}.

\subsubsection{em-Algorithm} \label{SectEm}
The calculation of the measure $\Phi_{CII}^{m}$ with
\begin{equation*}
\Phi_{CII}^{m} = \inf\limits_{Q \in \mathcal{M}_{CII}^{m}} D_{\mathcal{Z}}(\tilde{P} \parallel Q)
\end{equation*} 
can be done by the em-algorithm, a well known information geometric algorithm. It was proposed by Csiszár and Tusnády in 1984 in Reference \cite{Csiszar} and its usage in the context of neural networks with hidden variables was described for example by Amari et al.~in Reference \cite{EM}. The expectation-maximization EM-algorithm \cite{Dempster} used in statistics is equivalent to the em-algorithm in many cases, including this one, as we will see below. A detailed discussion of the relationship of these algorithms can be found in Reference \cite{Emem}.

In order to calculate the distance between the distribution $\tilde{P}$ and the set $\mathcal{M}_{CII}^{m}$ on $\mathcal{Z}$ we will make use of the extended space of distributions on $\mathcal{Z} \times \mathcal{W}^{m}$, $\mathcal{P}(\mathcal{Z} \times \mathcal{W}^{m})$.
Let $\mathcal{M}_{W|Z}$ be the set of all distributions on $\mathcal{Z} \times \mathcal{W}^{m}$ that have $\mathcal{Z}$-marginals equal to the distribution of the whole system $\tilde{P}$
\begin{equation*}
\begin{split}
\mathcal{M}_{W|Z} =& \left\lbrace P \in \mathcal{P}(\mathcal{Z} \times \mathcal{W}^{m}) \mid P(z) = \tilde{P}(z), \, \forall z \in \mathcal{Z} \right\rbrace \\
 =& \left\lbrace P \in \mathcal{P}(\mathcal{Z} \times \mathcal{W}^{m}) \mid P(z,w) = \tilde{P}(z)P(w \vert z), \, \forall (z,w) \in \mathcal{Z} \times \mathcal{W}^{m} \right\rbrace.
 \end{split}
\end{equation*} 
This is an $ m$-flat submanifold since it is linear w.r.t $P(w \vert z)$. Therefore there exists a unique $e$-projection to $\mathcal{M}_{W|Z}$.

The second set that we are going to use is the set $\mathcal{E}^{m}$ of distributions that factor according to the split model including the common exterior influence. We have seen this set before in Section \ref{Sectgroundtruth}.
\begin{equation}
\begin{split}
\mathcal{E}^{m} &= \left\lbrace P \in \mathcal{P}(\mathcal{Z} \times \mathcal{W}^{m}) \mid P(z,w) = P(x) \prod\limits_{i=1}^{n}P(y_{i} \vert x_{i}, w)P(w), \, \forall (z,w) \in \mathcal{Z} \times \mathcal{W}^{m} \right\rbrace  \label{setE}.
\end{split}
\end{equation}
%This set is in general not $e$-flat, it is a stratified exponential family, which is a finite union of curved exponential families satisfying some regularity conditions. For a more detailed desription see  \cite{Stratified} (\textbf{kurzer Absatz im Appendix über linear/curved und stratified exp Families?}). 
This set is in general not $e$-flat, but we will show that there is a unique $m$-projection to it.
We are able to use these sets instead of $\tilde{P}$ and $\mathcal{M}_{CII}^{m}$ because of the following result.

\begin{Theorem}[Theorem 7 from Reference \cite{EM}] \label{ThmEquivMin}
The minimum divergence between $\mathcal{M}_{W|Z}$ and $\mathcal{E}^{m}$ is equal to the minimum divergence between $\tilde{P}$ and $\mathcal{M}_{CII}^{m}$ in the visible manifold
\begin{equation*}
\inf\limits_{P \in \mathcal{M}_{W|Z}, Q \in \mathcal{E}^{m}} D_{\mathcal{Z}\times \mathcal{W}^{m}}(P \parallel Q) = \inf\limits_{\tilde{Q} \in \mathcal{M}_{CII}^{m}} D_{\mathcal{Z}}(\tilde{P} \parallel \tilde{Q}).
\end{equation*}
\end{Theorem}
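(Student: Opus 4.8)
The plan is to reduce the statement to the chain rule for the Kullback--Leibler divergence on the product space $\mathcal{Z}\times\mathcal{W}^{m}$ together with two elementary structural observations about the sets involved. Recall that for any $P,Q\in\mathcal{P}(\mathcal{Z}\times\mathcal{W}^{m})$ with $\mathcal{Z}$-marginals $P_{Z},Q_{Z}$ one has the exact identity
\[
D_{\mathcal{Z}\times\mathcal{W}^{m}}(P\parallel Q)=D_{\mathcal{Z}}(P_{Z}\parallel Q_{Z})+\sum_{z\in\mathcal{Z}}P_{Z}(z)\,D_{\mathcal{W}^{m}}\!\big(P(\,\cdot\mid z)\parallel Q(\,\cdot\mid z)\big),
\]
in which both summands are non-negative. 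The two structural facts I would isolate are: (i) by definition $\mathcal{M}_{W|Z}$ consists exactly of the joint distributions whose $\mathcal{Z}$-marginal equals $\tilde{P}$, and (ii) the definition of $\mathcal{M}_{CII}^{m}$ says precisely that the marginalization map $Q\mapsto Q_{Z}$ carries $\mathcal{E}^{m}$ onto $\mathcal{M}_{CII}^{m}$, so every $\tilde{Q}\in\mathcal{M}_{CII}^{m}$ is the $\mathcal{Z}$-marginal of some $Q\in\mathcal{E}^{m}$.

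For the inequality ``$\ge$'', take arbitrary $P\in\mathcal{M}_{W|Z}$ and $Q\in\mathcal{E}^{m}$. By (i) and (ii) we have $P_{Z}=\tilde{P}$ and $Q_{Z}\in\mathcal{M}_{CII}^{m}$, so discarding the non-negative second summand of the decomposition gives $D_{\mathcal{Z}\times\mathcal{W}^{m}}(P\parallel Q)\ge D_{\mathcal{Z}}(\tilde{P}\parallel Q_{Z})\ge\inf_{\tilde{Q}\in\mathcal{M}_{CII}^{m}}D_{\mathcal{Z}}(\tilde{P}\parallel\tilde{Q})$. Taking the infimum over $P$ and $Q$ proves this half.

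For the reverse inequality ``$\le$'', let $\tilde{Q}\in\mathcal{M}_{CII}^{m}$ attain the infimum on the right ($\mathcal{M}_{CII}^{m}$ is the continuous image of a product of simplices under marginalization, hence compact; alternatively one may work with near-minimizers and pass to the limit). Use (ii) to choose $Q\in\mathcal{E}^{m}$ with $Q_{Z}=\tilde{Q}$, and define $P\in\mathcal{P}(\mathcal{Z}\times\mathcal{W}^{m})$ by $P(z,w):=\tilde{P}(z)\,Q(w\mid z)$. Then $P_{Z}=\tilde{P}$, so $P\in\mathcal{M}_{W|Z}$, and $P(\,\cdot\mid z)=Q(\,\cdot\mid z)$ for every $z$, so the second summand of the decomposition vanishes and $D_{\mathcal{Z}\times\mathcal{W}^{m}}(P\parallel Q)=D_{\mathcal{Z}}(\tilde{P}\parallel\tilde{Q})$. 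The support of $P$ is automatically contained in that of $Q$, since $Q(z,w)=\tilde{Q}(z)Q(w\mid z)$ with $\tilde{Q}$ strictly positive, so the divergence is finite and the decomposition is legitimate. Hence $\inf_{P\in\mathcal{M}_{W|Z},\,Q\in\mathcal{E}^{m}}D_{\mathcal{Z}\times\mathcal{W}^{m}}(P\parallel Q)\le D_{\mathcal{Z}}(\tilde{P}\parallel\tilde{Q})=\inf_{\tilde{Q}\in\mathcal{M}_{CII}^{m}}D_{\mathcal{Z}}(\tilde{P}\parallel\tilde{Q})$, and combining the two inequalities gives the equality.

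The argument is conceptually short, so there is no single deep obstacle; the points that need care are mechanical. The main one is verifying the chain-rule decomposition and checking that the conditional distributions appearing in it are well defined and that the relevant support inclusions hold — this is exactly where the paper's blanket strict-positivity assumption is used, and where one must notice that the lifted $P$ above need not itself be strictly positive even though $D(P\parallel Q)$ is finite. A secondary point is not to assume a priori that the right-hand infimum is attained; the compactness remark above settles this, and the near-minimizer variant sidesteps it entirely. Finally, it is worth stressing that existence and uniqueness of the $e$-projection onto $\mathcal{M}_{W|Z}$ and of the $m$-projection onto $\mathcal{E}^{m}$ — the properties that actually make the em-algorithm well posed — are not needed here, since the decomposition already pins down the value of the joint infimum in terms of the visible one.
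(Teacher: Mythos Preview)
Your proof is correct and follows essentially the same approach as the paper: both arguments rest on the chain rule $D_{\mathcal{Z}\times\mathcal{W}^{m}}(P\parallel Q)=D_{\mathcal{Z}}(\tilde{P}\parallel Q_{Z})+D_{\mathcal{W}\mid\mathcal{Z}}(P\parallel Q)$ for $P\in\mathcal{M}_{W|Z}$, and on the observation that the conditional term can be made to vanish by choosing $P(w\mid z)=Q(w\mid z)$. Your version is more carefully written---you separate the two inequalities, address attainment of the infimum, and flag the support and positivity issues---whereas the paper simply chains the equalities in one line, but the underlying idea is identical.
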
 \label{equivalence}
\begin{proof}[Proof of Theorem \ref{ThmEquivMin}]
Let $P,Q \in \mathcal{P}(\mathcal{Z} \times \mathcal{W}^{m})$, using the chain-rule for KL-divergence leads to
\begin{equation*}
D_{\mathcal{Z}\times \mathcal{W}^{m}}(P \parallel Q) = D_{\mathcal{Z}}(P \parallel Q) + D_{\mathcal{W} \vert \mathcal{Z}}(P \parallel Q),
\end{equation*}
with 
\begin{equation*}
 D_{\mathcal{W} \vert \mathcal{Z}}(P \parallel Q) = \sum\limits_{(z, w) \in \mathcal{Z} \times\mathcal{W}^{m}} P(z,w) log \, \dfrac{P(w \vert z)}{Q(w \vert z)}.
\end{equation*}
This results in 
\begin{equation*}
\begin{split}
\inf\limits_{P \in \mathcal{M}_{W|Z}, Q \in \mathcal{E}^{m}} D_{\mathcal{Z} \times \mathcal{W}^{m}}(P \parallel Q) &= \inf\limits_{P \in \mathcal{M}_{W|Z} ,Q \in \mathcal{E}^{m}} \left\lbrace  D_{\mathcal{Z}}(P \parallel Q) + D_{\mathcal{W} \vert \mathcal{Z}}(P \parallel Q) \right\rbrace \\
&= \inf\limits_{P \in \mathcal{M}_{W|Z} ,Q \in \mathcal{E}^{m}} \left\lbrace  D_{\mathcal{Z}}(\tilde{P} \parallel Q) + D_{\mathcal{W} \vert \mathcal{Z}}(P \parallel Q) \right\rbrace \\
& =\inf\limits_{Q \in \mathcal{M}_{CII}^{m}} D_{\mathcal{Z}}(\tilde{P} \parallel Q). 
\end{split}
\end{equation*}
\end{proof}

The em-algorithm is an iterative algorithm that first performs an $e$-projection to $\mathcal{M}_{W|Z}$ and then an $m$-projection to $\mathcal{E}^{m}$ repeatedly. Let $Q_{0} \in \mathcal{E}^{m}$ be an arbitrary starting point and define $P_{1}$ as the $e$-projection of $Q_{0}$ to $\mathcal{M}_{W|Z}$
\begin{equation*}
P_{1} = \arginf\limits_{P \in \mathcal{M}_{W|Z}} D_{\mathcal{Z} \times \mathcal{W}^{m}}(P \parallel Q_{0}).
\end{equation*}
Now we define $Q_{1}$ as the $m$-projection of $P_{1}$ to $\mathcal{E}^{m}$
\begin{equation*}
Q_{1} = \arginf\limits_{Q \in \mathcal{E}^{m}} D_{\mathcal{Z} \times \mathcal{W}^{m}}(P_{1} \parallel Q).
\end{equation*}
Repeating this leads to
\begin{equation*}
P_{i+1} = \arginf\limits_{P \in \mathcal{M}_{W|Z}} D_{\mathcal{Z} \times \mathcal{W}^{m}}(P \parallel Q_{i}), \quad
Q_{i+1} = \arginf\limits_{Q \in \mathcal{E}^{m}} D_{\mathcal{Z} \times \mathcal{W}^{m}}(P_{i+1} \parallel Q).
\end{equation*}
The correspondence between these projections in the extended space $\mathcal{P}(\mathcal{Z} \times \mathcal{W}^{m})$ and one $m$-projection in $\mathcal{P}(\mathcal{Z})$ is illustrated in Figure  \ref{emprojj}.
\begin{figure}[H]
\centering
\includegraphics[scale=0.3]{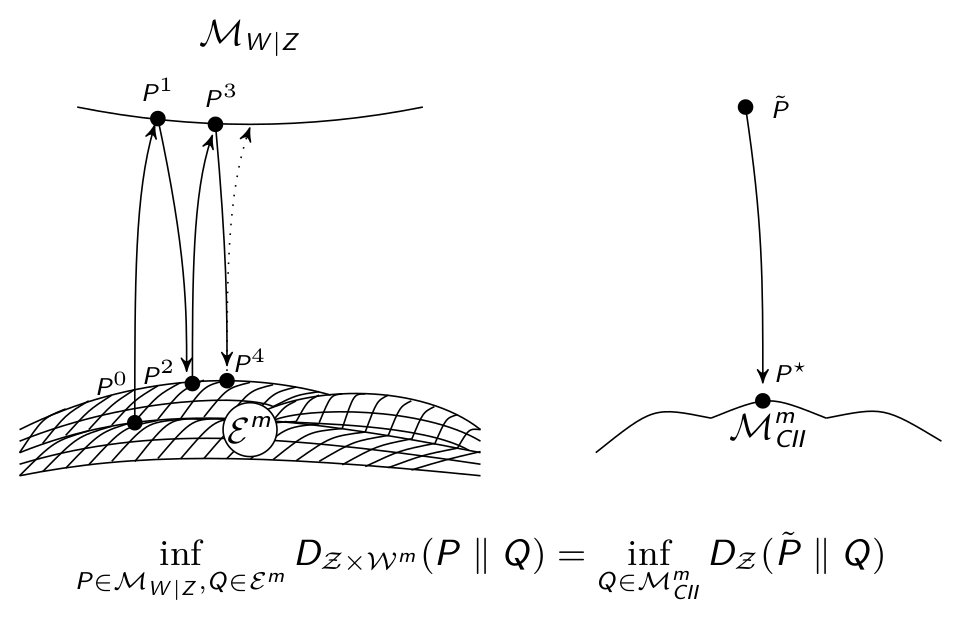}
\caption{Sketch of the em-Algorithm.}\label{emprojj}
\end{figure}
The algorithm iterates between the extended spaces $\mathcal{M}_{W \vert Z}$ and $\mathcal{E}^{m}$ on the left of Figure \ref{emprojj}. Using Theorem \ref{equivalence} we gain that this minimization is equivalent to the minimization between $\tilde{P}$ and $\mathcal{M}_{CII}^{m}$.
The convergence of this algorithm is given by the following result.
\begin{Proposition}[Theorem 8 from Reference \cite{EM}] \label{emProp}
The monotonic relations 
\begin{equation*}
D_{\mathcal{Z} \times \mathcal{W}^{m}}(P_{i} \parallel Q_{i}) \geq D_{\mathcal{Z} \times \mathcal{W}^{m}}(P_{i+1} \parallel Q_{i}) \geq D_{\mathcal{Z} \times \mathcal{W}^{m}}(P_{i+1} \parallel Q_{i+1})
\end{equation*}
hold, where equality holds only for the fixed points $(\hat{P}, \hat{Q}) \in \mathcal{M}_{W|Z} \times \mathcal{E}^{m}$ of the projections
\begin{equation*}
\begin{split}
\hat{P} &= \arginf\limits_{P \in \mathcal{M}_{W|Z}} D_{\mathcal{Z} \times \mathcal{W}^{m}}(P \parallel \hat{Q}) \\
\hat{Q} &= \arginf\limits_{Q \in \mathcal{E}^{m}} D_{\mathcal{Z} \times \mathcal{W}^{m}}(\hat{P} \parallel Q).
\end{split}
\end{equation*}
\end{Proposition}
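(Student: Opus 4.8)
The plan is to derive both inequalities directly from the variational definition of the two projection steps, and then read off the equality condition from the uniqueness of those projections. First I would record that, by construction, the iterates alternate between the two sets: the starting point satisfies $Q_0 \in \mathcal{E}^m$, every $e$-projection step lands in $\mathcal{M}_{W|Z}$, and every $m$-projection step lands in $\mathcal{E}^m$. Hence $P_i \in \mathcal{M}_{W|Z}$ for all $i \geq 1$ and $Q_i \in \mathcal{E}^m$ for all $i \geq 0$, so each iterate is a legitimate competitor in the next minimisation.

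For the first inequality I would use that $P_{i+1}$ is defined as $\arginf_{P \in \mathcal{M}_{W|Z}} D_{\mathcal{Z} \times \mathcal{W}^m}(P \parallel Q_i)$. Since $P_i$ already lies in $\mathcal{M}_{W|Z}$, it is one of the candidates being minimised over, whence $D_{\mathcal{Z} \times \mathcal{W}^m}(P_{i+1} \parallel Q_i) \leq D_{\mathcal{Z} \times \mathcal{W}^m}(P_i \parallel Q_i)$. Symmetrically, $Q_{i+1} = \arginf_{Q \in \mathcal{E}^m} D_{\mathcal{Z} \times \mathcal{W}^m}(P_{i+1} \parallel Q)$ while $Q_i \in \mathcal{E}^m$ is a candidate, giving $D_{\mathcal{Z} \times \mathcal{W}^m}(P_{i+1} \parallel Q_{i+1}) \leq D_{\mathcal{Z} \times \mathcal{W}^m}(P_{i+1} \parallel Q_i)$. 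Chaining the two yields the asserted monotone decrease; nothing beyond invoking the meaning of an $\arginf$ is needed.

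For the equality clause the essential input is uniqueness of each projection. The set $\mathcal{M}_{W|Z}$ is $m$-flat, being linear in the conditionals $P(w \mid z)$, so its $e$-projection is unique; and $\mathcal{E}^m$ admits a unique $m$-projection, as established alongside its definition. Equality in the first inequality then forces $P_i$ to also attain $\inf_{P \in \mathcal{M}_{W|Z}} D_{\mathcal{Z} \times \mathcal{W}^m}(P \parallel Q_i)$, so uniqueness gives $P_i = P_{i+1}$; equality in the second gives $Q_i = Q_{i+1}$ in the same way. Writing $\hat{P} := P_i = P_{i+1}$ and $\hat{Q} := Q_i = Q_{i+1}$, the defining relations for $P_{i+1}$ and $Q_{i+1}$ collapse precisely into the two fixed-point equations in the statement, and conversely any such fixed point produces equality throughout.

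The argument is thus immediate once the projections are known to be well-defined and unique minimisers, so the only genuine work sits upstream in those structural facts. Uniqueness of the $e$-projection is routine from the $m$-flatness of $\mathcal{M}_{W|Z}$; the harder part I expect is the $m$-projection onto $\mathcal{E}^m$, which is not $e$-flat, so its uniqueness cannot be inferred from a standard flat-projection theorem and must be secured separately (as the surrounding text signals is handled elsewhere). With those in hand, no further obstacle to the monotonicity and equality statements remains.
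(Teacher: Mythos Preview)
Your proposal is correct and is precisely the argument the paper has in mind: its proof reads in full ``This is immediate, because of the definitions of the $e$- and $m$-projections,'' and you have simply spelled out what that one line entails. Your additional remarks on uniqueness (via $m$-flatness of $\mathcal{M}_{W|Z}$ and the separately established uniqueness of the $m$-projection onto $\mathcal{E}^m$) correctly locate where the equality clause comes from.
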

\begin{proof}[Proof of Proposition \ref{emProp}]
This is immediate, because of the definitions of the $e$- and $m$-projections.
\end{proof}
Hence this algorithm is guaranteed to converge towards a minimum, but this minimum might be local. We will see examples of that in Section \ref{sectResults}.

In order to use this algorithm to calculate $\Phi_{CII}$ we first need to determine how to perform an $e$- and $m$-projection in this case.
The $e$-projection from $Q \in \mathcal{E}^{m}$ to $\mathcal{M}_{W|Z}$ is given by
\begin{equation*}
P(z,w) = \tilde{P}(z) Q(w \vert z), 
\end{equation*}
for all $(z,w) \in \mathcal{Z} \times \mathcal{W}^{m}$. This is the projection because of the following equality
\begin{equation*}
\begin{split}
D_{\mathcal{Z} \times \mathcal{W}^{m}}(P \parallel Q) &= \sum\limits_{(z,w) \in \mathcal{Z} \times \mathcal{W}^{m}} P(z,w) \, log \, \dfrac{P(z,w)}{Q(z,w)} \\
&= \sum\limits_{z \in \mathcal{Z}} \tilde{P}(z) \, log \, \dfrac{\tilde{P}(z)}{Q(z)} + \sum\limits_{(z,w) \in \mathcal{Z} \times \mathcal{W}^{m}} P(z,w) \, log \, \dfrac{P(w \vert z)}{Q(w \vert z)}.
\end{split}
\end{equation*}
The first addend is a constant for a fixed distribution $\tilde{P}$ and the second addend is equal to 0 if and only if $P(w \vert z) = Q(w \vert z)$. Note that this means that the conditional expectation of $W$ remains fixed during the $e$-projection. This is an important point, because this guarantees the equivalence to the EM algorithm and therefore the convergence towards the MLE. For a proof and examples see Theorem 8.1 in Reference \cite{amari} and Section 6 in Reference \cite{Emem}.

After discussing the $e$-projection, we now consider the $m$-projection.
\begin{Proposition} \label{ColmProj}
The $m$-projection from $P \in \mathcal{M}_{W|Z}$ is given by
\begin{equation*}
Q(z,w) = P(x) \prod\limits_{i=1}^{n}P(y_{i} \vert x_{i}, w)P(w)
\end{equation*}
for all $(z,w) \in \mathcal{Z} \times \mathcal{W}^{m}$. 
\end{Proposition}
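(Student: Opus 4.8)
The plan is to minimize the divergence directly, exploiting the product parametrization of $\mathcal{E}^{m}$. First I would expand
\begin{equation*}
D_{\mathcal{Z}\times\mathcal{W}^{m}}(P\parallel Q)=\sum_{(z,w)}P(z,w)\log P(z,w)-\sum_{(z,w)}P(z,w)\log Q(z,w),
\end{equation*}
noting that the first sum is a constant that does not depend on $Q$, so the $m$-projection is the minimizer of the cross-entropy term $-\sum_{(z,w)}P(z,w)\log Q(z,w)$. For $Q\in\mathcal{E}^{m}$ we substitute $\log Q(z,w)=\log Q(x)+\sum_{i=1}^{n}\log Q(y_{i}\mid x_{i},w)+\log Q(w)$ and marginalize $P$ over the variables that do not appear in each factor. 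This decomposes the cross-entropy into three groups,
\begin{equation*}
-\sum_{x}P(x)\log Q(x)\;-\;\sum_{i=1}^{n}\sum_{x_{i},y_{i},w}P(x_{i},y_{i},w)\log Q(y_{i}\mid x_{i},w)\;-\;\sum_{w}P(w)\log Q(w),
\end{equation*}
and, rewriting the middle group as $-\sum_{i}\sum_{x_{i},w}P(x_{i},w)\sum_{y_{i}}P(y_{i}\mid x_{i},w)\log Q(y_{i}\mid x_{i},w)$, each inner sum is again a cross-entropy.

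The key point is that the parameters $Q(x)$, $Q(w)$, and the conditionals $Q(y_{i}\mid x_{i},w)$ range freely and independently over $\mathcal{E}^{m}$: any choice of a distribution on $\mathcal{X}$, a distribution on $\mathcal{W}^{m}$, and conditional distributions $Q(y_{i}\mid\,\cdot\,,\cdot)$ yields an element of $\mathcal{E}^{m}$, and conversely. Hence the three groups can be minimized separately, and the middle group can in turn be minimized separately for each fixed value of $(x_{i},w)$. Each resulting subproblem has the form $\min_{s}\big(-\sum_{a}r(a)\log s(a)\big)$ over probability vectors $s$, which by the Gibbs inequality (equivalently, non-negativity of the KL-divergence, with equality iff $s=r$) is solved uniquely by $s=r$. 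Therefore the optimum is attained at $Q(x)=P(x)$, $Q(w)=P(w)$, and $Q(y_{i}\mid x_{i},w)=P(y_{i}\mid x_{i},w)$, which gives exactly $Q(z,w)=P(x)\prod_{i=1}^{n}P(y_{i}\mid x_{i},w)P(w)$; combining the independent subproblems shows this is the global minimizer, and uniqueness in each subproblem (using that $P(z)=\tilde P(z)>0$, so the relevant marginals of $P$ are strictly positive) yields uniqueness of the $m$-projection, which also settles the existence-and-uniqueness claim made earlier for $\mathcal{E}^{m}$ even though $\mathcal{E}^{m}$ is not $e$-flat.

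I do not expect a serious obstacle here; the argument is essentially a blockwise Gibbs inequality. The only step needing a little care is justifying that the three parameter blocks really are unconstrained and non-interacting, so that blockwise minimization legitimately produces the joint minimum, together with the minor support bookkeeping for the conditional block: for a pair $(x_{i},w)$ with $P(x_{i},w)=0$ the corresponding term vanishes and the conditional $Q(y_{i}\mid x_{i},w)$ is immaterial, so the displayed formula is well defined as the projection regardless of how those (irrelevant) conditionals are fixed.
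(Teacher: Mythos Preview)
Your proposal is correct and follows essentially the same route as the paper: decompose $D_{\mathcal{Z}\times\mathcal{W}^{m}}(P\parallel Q)$ into the (constant) negative entropy of $P$ plus three cross-entropy blocks for $Q(x)$, $Q(y_{i}\mid x_{i},w)$, and $Q(w)$, then apply the Gibbs inequality to each block to identify the unique minimizer. Your write-up is in fact slightly more careful than the paper's, since you explicitly justify the independence of the parameter blocks and handle the support bookkeeping for the conditionals.
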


The last remaining decision to be made before calculating $\Phi_{CII}$ is the choice of the initial distribution. Since it depends on the initial distribution whether the algorithm converges towards a local or global minimum, it is important to take the minimal outcome of multiple runs.
One class of starting points that immediately lead to an equilibrium, which is in general not minimal, are the ones in which $Z$ and $W$ are independent $P^{0}(z,w) = P^{0}(z) P^{0}(w)$. It is easy to check that the algorithm converges here to the fixed point $\hat{P}$
\begin{equation*}
\begin{split}
\hat{P}(z,w) &=  \tilde{P}(x) \dfrac{1}{\vert \mathcal{W}^{m} \vert} \prod\limits_{i}^{n} \tilde{P}(y_{i} \vert x_{i}) \\
\hat{P}(z) &= \tilde{P}(x) \prod\limits_{i}^{n} \tilde{P}(y_{i} \vert x_{i}).
\end{split}
\end{equation*}
Note that this is the result of the $m$-projection of $\tilde{P}$ to $\mathcal{M}_{SI}$, the manifold belonging to $\Phi_{SI}$.

\subsection{Comparison} \label{SectComp}
In order to compare the different measures, we need a setting in which we generate the probability distributions of full systems. We chose to use weighted Ising models as described in the next section.
\subsubsection{Ising Model}
The distributions used to compare the different measures in the next chapter are generated by weighted Ising models, 
also known as binary auto-logistic models as described in Reference \cite{Boltz} Example 3.2.3. Let us consider $n$ binary variables $X = (X_{1}, \dots, X_{n})$, $\mathcal{X} = \{-1,1\}^{n}$. The matrix $V \in \mathbb{R}^{n \times n}$ contains the weights $v_{i j}$ of the connection from $X_{i}$ to $Y_{j}$ as displayed in Figure \ref{Boltzmann}. Note that this figure is not a graphical model corresponding to the stationary distribution, but merely displays the connections of the conditional distribution of $Y_{i} = y_{i}$ given $X = x$ with the respective weights 
\begin{equation}
P(y_{j} \vert x ) = \dfrac{1}{ 1 + e^{- 2 \beta \sum\limits_{i=1}^{n} v_{i j} x_{i} y_{j} } }. \label{kernel}
\end{equation}
The inverse temperature $\beta > 0$ regulates the coupling strength between the nodes. For $\beta $ close to zero the different nodes are almost independent and as $\beta$ grows the connections become stronger.
% Letting $\beta \rightarrow \infty$ leads to\textbf{ an almost deterministic distribution.}

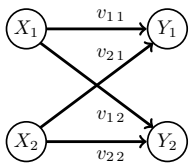
\begin{figure}[H]
\centering
\scalebox{0.75}{
\begin{tikzpicture}[rounded corners]
\draw[] (0,0) node {$X_{1}$};
\draw[] (1.5,0.25) node {$v_{1 \, 1}$};
\draw[] (1.5,-0.45) node {$v_{2 \, 1}$};
\draw[] (1.5,-2.25) node {$v_{2 \, 2}$};
\draw[] (1.5,-1.55) node {$v_{1 \, 2}$};
\draw[->,line width=0.5mm] (0.35,0)--(2.133,0);
\draw[->,line width=0.5mm] (0.35,-2)--(2.133,-2);
\draw[->, line width=0.5mm] (0.25,-0.25)--(2.25,-1.75);
\draw[->,line width=0.5mm] (0.25,-1.75)--(2.25,-0.25);
\draw[] (0,-2) node {$X_{2}$};
\draw[] (2.5,0) node { $Y_{1}$};
\draw[] (2.5,-2) node { $Y_{2}$};
\draw[line width=0.3mm] (0,0) circle (10pt);
\draw[line width=0.3mm] (0,-2) circle (10pt);
\draw[line width=0.3mm] (2.5,0) circle (10pt);
\draw[line width=0.3mm] (2.5,-2) circle (10pt);

\end{tikzpicture} }
\caption{The weights corresponding to the connections for $n = 2$. } \label{Boltzmann}
\end{figure}
We are calculating the stationary distribution $\hat{P}$ by starting with a random initial distribution $P^{0}$ and then multiplying by (\ref{kernel}) in the following way
\begin{equation*}
P^{t+1}(x) = \sum\limits_{x \in \mathcal{X}} P^{t}(x) \cdot \prod\limits_{j = 1}^{n} P(y_{i} \vert x),
\end{equation*}
this leads to
\begin{equation*}
\hat{P} = \lim\limits_{t \rightarrow \infty} P^{t}.
\end{equation*}
There always exists a unique stationary distribution, see for instance Reference \cite{Boltz}, Theorem 5.1.2 . 

\subsubsection{Results} \label{sectResults}

In this section we are going to compare the different measures experimentally. {Note that we do not have an exterior influence in these examples, so that $\Phi_{T} = \Phi_{SI}$ holds.}

To distinguish between the Causal Information Integration $\Phi_{CII}$ calculated with different sized state spaces of $W$, we will denote 
\begin{equation*}
\Phi_{CII}^{m} = \inf\limits_{Q \in \mathcal{M}_{CII}^{m}} D_{\mathcal{Z}}(\tilde{P} \parallel Q).
\end{equation*}  
We start with the smallest example possible, with $n = 2$, and the weight matrix
\begin{equation*}
V = \begin{pmatrix}
0.0084181 & -0.2401545 \\
0.39270161 & 0.37198751
\end{pmatrix}
\end{equation*}
shown in Figure \ref{Boltz2}. In this example every measure is bounded by $\Phi_{I}$ and the measures $\Phi_{I}, \Phi_{G}$ and $\Phi_{SI}$ display a limit behavior different from $\Phi_{CIS}$ and the $\Phi_{CII}$. The state spaces of $W$ have the size 2, 3, 4, 36 and 92 and the respective measures are displayed in shades of blue that get darker as the state space gets larger. In every case the em-algorithm has been initiated 100 times with a random input distribution in order to find a global minimum. {Minimizing over the outcome of 100 different runs turns out to be sufficient, at least empirically, to reveal the behavior of the global minima.} On the right side of this figure, we are able to see the difference between $\Phi_{CIS}$ and $\Phi_{CII}$. Considering the precision of the algorithms we assume that a difference smaller than 5e-07 is approx.~zero. We can see that in a region from $\beta = 15$ to $\beta = 25$ the measures differ even in the case of 92 hidden states. So this small case already hints towards $\mathcal{M}_{CII} \subsetneq \mathcal{M}_{CIS}$.
\begin{figure}[H]
\includegraphics[width = 0.9\textwidth]{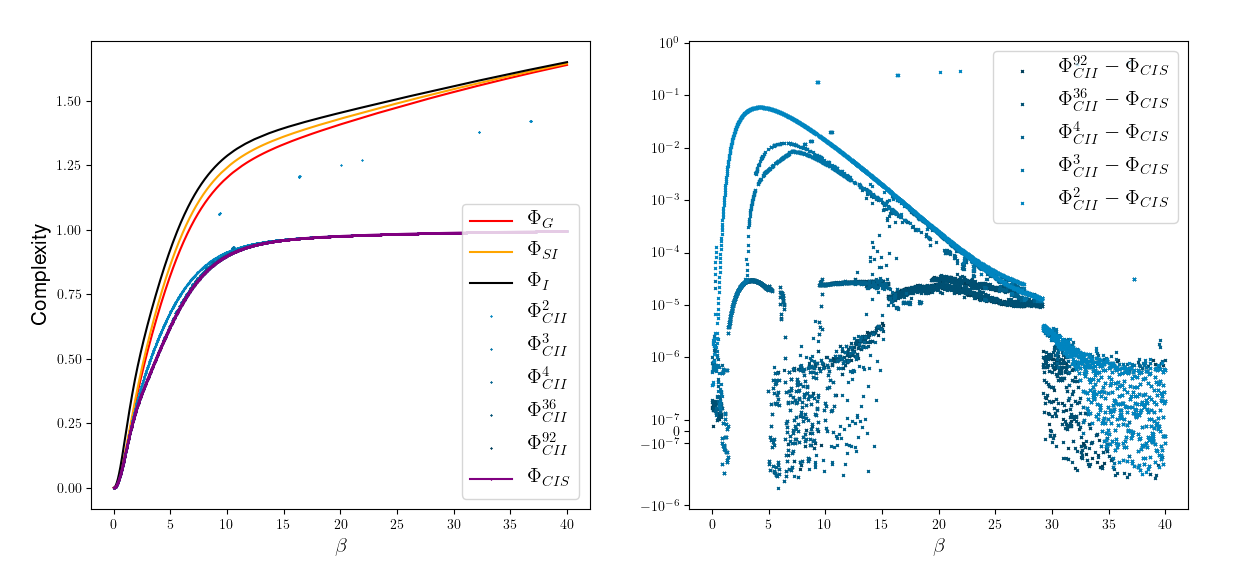}
\caption{Ising model with 2 nodes and the differences between $\Phi_{CIS}$ and $\Phi_{CII}$} \label{Boltz2}
\end{figure}
Increasing $n$ from 2 to 3 makes the difference even more visible, as we can see in Figure \ref{Ex3} produced with the weight matrix 
\begin{equation*}
V = \begin{pmatrix}
-0.43478388 &  0.47448218 &  0.36808313 \\
 0.52117467 &  0.00672578 & -0.7387737  \\
 -0.56114795 & -0.96941243 & -0.76408711
\end{pmatrix}.
\end{equation*}

Here we are able to observe a difference in the behavior of $\Phi_{G}$ compared to the other measures, since we see that $\Phi_{I}, \Phi_{SI}$, $\Phi_{CII}$ and $\Phi_{G}$ are still increasing around $\beta \approx 1.1$, while $\Phi_{G}$ starts to decrease. 
\begin{figure}[H]
\centering
\includegraphics[width = 0.8\textwidth]{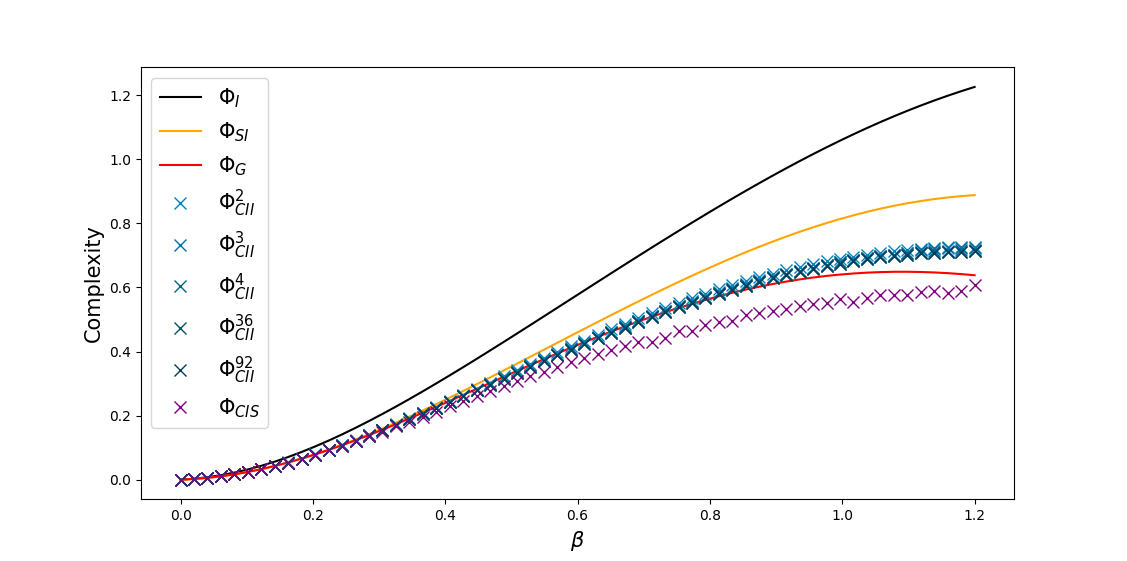}
\caption{Ising model with 3 nodes.} \label{Ex3}
\end{figure}

Now, we are going to focus on an example with 5 nodes. Since it is very time consuming to calculate $\Phi_{CIS}$ for more than 3 nodes, we are going to restrict attention to $\Phi_{I}$, $\Phi_{G}$, $\Phi_{SI}$ and $\Phi_{CII}$. 
The weight matrix
\begin{equation*}
V = \begin{pmatrix} 
-0.35615839 & -0.09775903 & 0.89743801 & -0.00604247 & -0.03897772 \\
-0.2260056 & 0.47769717 & -0.4302256 & 0.18692707 & 0.25140741 \\
-0.86081159 & -0.18348132 & -0.71528754 & -0.08100602 &  -0.64364176 \\
-0.13967234 & -0.03233011 & -0.81057654 & -0.33327558 & -0.57447322 \\
0.18920264 & -0.99054716 & 0.32088358 & 0.69100397 & -0.69206604
\end{pmatrix}
\end{equation*}
produces the Figure \ref{Ex5}. This example shows that $\Phi_{SI}$ is not bounded by $\Phi_{I}$ and therefore does not satisfy Property \ref{postulate2}. {Since the focus in this examples lies on the relationship between $\Phi_{SI}$ and $\Phi_{I}$, the em-algorithm was run with ten different input distributions for each step.}
\begin{figure}[H]
\centering
\includegraphics[width = 0.8\textwidth]{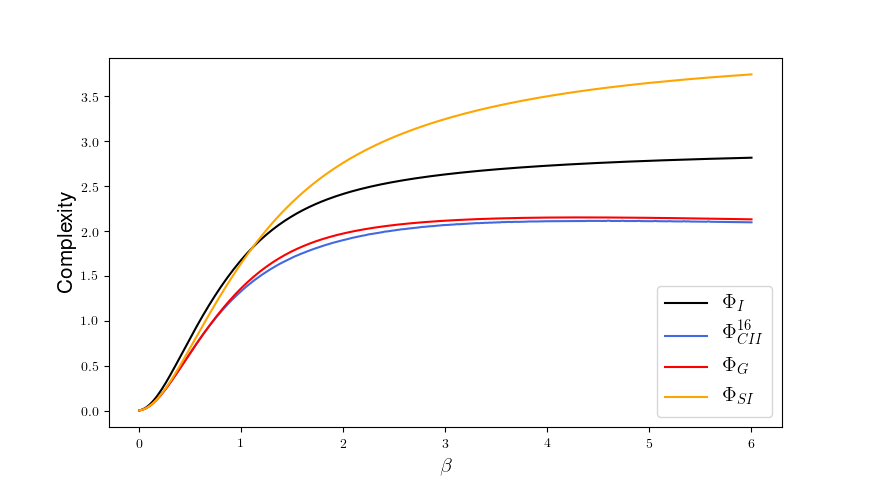}
\caption{Ising model with 5 nodes.} \label{Ex5}
\end{figure}
Using this example, we are going to take a closer look at the local minima the em-algorithm converges to. 
Considering only $\Phi_{CII}$ and varying the size of the state space leads to the upper part in Figure \ref{Boltz5diffState}. This figure displays ten different runs of the em-algorithm with each size of state space in different shades of the respective color, namely blue for $\Phi_{CII}^{2}$, violet for $\Phi^{4}_{CII}$, red for $\Phi^{8}_{CII}$ and orange for $\Phi^{16}_{CII}$. {Note that we display the outcomes of every run in this case and not only the minimal one, since we are interested in the local minima.} We are able to observe how increasing the state space leads to a smaller value of $\Phi_{CII}$. Additionally, the differences between the minimal values corresponding to each state space grow smaller and converge as the state spaces increase. 
\begin{figure}[H]
\includegraphics[width = \textwidth]{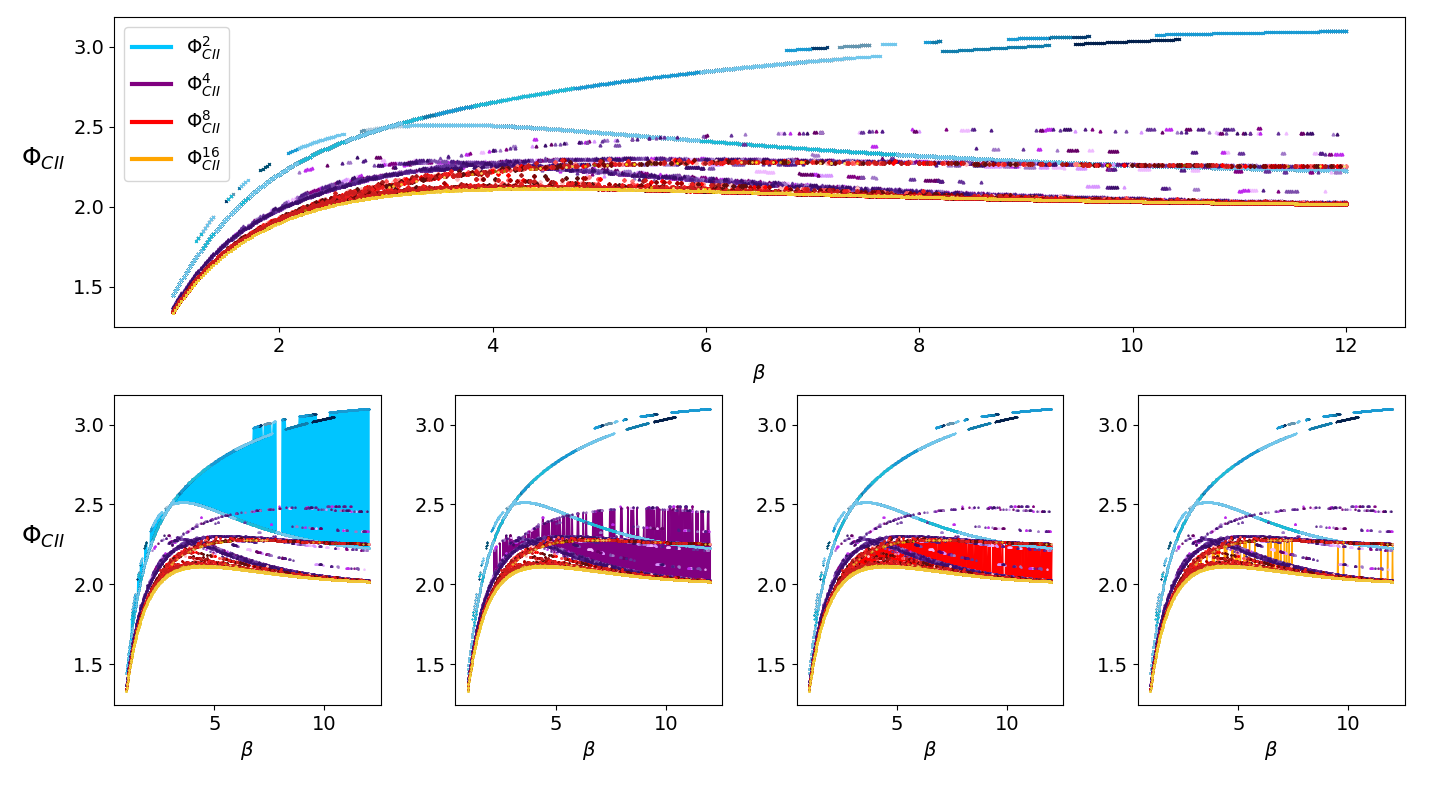}
\caption{The effect of a different sized state space.} \label{Boltz5diffState}
\end{figure}

The bottom half of Figure \ref{Boltz5diffState} highlights an observation that we made. Each of the four illustrations is a copy of the one above, where the difference between the minima are shaded in the respective color. By increasing the size of the state space the difference in value between the various local minima decreases visibly. We think this is consistent with the general observation made in the context of high dimensional optimization, for example,~Reference \cite{Multilayer} in which the authors conjecture that the probability of finding a high valued local minimum decreases when the network size grows. 

Letting the algorithm run only once with $\vert \mathcal{W}\vert = 2$ on the same data leads to a curve on the left in Figure \ref{DifflocMin}. 
\begin{figure}[H]
\includegraphics[width = \textwidth]{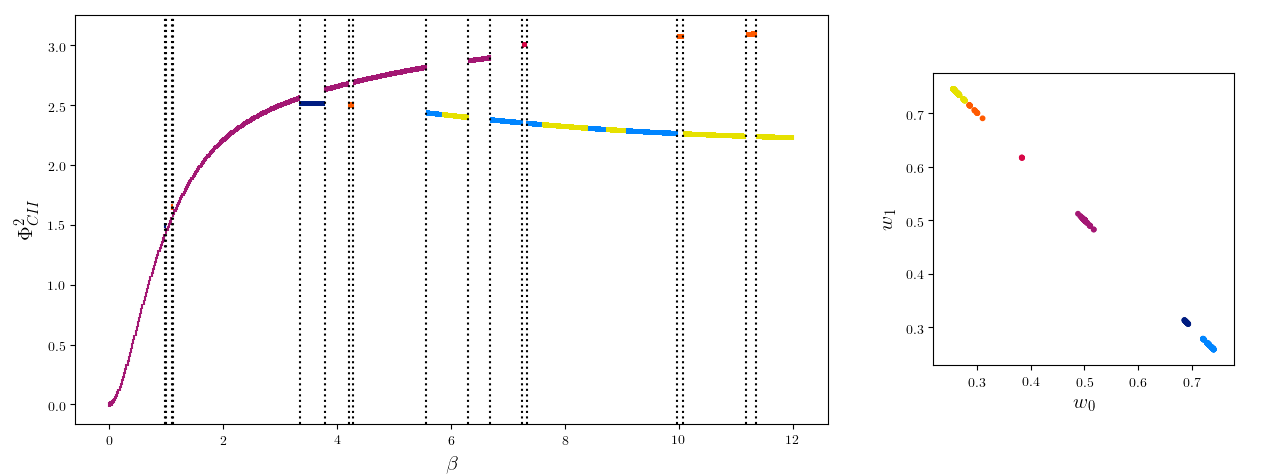}
\caption{Curve of one run of the em-algorithm for each $\beta$ coloured according to the distribution of $W$.} \label{DifflocMin}
\end{figure}
The sets $\mathcal{E}$ defined in (\ref{setE}) and $\mathcal{M}_{CII}$ (\ref{setMCII}) do not change for different values of $\beta$ and therefore we have a fixed set of local minima for a fixed state space of $W$. What does change with different $\beta$ is which of the local minima are global minima. The vertical dotted lines represent the steps $P^{\beta_{t}}$ to $P^{\beta_{t+1}}$ in which the KL-divergence between the projection to $\mathcal{M}_{CII}$ is greater than 0.2
\begin{equation*}
D_{\mathcal{Z}}(P^{\beta_{t}, \star} \parallel P^{\beta_{t+1}, \star}) > 0.2,
\end{equation*}
meaning that inside the different sections of the curve, the projections to $\mathcal{M}_{CII}$ are close. As $\beta$ increases, a different region of local minima becomes global. A sketch of this is shown in Figure \ref{SkDiffloc}.
\begin{figure}[H]
\centering
\includegraphics[scale=0.25]{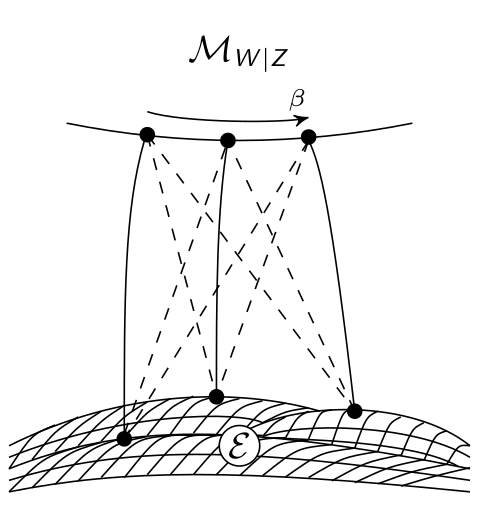}
\caption{Sketch of different local Minima.} \label{SkDiffloc}
\end{figure}

The curve is colored according to the distribution of $W$ as shown on the right side of Figure \ref{DifflocMin}. We see that a different distribution on $\mathcal{W}$ results in a different minimum, except for the region between ~7.5 and 8. The colors light blue and yellow refer to distributions on $\mathcal{W}$ that are different, but symmetric in the following way.  Consider two different distributions $Q, \hat{Q}$ on $\mathcal{Z} \times \mathcal{W}$ such that 
\begin{equation*}
Q(z,w_{1}) = \hat{Q}(z, w_{2}) \,  \text{  and  } \, Q(z,w_{2}) = \hat{Q}(z, w_{1})
\end{equation*} 
for all $z \in \mathcal{Z} $. Then the corresponding marginalized distributions in $\mathcal{M}_{CII}^{2}$ are equal
\begin{equation*}
\sum\limits_{w} Q(z,w) = \sum\limits_{w} \hat{Q}(z,w_{1}).
\end{equation*} 
This symmetry is the reason for the different colors in the region between ~7.5 and 8. 

Using this geometric algorithm we therefore gain a notion of the local minima on $\mathcal{E}$.

%%%%%%%%%%%%%%%%%%%%%%%%%%%%%%%%%%%%%%%%%%
\section{Discussion}

This article discusses a selection of existing complexity measures in the context of Integrated Information Theory that follow the framework introduced in Reference \cite{stochInt}, namely $\Phi_{SI}, \Phi_{G}$ and $\Phi_{CIS}$. The main contribution is the proposal of a new measure, Causal Information Integration $\Phi_{CII}$. 

In Reference \cite{uniframe} and Reference \cite{GeomInfInt} the authors postulate a Markov condition, {ensuring the removal of the causal cross-connections}, and an upper bound, given by the mutual information $\Phi_{I}$, for valid Integrated Information measures. Although $\Phi_{SI}$ is not bounded by $\Phi_{I}$, as we see in Figure \ref{Ex5}, it does measure the intrinsic causal cross-connections in a setting in which there exists no common exterior influences. Therefore the authors of Reference \cite{Comparison} criticize this bound. Since wrongly assuming the existence of a common exterior influence might lead to a value that does not measure all the intrinsic causal influences, the question which measure to use strongly depends on how much we know about the system and its environment. We argue that using $\Phi_{I}$ as an upper bound in the cases in which we have an unknown common exterior influence is reasonable. The measure $\Phi_{G}$ attempts to extend $\Phi_{SI}$ to a setting with exterior influences, but it does not satisfy the Markov condition postulated in Reference \cite{uniframe}. 

One measure that fulfills all the requirements of this framework is $\Phi_{CIS}$, but it has no graphical representation. Hence the causal nature of the measured information flow is difficult to analyze. {We present in Example \ref{ex} a submodel of $\mathcal{M}_{CIS}$ that has a causal structure, which does not lie inside the set of Markovian processes $MP(\mathcal{Z})$, that we discuss in this article. Therefore by projecting to $\mathcal{M}_{CIS}$ we might project to a distribution that still holds some of the integrated information of the original system, although it does not have any causal cross-connections. Additionally we demonstrate that $\mathcal{M}_{CIS}$ does not correspond to a graphical representation, even after adding any number of latent variables to the model of $\mathcal{M}_{SI}$. This is conflicting with the strong connection between conditional independence statements and graphs in Pearls causality theory.} For discrete variables $\Phi_{CIS}$ does not have a closed form solution and has to be calculated numerically.

We propose a new measure $\Phi_{CII}$ that also satisfies all the conditions and has additionally a graphical and intuitive interpretation.  Numerically calculated examples indicate that $\Phi_{CII} \subsetneq \Phi_{CIS}$. The definition of $\Phi_{CII}$ explicitly includes an interior influence as a latent variable and therefore aims at only measuring intrinsic causal influences. This measure should be used in the setting in which there exists an unknown common exterior influence. By assuming the existence of a ground truth, we are able to prove that our new measure is bounded from above by the ultimate value of Integrated Information $\Phi_{T}$ of this system. Although $\Phi_{CII}$ also has no analytical solution, we are able to use the information geometric em-algorithm to calculate it. The em-algorithm is guaranteed to converge towards a minimum, but this might be local. {Even after letting our smallest example, depicted in Figure \ref{Boltz2}, run with 100 random input distributions, we still get local minima.} On the other hand, in our experience the em-algorithm seems to be more reliable, and for larger networks faster, than the numerical methods we used to calculate $\Phi_{CIS}$. Additionally, by letting the algorithm run multiple times we are able to gain a notion on how the local minima in $\mathcal{E}$ are related to each other as demonstrated in Figure \ref{DifflocMin}.

%%%%%%%%%%%%%%%%%%%%%%%%%%%%%%%%%%%%%%%%%%
\section{Materials and Methods}

The distributions used in the Section \ref{sectResults} were generated by a python program and the measures $\Phi_{I}, \Phi_{CII}, \Phi_{SI}$ ans $\Phi_{G}$ are implemented in C++. The python package scipy.mimimize has been used to calculate $\Phi_{CIS}$. The code is available at Reference \cite{Code}.

%%%%%%%%%%%%%%%%%%%%%%%%%%%%%%%%%%%%%%%%%%
%%%%%%%%%%%%%%%%%%%%%%%%%%%%%%%%%%%%%%%%%%
\section*{Acknowledgement}
The authors acknowledge funding by Deutsche Forschungsgemeinschaft Priority Programme “The Active Self” (SPP 2134).

%%%%%%%%%%%%%%%%%%%%%%%%%%%%%%%%%%%%%%%%%%%
%\acknowledgments{In this section you can acknowledge any support given which is not covered by the author contribution or funding sections. This may include administrative and technical support, or donations in kind (e.g., materials used for experiments).}

%%%%%%%%%%%%%%%%%%%%%%%%%%%%%%%%%%%%%%%%%%
%% optional
%\abbreviations{The following abbreviations are used in this manuscript:\\
%
%\noindent 
%\begin{tabular}{@{}ll}
%MDPI & Multidisciplinary Digital Publishing Institute\\
%DOAJ & Directory of open access journals\\
%TLA & Three letter acronym\\
%LD & linear dichroism
%\end{tabular}}

%%%%%%%%%%%%%%%%%%%%%%%%%%%%%%%%%%%%%%%%%%
\appendix
\section{Graphical Models} \label{AppGraph}

Graphical models are a useful tool to visualize conditional independence structures. In this method a graph is used to describe the set of distributions that factor according to it. In our case, we are considering chain graphs.These are graphs, with vertex set $V$ and edge set $E \in V \times V$, consisting of directed and undirected edges such that we are able to partition the vertex set into subsets $V = V_{1} \cup \dots \cup V_{m}$, called chain components, with the properties that all edges between different subsets are directed, all edges between vertices of the same chain component are undirected and that there are no directed cycles between chain components. For a vertex set $\tau$, we will denote by $pa(\tau)$ the set of parents of element in $\tau$, which are vertices $\alpha$ with a directed arrow from  $\alpha$ to an element of $\tau$. Vertices connected by an undirected edge are called neighbours. A more detailed description can be found in Reference \cite{graphModels}.

\begin{Definition} \label{factor}
Let $T$ be the set of chain components.
A distribution factorizes with respect to a chain graph $G$ if the distribution can be written as follows
\begin{equation*} 
P(z) = \prod\limits_{\tau \in T} P(x_{\tau} \vert x_{pa(\tau)}),
\end{equation*}
where the structure of $P(x_{\tau} \vert x_{pa(\tau)})$ can be described in more detail.
Let $A(\tau), \tau \in T$ be the set of all subsets of $\tau \cup pa(\tau)$, that are complete in a graph $\tau_{\star}$, which is an undirected graph with the vertex set $\tau \cup pa(\tau)$ and the edges are the ones between elements in $\tau \cup pa(\tau)$ that exist in $G$ and additionally the ones between elements in $pa(\tau)$. An undirected graph is complete if every pair of distinct vertices is connected by an edge. Then there are non-negative functions $\phi_{a}$ such that
\begin{equation*}
P(x_{\tau} \vert x_{pa(\tau)}) = \prod\limits_{a \in A(\tau)} \phi_{a}(x).
\end{equation*}
\end{Definition}
If $\tau$ is a singleton then $\tau_{\star}$ is already complete. There are different kinds of independence statements a chain graph can encode, but we only need the global chain graph markov property. In order to define this property we need the concepts ancestral set and moral graph.

The boundary $bd(A)$ of a set $A \subseteq V$ is the set of vertices in $V \setminus A$ that are parents or neighbours to vertices in $A$. If $bd(\alpha) \subseteq A$ for all $\alpha \in A$ we call $A$ an ancestral set. For any $A \subseteq V$ there exists a smallest ancestral set containing $A$, because the intersection of ancestral sets is again an ancestral set. This smallest ancestral set of $A$ is denoted by $An(A)$.

Let $G$ be a chain graph. The moral graph of $G$ is an undirected graph denoted by $G^{m}$ that consists of the same vertex set as $G$ and in which two vertices $\alpha, \beta$ are connected if and only if either they were already connected by an edge in $G$ or if there are vertices $\gamma, \delta$ belonging to the same chain component such that $\alpha \rightarrow \gamma$ and $\beta \rightarrow \delta$.

\begin{Definition}[Global Chain Graph Markov Property]
Let $P$ be a distribution on $\mathcal{Z}$ and $G$ a chain graph. $P$ satisfies the global chain Markov property, with respect to $G$, if for any triple $(Z_{A}, Z_{B}, Z_{S})$ of disjoint subsets of $Z$ such that $Z_{S}$ separates $Z_{A}$ from $Z_{B}$ in $(G_{An(Z_{A} \cup Z_{B} \cup Z_{S})})^{m}$, the moral graph of the smallest ancestral set containing $Z_{A} \cup Z_{B} \cup Z_{S}$,
\begin{equation*}
Z_{A} \ci Z_{B} \mid Z_{S}
\end{equation*}
holds.
\end{Definition}

Since we are only considering positive discrete distributions, we have the following result.
\begin{Lemma} \label{equiv} The global chain Markov property and the factorization property are equivalent for positive discrete distributions.
\end{Lemma}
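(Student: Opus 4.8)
The plan is to follow the classical hierarchy of Markov properties for chain graphs, as developed in Reference \cite{graphModels}. Writing $(F)$ for the factorization property of Definition \ref{factor} and $(G)$ for the global chain Markov property, the equivalence amounts to the two implications $(F) \Rightarrow (G)$ and $(G) \Rightarrow (F)$, the latter being the place where positivity is indispensable.

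For $(F) \Rightarrow (G)$, which holds for arbitrary distributions, I would argue as follows. Let $(Z_{A}, Z_{B}, Z_{S})$ be a triple such that $Z_{S}$ separates $Z_{A}$ from $Z_{B}$ in the moral graph $(G_{An(Z_{A} \cup Z_{B} \cup Z_{S})})^{m}$, and set $R = An(Z_{A} \cup Z_{B} \cup Z_{S})$. The first step is to observe that, since $R$ is an ancestral set, it is closed under taking parents and neighbours; consequently the marginal on $Z_{R}$ again factorizes according to the induced chain graph $G_{R}$. The second step reduces the problem to the undirected case: the clique potentials of this factorization, grouped over the augmented components $\tau_{\star}$, assemble into a factorization over the cliques of the moral graph $(G_{R})^{m}$. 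The conditional independence $Z_{A} \ci Z_{B} \mid Z_{S}$ then follows from the global Markov property for undirected graphs, namely that separation in a graph over which the distribution factorizes implies conditional independence.

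The substantial direction is $(G) \Rightarrow (F)$, and here positivity is essential. I would pass through the intermediate properties in the order $(G) \Rightarrow (L) \Rightarrow (P)$, where $(L)$ is the local and $(P)$ the pairwise chain Markov property; these implications are obtained by exhibiting suitable separating sets in moralized ancestral graphs. The crux is the final step $(P) \Rightarrow (F)$, which fails without positivity. Here the intersection property of conditional independence,
\begin{equation*}
U \ci V \mid (W, S) \quad \text{and} \quad U \ci W \mid (V, S) \implies U \ci (V,W) \mid S,
\end{equation*}
valid precisely for strictly positive distributions, lets one combine the pairwise statements into the block-recursive structure of the chain graph. Concretely, for each chain component $\tau$ one conditions on $x_{pa(\tau)}$ and treats $P(x_{\tau} \mid x_{pa(\tau)})$ as a strictly positive distribution on $\tau$, whose pairwise independences within $\tau_{\star}$ feed a Hammersley--Clifford argument.

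The main obstacle is exactly this Hammersley--Clifford step applied componentwise. One expands the logarithm of the positive conditional $P(x_{\tau} \mid x_{pa(\tau)})$ into interaction terms indexed by subsets of $\tau \cup pa(\tau)$ via M\"obius inversion, and must show that every interaction supported on a non-complete subset of $\tau_{\star}$ vanishes; each such vanishing is forced by a pairwise independence together with positivity. Organizing this bookkeeping across all chain components simultaneously, and verifying that the surviving terms are precisely the potentials $\phi_{a}$ over the complete subsets $a \in A(\tau)$ required by Definition \ref{factor}, is the delicate part. Since we only ever apply the result to strictly positive discrete distributions, all conditionals in play are positive and the M\"obius machinery applies without extra hypotheses.
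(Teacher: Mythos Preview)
Your proposal is correct and follows the same mathematical route as the paper: the paper's proof simply cites Theorem~4.1 of Frydenberg together with the Hammersley--Clifford theorem, and what you have written is precisely an unpacking of those two ingredients (the hierarchy $(G)\Rightarrow(L)\Rightarrow(P)$ under positivity is Frydenberg's contribution, and your componentwise M\"obius/interaction argument for $(P)\Rightarrow(F)$ is the Hammersley--Clifford step). The only difference is presentational---the paper defers entirely to the literature, whereas you give a self-contained sketch.
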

\begin{proof}[Proof of Lemma \ref{equiv}]
Theorem 4.1 from Reference \cite{Frydenberg} combined with the Hammersley–Clifford theorem, for example,~Theorem 2.9 in Reference \cite{nihat}, proves this statement.
\end{proof} 

In order to understand the conditional independence structure of a chain graph after marginalization, we need the following alogrithm from Reference \cite{Marginal}. This algorithm converts a chain graph with latent variables into a chain mixed graph with the conditional independence structure of the marginalized chain graph. A chain mixed graph has in addition to directed and undirected edges also bidirected edges, called arcs. The condition that there are no semi-directed cycles also applies to chain mixed graphs.
\begin{Definition} \label{Algorithm}
Let M be the set of vertices over which we want to marginalize. The following algorithm produces a chain mixed graph (CMG) with the conditional independence structure of the marginalized chain graph.

\begin{enumerate}
\item Generate an ij edge as in Table \ref{marg}, steps 8 and 9, between i and j on a collider trislide with an endpoint j and an endpoint in M if the edge of the same type does not already exist.
\item Generate an appropriate edge as in Table \ref{marg}, steps 1 to 7, between the endpoints of every tripath with inner node in M if the edge of the same type does not already exist. Apply this step until no other edge can be generated.
\item  Remove all nodes in M.
\end{enumerate}
\centering
\begin{tabular}{| c | c | c | l |}  
\hline
1 & i $\leftarrow$ m $\leftarrow$ j & generates & i $\leftarrow$ j \\
2 & i $\leftarrow$ m -- j & generates & i $\leftarrow$ j \\
3 & i $\leftrightarrow$ m ---j & generates & i $\leftrightarrow$ j \\
4 & i $\leftarrow$ m $\rightarrow$ j & generates & i $\leftrightarrow$ j \\
5 & i $\leftarrow$ m $\leftrightarrow$ j & generates & i $\leftrightarrow$ j \\
6 & i -- m $\leftarrow$ j & generates & i $\leftarrow$ j \\
7 & i -- m -- j & generates & i--j \\
\hline 
8 & m $\rightarrow$ i -- $\dots$  -- $\circ \leftarrow$ j & generates & i $\leftarrow$ j \\
9 & m $\rightarrow i -- \dots -- \circ \leftrightarrow$ j & generates & i $\leftrightarrow$ j \\
\hline  
\end{tabular} 
\captionof{table}{Types of edge induced by tripaths with inner node m $\in$ M and trislides with
endpoint m $\in$ M.} \label{marg} 
\end{Definition}
Conditional independence in CMGs is defined using the concept of c-separation, see for example Reference \cite{Marginal} in Section 4. For this definition we need the concepts of a walk and of a collider section. A walk is a list of vertices $\alpha_{0}, \dots, \alpha_{k}, \, k \in \mathbb{N}$, such there is an edge or arrow from $\alpha_{i}$ to $\alpha_{i+1}, \, i \in \{0, \dots , k-1\}$. A set of vertices connected by undirected edges is called a section. If there exists a walk including a section such that an arrow points at the first and last vertices of the section
\begin{equation*}
\rightarrow \bullet - \cdots  - \bullet \leftarrow
\end{equation*} 
then this is called a collider section.
\begin{Definition}[c-separation] \label{c-sep}
Let $A,B$ and $C$ be disjoint sets of vertices of a graph. A walk $\pi$ is called a c-connecting walk given $C$, if every collider section of $\pi$ has a node in $C$ and all non-collider sections are disjoint. The nodes $A$ and $B$ are called c-separated given $C$ if there are no c-connecting walks between them given $C$ and we write $A \ci_{c} B \vert C$.
\end{Definition}

\section{Proofs} \label{AppProof}
\begin{proof}[Proof of the Relationship (\ref{CIS})]
For $n=2$ this is immediate. Let now $n \geq 3$ and $i,j,k \in \{1, \dots ,n\}, \, i \neq j \neq k \neq i$. Applying (\ref{MarkovPop}) two times leads to
\begin{equation*}
\begin{split}
Q(y_{j},x) &= \dfrac{Q(y_{j}, x_{I \setminus \{i\}})Q(x)}{Q(x_{I \setminus \{i\}})} \\
Q(y_{j},x) &= \dfrac{Q(y_{j}, x_{I \setminus \{k\}})Q(x)}{Q(x_{I \setminus \{k\}})} \\
Q(y_{j}, x_{I \setminus \{i\}})Q(x_{I \setminus \{k\}}) &= Q(y_{j}, x_{I \setminus \{k\}})Q(x_{I \setminus \{i\}})
\end{split}
\end{equation*}
for all $(x,y_{j}) \in \mathcal{X} \times \mathcal{Y}_{j}$. Marginalizing over the elements of 
$\mathcal{X}_{k}$ yields
\begin{equation*}
\begin{split}
Q(y_{j}, x_{I \setminus \{i,k\}})Q(x_{I \setminus \{k\}}) &= Q(y_{j}, x_{I \setminus \{k\}})Q(x_{I \setminus \{i,k\}}) \\
Q(y_{j} \vert x_{I \setminus \{i,k\}}) &= Q(y_{j} \vert x_{I \setminus \{k\}}).
\end{split}
\end{equation*}
Using inductively the remaining relations results in (\ref{CIS}).
\end{proof}
{
\begin{proof}[Proof of Proposition \ref{IfandOnlyIf}]
If $\Phi_{CII}(\tilde{P}) = 0$ holds, then 
\begin{equation*}
 \inf\limits_{Q \in \mathcal{M}_{CII}} D_{\mathcal{Z}}(\tilde{P} \parallel Q) = 0.
\end{equation*} 
Since $\mathcal{M}_{CII}$ is compact the infimum is an element of  $\mathcal{M}_{CII}$, so there exists $Q \in \mathcal{M}_{CII}$ such that $D_{\mathcal{Z}}(P \parallel Q) = 0$. Therefore $P \in \mathcal{M}_{CII}$ and the existence of a sequence $Q^{m}$ follows from the definition of $\mathcal{M}_{CII}$.

Assume that there exists a sequence $Q^{m}$ that satisfies 1.~and 2. Then every element $Q^{m} \in \mathcal{M}_{CII}^{m}$ per definition and the limit
\begin{equation*}
\tilde{P} \in \overline{\bigcup\limits_{m \in \mathbb{N}} \mathcal{M}_{CII}^{m}} = \mathcal{M}_{CII}.
\end{equation*}
Hence 
\begin{equation*}
\Phi_{CII}(\tilde{P}) = \inf\limits_{Q \in \mathcal{M}_{CII}} D_{\mathcal{Z}}(\tilde{P} \parallel Q) = D_{\mathcal{Z}}(\tilde{P}, \tilde{P}) = 0.
\end{equation*} 
\end{proof}
}
\begin{proof}[Proof of Proposition \ref{PropGroundTruth}]

Let $P \in \mathcal{E}^{f}$ and  $Q \in \mathcal{E}$, then the KL-divergence between the two elements is
\begin{equation*}
\begin{split}
D_{\mathcal{Z} \times \mathcal{W}^{m}} (P \parallel Q) &=  \sum\limits_{z,w} P(z,w)\, log \, \dfrac{P(x)\prod\limits_{i}P(y_{i} \vert x, w) P(w)}{ Q(x) \prod\limits_{i} Q(y_{i} \vert x_{i}, w) Q(w)} \\
&= \sum\limits_{x} P(x) \, log \, \dfrac{P(x)}{Q(x)} + \sum\limits_{z,w} P(z,w) \, log \, \dfrac{\prod\limits_{i} P(y_{i} \vert x,w)}{ \prod\limits_{i} Q(y_{i} \vert x_{i}, w)} + \sum\limits_{w} P(w) \, log \, \dfrac{P(w)}{Q(w)} \\
&\geq \sum\limits_{x} P(x) \, log \, \dfrac{P(x)}{P(x)} + \sum\limits_{z,w} P(z,w) \, log \, \dfrac{\prod\limits_{i} P(y_{i} \vert x,w)}{\prod\limits_{i} P(y_{i} \vert x_{i}, w)} +  \sum\limits_{w} P(w) \, log \, \dfrac{P(w)}{P(w)}  \\
&= \sum\limits_{z,w} P(z,w) \, log \, \dfrac{\prod\limits_{i} P(y_{i} \vert x,w)}{\prod\limits_{i} P(y_{i} \vert x_{i}, w)}. 
\end{split}
\end{equation*} 
The inequality holds, because in the first and third addend, we are able to apply that the cross entropy is greater or equal to the entropy and in the second addend we use the log-sum inequality in the following way
\begin{equation*}
\begin{split}
\sum\limits_{z,w} P(z,w) \, log \, & \dfrac{\prod\limits_{i} P(y_{i} \vert x, w)}{ \prod\limits_{i} Q(y_{i} \vert x_{i}, w)} - \sum\limits_{z,w} P(z,w) \, log \, \dfrac{\prod\limits_{i} P(y_{i} \vert x, w)}{ \prod\limits_{i} P(y_{i} \vert x_{i}, w)} \\ &= \sum\limits_{x,w} P(x)P(w) \sum\limits_{y} \prod\limits_{i} P(y_{i}, \vert x,w) \, log \, \dfrac{\prod\limits_{i} P(y_{i} \vert x_{i}, w)}{ \prod\limits_{i} Q(y_{i} \vert x_{i}, w)} \\
&\geq \sum\limits_{x,w} P(x)P(w) \left( \sum\limits_{y} \prod\limits_{i} P(y_{i}, \vert x,w) \right) \, log \, \dfrac{ \sum\limits_{y} \prod\limits_{i} P(y_{i} \vert x_{i}, w)}{\sum\limits_{y} \prod\limits_{i} Q(y_{i} \vert x_{i}, w)} \\
&=0.
\end{split}
\end{equation*}
Therefore the new integrated information measure results in
\begin{equation*}
\inf\limits_{Q \in \mathcal{E}} D_{\mathcal{Z} \times \mathcal{W}^{m}} (P \parallel Q) = \sum\limits_{z,w} P(z,w) \, log \, \dfrac{\prod\limits_{i} P(y_{i} \vert x,w)}{\prod\limits_{i} P(y_{i} \vert x_{i}, w)}.
\end{equation*}
This can be rewritten to
\begin{equation*}
\begin{split}
\sum\limits_{z,w} P(z,w) \, log \, \dfrac{\prod\limits_{i} P(y_{i} \vert x,w)}{\prod\limits_{i} P(y_{i} \vert x_{i}, w)} &= \sum\limits_{z,w} P(z,w) \, log \, \dfrac{\prod\limits_{i} P(y_{i}, x,w) P(x_{i},w)}{\prod\limits_{i} P(y_{i}, x_{i}, w)P(x,w)} \\
 &= \sum\limits_{z,w} P(z,w) \, log \, \dfrac{\prod\limits_{i} P(y_{i}, x_{I \setminus \{i\}} \vert x_{i},w) P(x_{i}, w)}{\prod\limits_{i} P(y_{i} \vert x_{i}, w)P(x, w)} \\
 &= \sum\limits_{z,w} P(z,w) \, log \, \dfrac{\prod\limits_{i} P(y_{i},  x_{I \setminus \{i\}} \vert x_{i}, w) }{\prod\limits_{i} P(y_{i}\vert x_{i}, w) P(x_{I \setminus \{i\}} \vert x_{i},w)} \\
&= \sum\limits_{i} I(Y_{i} ; X_{I \setminus \{i\}} \vert X_{i}, W).
\end{split}
\end{equation*}
\end{proof}
\begin{proof}[Proof of Proposition \ref{relGroundTruth}]
By using the log-sum inequality we get
\begin{equation*}
\begin{split}
\Phi_{CII}^{m} &= \inf\limits_{Q \in \mathcal{M}_{CII}^{m}} \sum\limits_{z} P(z) log \dfrac{\sum\limits_{w} P(x) \prod\limits_{i} P(y_{i} \vert x,w) P(w)}{\sum\limits_{w} Q(x) \prod\limits_{i} Q(y_{i} \vert x_{i}, w) Q(w)} \\
&\leq  \inf\limits_{Q \in \mathcal{M}_{CII}^{m}} \sum\limits_{w} \sum\limits_{z} P(z,w) log \, \dfrac{P(x) \prod\limits_{i} P(y_{i} \vert x,w) P(w)}{ Q(x) \prod\limits_{i} Q(y_{i} \vert x_{i}, w) Q(w)} \\
&=\inf\limits_{Q \in \mathcal{E}} D_{\mathcal{Z} \times \mathcal{W}^{m}} (P \parallel Q).
\end{split}
\end{equation*}
The fact that every element of $Q \in \mathcal{E}$ corresponds via marginalization to an element in $\mathcal{M}_{CII}^{m}$ and every element in $\mathcal{M}_{CII}^{m}$ has at least one corresponding element in $Q \in \mathcal{E}$, leads to the equality in the last row. Since taking the infimum over a larger space can only decrease the value further, the relation
\begin{equation*}
\Phi_{CII} \leq \Phi_{T} 
\end{equation*}
holds.
\end{proof}
\begin{proof}[Proof of Proposition \ref{ColmProj}]
\begin{equation*}
\begin{split}
D_{\mathcal{Z} \times \mathcal{W}^{m}}(P \parallel Q) =& \sum\limits_{(z,w) \in \mathcal{Z} \times \mathcal{W}^{m}} P(z,w) \, log \, \dfrac{P(z,w)}{Q(x) \prod\limits_{i=1}^{n}Q(y_{i} \vert x_{i}, w)Q(w)} \\
=& \sum\limits_{(z,w) \in \mathcal{Z} \times \mathcal{W}^{m}} P(z,w) \, log \, P(z,w) \\
&+ \sum\limits_{(z,w) \in \mathcal{Z} \times \mathcal{W}^{m}} P(z,w) \, log \, \dfrac{1}{Q(x)} \\
&+ \sum\limits_{(z,w) \in \mathcal{Z} \times \mathcal{W}^{m}} \sum\limits_{i=1}^{n} P(z,w) \, log \, \dfrac{1}{Q(y_{i} \vert x_{i}, w)} \\
&+ \sum\limits_{(z,w) \in \mathcal{Z} \times \mathcal{W}^{m}} P(z,w) \, log \, \dfrac{1}{Q(w)}
\end{split}
\end{equation*}
The first addend is a constant for $P$ and the others are cross-entropies which are greater or equal to entropy
\begin{equation*}
\begin{split}
D_{\mathcal{Z} \times \mathcal{W}^{m}}(P \parallel Q) \geq& \sum\limits_{(z,w) \in \mathcal{Z} \times \mathcal{W}^{m}} P(z,w) \, log \, P(z,w) \\
&+ \sum\limits_{(z,w) \in \mathcal{Z} \times \mathcal{W}^{m}} P(z,w) \, log \, \dfrac{1}{P(x)} \\
&+ \sum\limits_{(z,w) \in \mathcal{Z} \times \mathcal{W}^{m}} \sum\limits_{i=1}^{n} P(z,w) \, log \, \dfrac{1}{P(y_{i} \vert x_{i}, w)} \\
&+ \sum\limits_{(z,w) \in \mathcal{Z} \times \mathcal{W}^{m}} P(z,w) \, log \, \dfrac{1}{P(w)} \\
=& \sum\limits_{(z,w) \in \mathcal{Z} \times \mathcal{W}^{m}} P(z,w) \, log \, \dfrac{P(z,w)}{P(x) \prod\limits_{i=1}^{n}P(y_{i} \vert x_{i}, w)P(w)}.
\end{split}
\end{equation*}
Therefore this projection is unique.
\end{proof}

\begin{proof}[Proof of Theorem \ref{proposition}]
We need a way to understand the connections in a graph after marginalization. In Reference \cite{Marginal} Sadeghi presents an algorithm that converts a chain graph to a chain mixed graph that represents the markov properties of the original graph after marginalizing, see Definition \ref{Algorithm}.

Although the actual set of distributions after marginalizing might be more complicated, it is a subset of the distributions factorizing according to the new graph, if the new graph is still a chain graph. This is due to the equivalence of the global chain Markov property and the factorization property in Lemma \ref{equiv}.

At first we will consider the case of two nodes per time step, $n = 2$. 
We will take a close look at the possible ways a hidden structure could be connected to the left graph in Figure \ref{startinggraph}. At first we will look at the possible connections between two nodes, depicted on the right in Figure \ref{startinggraph}. The boxes stand for any kind of subgraph of hidden nodes such that the whole graph is still a chain graph and the two headed dotted arrows stand for a line, or an arrow in any direction.
Consider two nodes $A$ and $B$, then the connections including a box between the nodes can take one of the five following forms
\begin{enumerate}
\item[(1)] they form an undirected path between $A$ and $B$,
\item[(2)] they can form a directed path from $A$ to $B$,
\item[(3)] they can form a directed path form $B$ to $A$,
\item[(4)] there exists a collider,
\item[(5)] $A$ and $B$ have a common exterior influence.
\end{enumerate}
A collider is a node or a set of nodes connected by undirected edges that have an arrow pointing at the set at both ends
\begin{equation*}
\rightarrow \bullet \cdots \bullet \leftarrow.
\end{equation*}
\begin{figure}[H]
\centering
\scalebox{1}{
\begin{tikzpicture}
\draw[line width=0.5mm] (-3,0) node {\footnotesize $X_{1}$};
\draw[line width=0.5mm] (-3,-1) node {\footnotesize $X_{2}$};
\draw[line width=0.5mm] (-0.9, 0) node {\footnotesize $Y_{1}$};
\draw[line width=0.5mm] (-0.9, -1) node {\footnotesize $Y_{2}$};
\draw[line width = 0.45mm] (-3,-0.3)--(-3,-0.7);
\draw[->,line width=0.45mm] (-2.7,0)--(-1.25,0);
\draw[->,,line width=0.45mm] (-2.7,-1)--(-1.25,-1);
\draw[line width=0.3mm] (-3,0) circle (8pt);
\draw[line width=0.3mm] (-3,-1) circle (8pt);
\draw[line width=0.3mm] (-0.9,0) circle (8pt);
\draw[line width=0.3mm] (-0.9,-1) circle (8pt);

\draw[line width=0.5mm] (1.5,0) node {\footnotesize $X_{1}$};
\draw[line width=0.5mm] (1.5,-1) node {\footnotesize $X_{2}$};
\draw[line width=0.5mm] (3.6, 0) node {\footnotesize $Y_{1}$};
\draw[line width=0.5mm] (3.6, -1) node {\footnotesize $Y_{2}$};
\filldraw [pattern = crosshatch dots] (2.3,0.6) rectangle (2.8,0.4);
\filldraw [pattern = crosshatch dots] (2.3,-1.6) rectangle (2.8,-1.4);

\filldraw [fill = gray] (2.1,-0.375) rectangle (2.3,-0.25);
\filldraw [fill = gray] (2.1,-0.625) rectangle (2.3,-0.75);

\filldraw [pattern = grid] (0.6,-0.35) rectangle (0.8,-0.65);
\filldraw [pattern = horizontal lines] (4.3,-0.35) rectangle (4.5,-0.65);
\draw[line width = 0.45mm] (1.5,-0.3)--(1.5,-0.7);
\draw[<->,dotted,line width=0.25mm] (1.2,-0.9)--(0.85,-0.65);
\draw[<->,dotted,line width=0.25mm] (1.2,-0.1)--(0.85,-0.35);
\draw[<->,dotted,line width=0.25mm] (3.3,0.15)--(2.85,0.4);
\draw[<->,dotted,line width=0.25mm] (3.3,-1.15)--(2.85,-1.4);
\draw[<->,dotted,line width=0.25mm] (1.8,0.15)--(2.25,0.4);
\draw[<->,dotted,line width=0.25mm] (1.8,-1.15)--(2.25,-1.4);
\draw[<->,dotted,line width=0.25mm] (3.9,-0.9)--(4.25,-0.65);
\draw[<->,dotted,line width=0.25mm] (3.9,-0.1)--(4.25,-0.35);
\draw[<->,dotted,line width=0.25mm] (3.3,-0.9)--(2.35,-0.375);
\draw[<->,dotted,line width=0.25mm] (2.05,-0.25)--(1.8,-0.1);
\draw[fill = white, white] (2.55,-0.5) circle (2pt);
\draw[<->,dotted,line width=0.25mm] (3.3,-0.1)--(2.35,-0.625);
\draw[<->,dotted,line width=0.25mm] (2.05,-0.75)--(1.8,-0.9);
\draw[->,,line width=0.45mm] (1.8,0)--(3.25,0);
\draw[->,,line width=0.45mm] (1.8,-1)--(3.25,-1);
\draw[line width=0.3mm] (1.5,0) circle (8pt);
\draw[line width=0.3mm] (1.5,-1) circle (8pt);
\draw[line width=0.3mm] (3.6,0) circle (8pt);
\draw[line width=0.3mm] (3.6,-1) circle (8pt);
\end{tikzpicture}}
\caption{Starting graph and possible two way interactions. } \label{startinggraph}
\end{figure}
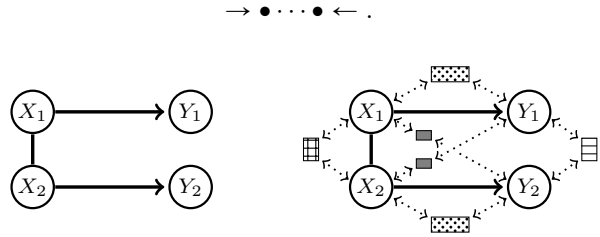
We will start with the gridded hidden structure connected to $X_{1}$ and $X_{2}$.
Since there already is an undirected edge between the $X_{i}$s an undirected path would make no difference in the marginalized model. The cases (2) and (3) would form a directed cycle which violates the requirements of a chain mixed graph. A collider would also make no difference, since it disappears in the marginalized model. A common exterior influence leads to  
\begin{equation*}
\begin{split}
P(\hat{w})P(x \vert \hat{w})P(y_{1} \vert x_{1}) P(y_{2} \vert x_{2}) &= P(x, \hat{w})P(y_{1} \vert x_{1}) P(y_{2} \vert x_{2}) \\
\sum\limits_{\hat{w}} P(x, \hat{w})P(y_{1} \vert x_{1}) P(y_{2} \vert x_{2}) &= P(x)P(y_{1} \vert x_{1}) P(y_{2} \vert x_{2}).
\end{split}
\end{equation*}

Now let us discuss these possibilities in the case of a gray hidden structure between $X_{i}$ and $Y_{j}$, $i,j \in \{1,2\},\, i \neq j$. An undirected edge or a directed edge (3) would create a directed cycle.  A directed path (2) from  $X_{i}$ to $Y_{j}$ would lead to a chain graph in which  $X_{i}$ and $Y_{j}$ are not conditionally independent given $X_{j}$. If there exists a collider (4) in the hidden structure, then nothing else in the graph depends on this part of the structure and it reduces to a factor one when we marginalize over the hidden variables. Therefore the path between $X_{i}$ and $Y_{j}$ gets interrupted leaving a potential external influence or effect. Those do not have an additional impact on the marginalized model. A common exterior influence (5) leads to a chain mixed graph which does not satisfy the necessary conditional independence structure, because using the Algorithm \ref{Algorithm} leads to an arc between $X_{i}$ and $Y_{j}$, hence they are c-connected in the sense of Definition \ref{c-sep}. 

The next possibility is a dotted hidden structure between $X_{i}$ and $Y_{i}, \, i \in \{1,2\}$. An undirected path (1) and a directed path (3) would lead to a directed cycle. A directed path (2) would add no new structure to the model since there already is a directed edge between $X_{i}$ and $Y_{i}$. A collider (4) does not have an effect on the marginalized model. Adding a common exterior influence $W_{1}$ on $X_{1}, Y_{1}$ results in a new model which is not symmetric in $i \in \{1,2\}$ and does not include $\mathcal{M}_{I}$, therefore it does not fully contain $\mathcal{M}_{CII}$. By adding additional common exterior $W_{2}$ influences on $X_{2}, Y_{2}$ or $Y_{1}, Y_{2}$, in order to include $\mathcal{M}_{I}$ in the new model, violates the conditional independence statements since nodes in $W_{1}$ and $W_{2}$ are connected in the moralized graph. 

The last hidden structure between two nodes is the striped one between the $Y_{i}$s. An undirected path (1) or any directed path (2),(3) lead to a graph that does not satisfy the conditional independence statements. A collider (4) has no impact on the model and a common exterior influence leads to the definition of Causal Information Integration. 

Connecting $Y_{1}, Y_{2} $ and $X_{i}, i \in \{1,2\}$ leads either to a violation of the conditional independence statements or contains a collider in which case the marginalized model reduces to one of the cases above. 

All the possible ways a hidden structure could be connected to three nodes $X_{1}, X_{2}, Y_{1}$ by directed edges are shown in Figure \ref{eight}. Replacing any of these edges by an undirected edge would either make no difference or lead to a model that does not satisfy the conditional independence statements. In this case the black boxes represent sections. More complicated hidden structures reduce to this case, since these structures either contain a collider and correspond to one of the cases above or contain longer directed paths in the direction of the edges connecting the structure to the visible nodes, which does not change the marginalized model. 
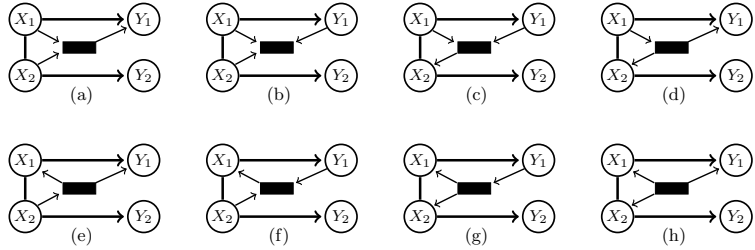
\begin{figure}[H]
\centering
\scalebox{0.75}{
\begin{tikzpicture}[']
\draw[] (-2,-1.35) node{ \small (a)};
\draw[line width=0.5mm] (-3,0) node {\footnotesize $X_{1}$};
\draw[line width=0.5mm] (-3,-1) node {\footnotesize $X_{2}$};
\draw[line width=0.5mm] (-0.9, 0) node {\footnotesize $Y_{1}$};
\draw[line width=0.5mm] (-0.9, -1) node {\footnotesize $Y_{2}$};
\filldraw [ draw=black] (-2.33,-0.6) rectangle (-1.75,-0.4);
\draw[line width = 0.45mm] (-3,-0.3)--(-3,-0.7);
\draw[->,,line width=0.25mm] (-2.775,-0.8)--(-2.4,-0.6);
\draw[->,,line width=0.25mm] (-2.775,-0.2)--(-2.4,-0.4);
\draw[->,,line width=0.25mm] (-1.75,-0.4)--(-1.2,-0.15);
\draw[->,,line width=0.45mm] (-2.7,0)--(-1.25,0);
\draw[->,,line width=0.45mm] (-2.7,-1)--(-1.25,-1);
\draw[line width=0.3mm] (-3,0) circle (8pt);
\draw[line width=0.3mm] (-3,-1) circle (8pt);
\draw[line width=0.3mm] (-0.9,0) circle (8pt);
\draw[line width=0.3mm] (-0.9,-1) circle (8pt);

\draw[] (1.5,-1.35) node{ \small (b)};
\draw[line width=0.5mm] (0.5,0) node {\footnotesize $X_{1}$};
\draw[line width=0.5mm] (0.5,-1) node {\footnotesize $X_{2}$};
\draw[line width=0.5mm] (2.6, 0) node {\footnotesize $Y_{1}$};
\draw[line width=0.5mm] (2.6, -1) node {\footnotesize $Y_{2}$};
\filldraw [ draw=black] (1.17,-0.6) rectangle (1.75,-0.4);
\draw[line width = 0.45mm] (0.5,-0.3)--(0.5,-0.7);
\draw[->,,line width=0.25mm] (0.725,-0.8)--(1.1,-0.6);
\draw[->,,line width=0.25mm] (0.725,-0.2)--(1.1,-0.4);
\draw[->,,line width=0.25mm] (2.35,-0.15)--(1.8,-0.4);
\draw[->,,line width=0.45mm] (0.8,0)--(2.25,0);
\draw[->,,line width=0.45mm] (0.8,-1)--(2.25,-1);
\draw[line width=0.3mm] (0.5,0) circle (8pt);
\draw[line width=0.3mm] (0.5,-1) circle (8pt);
\draw[line width=0.3mm] (2.6,0) circle (8pt);
\draw[line width=0.3mm] (2.6,-1) circle (8pt);

\draw[] (5,-1.35) node{ \small (c)};
\draw[line width=0.5mm] (4,0) node {\footnotesize $X_{1}$};
\draw[line width=0.5mm] (4,-1) node {\footnotesize $X_{2}$};
\draw[line width=0.5mm] (6.1, 0) node {\footnotesize $Y_{1}$};
\draw[line width=0.5mm] (6.1, -1) node {\footnotesize $Y_{2}$};
\filldraw [ draw=black] (4.67,-0.6) rectangle (5.25,-0.4);
\draw[line width = 0.45mm] (4,-0.3)--(4,-0.7);
%unten links
\draw[<-,,line width=0.25mm] (4.25,-0.8)--(4.65,-0.6);
%oben links
\draw[->,,line width=0.25mm] (4.225,-0.2)--(4.6,-0.4);
%oben rechts
\draw[->,,line width=0.25mm] (5.85,-0.15)--(5.3,-0.4);
\draw[->,,line width=0.45mm] (4.3,0)--(5.75,0);
\draw[->,,line width=0.45mm] (4.3,-1)--(5.75,-1);
\draw[line width=0.3mm] (4,0) circle (8pt);
\draw[line width=0.3mm] (4,-1) circle (8pt);
\draw[line width=0.3mm] (6.1,0) circle (8pt);
\draw[line width=0.3mm] (6.1,-1) circle (8pt);

\draw[] (8.5,-1.35) node{ \small (d)};
\draw[line width=0.5mm] (7.5,0) node {\footnotesize $X_{1}$};
\draw[line width=0.5mm] (7.5,-1) node {\footnotesize $X_{2}$};
\draw[line width=0.5mm] (9.6, 0) node {\footnotesize $Y_{1}$};
\draw[line width=0.5mm] (9.6, -1) node {\footnotesize $Y_{2}$};
\filldraw [ draw=black] (8.17,-0.6) rectangle (8.75,-0.4);
\draw[line width = 0.45mm] (7.5,-0.3)--(7.5,-0.7);
%unten links
\draw[<-,,line width=0.25mm] (7.75,-0.8)--(8.15,-0.6);
%oben links
\draw[->,,line width=0.25mm] (7.725,-0.2)--(8.1,-0.4);
%oben rechts
\draw[<-,,line width=0.25mm] (9.3,-0.15)--(8.75,-0.4);
\draw[->,,line width=0.45mm] (7.8,0)--(9.25,0);
\draw[->,,line width=0.45mm] (7.8,-1)--(9.25,-1);
\draw[line width=0.3mm] (7.5,0) circle (8pt);
\draw[line width=0.3mm] (7.5,-1) circle (8pt);
\draw[line width=0.3mm] (9.6,0) circle (8pt);
\draw[line width=0.3mm] (9.6,-1) circle (8pt);

\draw[] (-2,-3.85) node{ \small (e)};
\draw[line width=0.5mm] (-3,-2.5) node {\footnotesize $X_{1}$};
\draw[line width=0.5mm] (-3,-3.5) node {\footnotesize $X_{2}$};
\draw[line width=0.5mm] (-0.9, -2.5) node {\footnotesize $Y_{1}$};
\draw[line width=0.5mm] (-0.9, -3.5) node {\footnotesize $Y_{2}$};
\filldraw [ draw=black] (-2.33,-3.1) rectangle (-1.75,-2.9);
\draw[line width = 0.45mm] (-3,-2.8)--(-3,-3.2);
\draw[->,,line width=0.25mm] (-2.775,-3.3)--(-2.4,-3.1);
\draw[<-,,line width=0.25mm] (-2.7,-2.7)--(-2.35,-2.9);
\draw[->,,line width=0.25mm] (-1.75,-2.9)--(-1.2,-2.65);
\draw[->,,line width=0.45mm] (-2.7,-2.5)--(-1.25,-2.5);
\draw[->,,line width=0.45mm] (-2.7,-3.5)--(-1.25,-3.5);
\draw[line width=0.3mm] (-3,-2.5) circle (8pt);
\draw[line width=0.3mm] (-3,-3.5) circle (8pt);
\draw[line width=0.3mm] (-0.9,-2.5) circle (8pt);
\draw[line width=0.3mm] (-0.9,-3.5) circle (8pt);

\draw[] (1.5,-3.85) node{ \small (f)};
\draw[line width=0.5mm] (0.5,-2.5) node {\footnotesize $X_{1}$};
\draw[line width=0.5mm] (0.5,-3.5) node {\footnotesize $X_{2}$};
\draw[line width=0.5mm] (2.6, -2.5) node {\footnotesize $Y_{1}$};
\draw[line width=0.5mm] (2.6, -3.5) node {\footnotesize $Y_{2}$};
\filldraw [ draw=black] (1.17,-3.1) rectangle (1.75,-2.9);
\draw[line width = 0.45mm] (0.5,-2.8)--(0.5,-3.2);
\draw[->,,line width=0.25mm] (0.725,-3.3)--(1.1,-3.1);
\draw[<-,,line width=0.25mm] (0.775,-2.7)--(1.15,-2.9);
\draw[->,,line width=0.25mm] (2.35,-2.65)--(1.8,-2.9);
\draw[->,,line width=0.45mm] (0.8,-2.5)--(2.25,-2.5);
\draw[->,,line width=0.45mm] (0.8,-3.5)--(2.25,-3.5);
\draw[line width=0.3mm] (0.5,-2.5) circle (8pt);
\draw[line width=0.3mm] (0.5,-3.5) circle (8pt);
\draw[line width=0.3mm] (2.6,-2.5) circle (8pt);
\draw[line width=0.3mm] (2.6,-3.5) circle (8pt);

\draw[] (5,-3.85) node{ \small (g)};
\draw[line width=0.5mm] (4,-2.5) node {\footnotesize $X_{1}$};
\draw[line width=0.5mm] (4,-3.5) node {\footnotesize $X_{2}$};
\draw[line width=0.5mm] (6.1, -2.5) node {\footnotesize $Y_{1}$};
\draw[line width=0.5mm] (6.1, -3.5) node {\footnotesize $Y_{2}$};
\filldraw [ draw=black] (4.67,-3.1) rectangle (5.25,-2.9);
\draw[line width = 0.45mm] (4,-2.8)--(4,-3.2);
%unten links
\draw[<-,,line width=0.25mm] (4.25,-3.3)--(4.65,-3.1);
%oben links
\draw[<-,,line width=0.25mm] (4.275,-2.7)--(4.65,-2.9);
%oben rechts
\draw[->,,line width=0.25mm] (5.85,-2.65)--(5.3,-2.9);
\draw[->,,line width=0.45mm] (4.3,-2.5)--(5.75,-2.5);
\draw[->,,line width=0.45mm] (4.3,-3.5)--(5.75,-3.5);
\draw[line width=0.3mm] (4,-2.5) circle (8pt);
\draw[line width=0.3mm] (4,-3.5) circle (8pt);
\draw[line width=0.3mm] (6.1,-2.5) circle (8pt);
\draw[line width=0.3mm] (6.1,-3.5) circle (8pt);

\draw[] (8.5,-3.85) node{ \small (h)};
\draw[line width=0.5mm] (7.5,-2.5) node {\footnotesize $X_{1}$};
\draw[line width=0.5mm] (7.5,-3.5) node {\footnotesize $X_{2}$};
\draw[line width=0.5mm] (9.6, -2.5) node {\footnotesize $Y_{1}$};
\draw[line width=0.5mm] (9.6, -3.5) node {\footnotesize $Y_{2}$};
\filldraw [ draw=black] (8.17,-3.1) rectangle (8.75,-2.9);
\draw[line width = 0.45mm] (7.5,-2.8)--(7.5,-3.2);
%unten links
\draw[<-,,line width=0.25mm] (7.75,-3.3)--(8.15,-3.1);
%oben links
\draw[<-,,line width=0.25mm] (7.775,-2.7)--(8.15,-2.9);
%oben rechts
\draw[<-,,line width=0.25mm] (9.3,-2.65)--(8.75,-2.9);
\draw[->,,line width=0.45mm] (7.8,-2.5)--(9.25,-2.5);
\draw[->,,line width=0.45mm] (7.8,-3.5)--(9.25,-3.5);
\draw[line width=0.3mm] (7.5,-2.5) circle (8pt);
\draw[line width=0.3mm] (7.5,-3.5) circle (8pt);
\draw[line width=0.3mm] (9.6,-2.5) circle (8pt);
\draw[line width=0.3mm] (9.6,-3.5) circle (8pt);
\end{tikzpicture}}
\caption{The eight possible hidden structures between three nodes. } \label{eight}
\end{figure}
The models in (c), (d), (e), (f) and (g) contain either a collider and reduce therefore to one of the cases discussed above or induce a directed cycle.
We see that (a) and (h) display structures that do not satisfy the conditional independence statements. The hidden structure in (b) has no impact on the model.

A hidden structure connected to all four nodes contains one of the structures above and therefore does not induce a new valid model. 

Let us now consider a model with $n>2$. Any hidden structure on this model either connects only up to four nodes and reduces therefore to one of the cases above, contains one of the connections discussed in Figure \ref{eight} or only connects nodes among one point in time. The only structures possible to add would be a common exterior influence on the $X_{i}$s, a common exterior influence on the $Y_{i}$s or a collider section on any nodes. All these structures do not change the marginalized model. Therefore it is not possible to create a chain graph with hidden nodes in order to get a model strictly larger than $\mathcal{M}_{CII}$.
\end{proof}

\printbibliography

@Article{EM,
author = {S. Amari and K. Kurata and H. Nagaoka},
title = {Information geometry of Boltzmann machines},
journal = {IEEE transactions on neural networks},
year = {1992},
OPTkey = {•},
OPTvolume = {3},
OPTnumber = {•},
OPTpages = { 260--271},
OPTmonth = {•},
OPTnote = {•},
OPTannote = {•}
}

@Article{decoding,
author = {M. Oizumi and S. Amari and T. Yanagawa},
title = { Measuring Integrated Information from the Decoding Perspective},
journal = {PLOS Comput. Biol.},
year = {2016},
OPTkey = {•},
OPTvolume = {•},
OPTnumber = {•},
OPTpages = {•},
OPTmonth = {•},
OPTnote = {•},
OPTannote = {•}
}

@Book{pearl,
author = {J. Pearl },
ALTeditor = {•},
title = {Causality},
publisher = { Cambridge University Press},
year = {2009},
OPTkey = {•},
OPTvolume = {•},
OPTnumber = {•},
OPTseries = {•},
OPTaddress = {•},
OPTedition = {•},
OPTmonth = {•},
OPTnote = {•},
OPTannote = {•}
}

@Article{Emem,
author = {S. Amari},
title = { Information Geometry of the EM and em Algorithms for Neural Networks},
journal = {Neural Networks},
year = {1995},
OPTkey = {•},
OPTvolume = {9},
OPTnumber = {•},
OPTpages = {1379--1408},
OPTmonth = {•},
OPTnote = {•},
OPTannote = {•}
}

@InCollection{GeomInfInt,
author = {S. Amari and N. Tsuchiya and M. Oizumi},
title = {Geometry of Information Integration},
booktitle = {Information Geometry and Its Applications},
OPTcrossref = {•},
OPTkey = {•},
OPTpages = {•},
publisher = {Springer International Publishing},
year = {2018},
editor = {N. Ay and P. Gibilisco and F. Mat{\'u}{\v{s}}},
OPTvolume = {•},
OPTnumber = {•},
OPTseries = {•},
OPTtype = {•},
OPTchapter = {•},
OPTaddress = {•},
OPTedition = {•},
OPTmonth = {•},
OPTnote = {•},
OPTannote = {•}
}

@Book{amari,
author = {S. Amari},
ALTeditor = {•},
title = {Information Geometry and Its Applications},
publisher = {Springer Japan},
year = {2016},
}

@Book{stochIntPreprint,
author = {N. Ay},
title = {Information Geometry on Complexity and Stochastic Interaction},
publisher = {MPI MIS PREPRINT 95},
year = {2001},
}

@Article{GeomAppr,
author = {N. Ay and E. Olbrich and N.A. Bertschinger},
title = {Geometric Approach to Complexity},
journal = {Chaos (Woodbury, N.Y.)},
year = {2011},
OPTkey = {•},
OPTvolume = {21},
OPTnumber = {•},
OPTpages = {•},
OPTmonth = {•},
OPTnote = {•},
OPTannote = {•}
}

@Article{stochInt,
author = {N. Ay},
title = {Information Geometry on Complexity and Stochastic Interaction},
journal = {Entropy},
year = {2015},
OPTkey = {•},
OPTvolume = {17},
OPTnumber = {•},
OPTpages = {2432--2458},
OPTmonth = {•},
OPTnote = {•},
OPTannote = {•}
}

@Book{nihat,
author = {N. Ay and J. Jost and H.V. L\^{e} and L. Schwachh\"{o}fer},
ALTeditor = {•},
title = {Information Geometry},
publisher = {Springer International Publishing},
year = {2017},
OPTkey = {•},
OPTvolume = {•},
OPTnumber = {•},
OPTseries = {•},
OPTaddress = {•},
OPTedition = {•},
OPTmonth = {•},
OPTnote = {•},
OPTannote = {•}
}

@Article{Barrett,
author = {A.B. Barrett and A.K. Seth},
title = {Practical Measures of Integrated Information for Time-Series Data},
journal = {PLoS Computational Biology},
year = {2011},
OPTkey = {•},
OPTvolume = {•},
OPTnumber = {•},
OPTpages = {•},
OPTmonth = {•},
OPTnote = {•},
OPTannote = {•}
}

@InProceedings{Multilayer,
author = {A. Choromanska and M. Henaff and M. Mathieu and G.B. Arous and Y. LeCun},
title = {The Loss Surfaces of Multilayer Networks},
booktitle = {Proceedings of Machine Learning Research},
OPTcrossref = {•},
OPTkey = {•},
OPTpages = {192--204},
year = {2015},
OPTeditor = {•},
OPTvolume = {38},
OPTnumber = {•},
OPTseries = {•},
OPTaddress = {•},
OPTmonth = {•},
OPTorganization = {•},
OPTpublisher = {•},
OPTnote = {•},
OPTannote = {•}
}

@Book{Wiley,
author = { T.M. Cover and J.A. Thomas},
ALTeditor = {•},
title = {Elements of
Information Theory},
publisher = {John Wiley \& Sons},
year = {2006},
OPTkey = {•},
OPTvolume = {•},
OPTnumber = {•},
OPTseries = {•},
OPTaddress = {•},
OPTedition = {•},
OPTmonth = {•},
OPTnote = {•},
OPTannote = {•}
}

@Article{Csiszar,
author = {I. Csiszár and G. Tusnády},
title = {Information geometry and alternating minimization procedures},
journal = {Statistics and Decisions},
year = {1984},
OPTkey = {•},
OPTvolume = {•},
OPTnumber = {•},
OPTpages = {205--237},
OPTmonth = {•},
OPTnote = {•},
OPTannote = {•}
}

@InCollection{CsiszarShields,
author = {I. Csiszár and P. Shields},
title = {Information Theory and Statistics: A Tutorial},
booktitle = {Foundations and Trends in Communications and Information Theory},
OPTcrossref = {•},
OPTkey = {•},
OPTpages = {•},
OPTpublisher = {•},
year = {2004},
}

@Article{Dempster,
author = {A.P. Dempster and N.M. Laird and D.B. Rubin},
title = {Maximum Likelihood from Incomplete Data via the EM Algorithm},
journal = {Journal of the Royal Statistical Society},
year = {1977},
OPTkey = {•},
OPTvolume = {39},
OPTnumber = {•},
OPTpages = { 2--38},
OPTmonth = {•},
OPTnote = {•},
OPTannote = {•}
}

@Article{Frydenberg,
author = {M. Frydenberg},
title = {The Chain Graph Markov Property},
journal = {Scandinavian Journal of Statistics},
year = {1990},
OPTkey = {•},
OPTvolume = {17},
OPTnumber = {•},
OPTpages = {333--353},
OPTmonth = {•},
OPTnote = {•},
OPTannote = {•}
}

@Article{Comparison,
author = {M.S. Kanwal and J.A. Grochow and N. Ay},
title = {Comparing Information-Theoretic Measures of Complexity in Boltzmann Machines},
journal = {Entropy},
year = {2017},
OPTkey = {•},
OPTvolume = {19},
OPTnumber = {•},
OPTpages = {310--326},
OPTmonth = {•},
OPTnote = {•},
OPTannote = {•}
}

@online{Code,
author = {C. Langer},
title = {Integrated-Information-Measures GitHub Repository},
year = {2020},
url = {https://github.com/CarlottaLanger/Integrated-Information-Measures},
OPTsubtitle = {•},
OPTtitleaddon = {•},
OPTlanguage = {•},
OPTversion = {•},
OPTnote = {•},
OPTorganization = {•},
OPTdate = {•},
OPTmonth = {•},
OPTaddendum = {•},
OPTpubstate = {•},
urldate = {18 August 2020},
}

@Book{graphModels,
author = {S.L. Lauritzen},
ALTeditor = {•},
title = {Graphical Models},
publisher = {Clarendon Press},
year = {1996},
OPTkey = {•},
OPTvolume = {•},
OPTnumber = {•},
OPTseries = {•},
OPTaddress = {•},
OPTedition = {•},
OPTmonth = {•},
OPTnote = {•},
OPTannote = {•}
}

@PhdThesis{Guido,
author = {G. Montúfar},
title = {On the expressive power of discrete mixture models, restricted Boltzmann machines, and deep belief networks - a unified mathematical treatment},
school = {Universität Leipzig},
year = {2012},
OPTkey = {•},
OPTtype = {•},
OPTaddress = {•},
OPTmonth = {•},
OPTnote = {•},
OPTannote = {•}
}

@Article{IIT3.0,
author = {M. Oizumi and L. Albantakis and G. Tononi},
title = {From the Phenomenology to the Mechanisms of
Consciousness: Integrated Information Theory 3.0},
journal = {PLOS Computational Biology},
year = {2014},
OPTkey = {•},
OPTvolume = {10},
OPTnumber = {•},
OPTpages = {1--25},
OPTmonth = {•},
OPTnote = {•},
OPTannote = {•}
}

@Article{uniframe,
author = {M. Oizumi and N. Tsuchiya and S. and Amari},
title = {Unified framework for information integration based on information geometry},
journal = {PNAS},
year = {2016},
OPTkey = {•},
OPTvolume = {113},
OPTnumber = {•},
OPTpages = {14817--14822},
OPTmonth = {•},
OPTnote = {•},
OPTannote = {•}
}

@Article{Marginal,
author = {K. Sadeghi},
title = {Marginalization and conditioning for LWF chain graphs},
journal = {The Annals of Statistics},
year = {2016},
OPTkey = {•},
OPTvolume = {44},
OPTnumber = {•},
OPTpages = {1792–1816},
OPTmonth = {•},
OPTnote = {•},
OPTannote = {•}
}

@Book{Milan,
author = {M. Studený},
ALTeditor = {•},
title = {Probabilistic Conditional Independence Structures},
publisher = {Springer},
year = {2005},
OPTkey = {•},
OPTvolume = {•},
OPTnumber = {•},
OPTseries = {•},
OPTaddress = {•},
OPTedition = {•},
OPTmonth = {•},
OPTnote = {•},
OPTannote = {•}
}

@Article{Tononi1999,
author = {G. Tononi and G.M. Edelman},
title = {Consciousness and Complexity},
journal = {Science},
year = {1999},
OPTkey = {•},
OPTvolume = {282},
OPTnumber = {•},
OPTpages = {•},
OPTmonth = {•},
OPTnote = {•},
OPTannote = {•}
}

@Article{Tononi,
author = {G. Tononi},
title = {Consciousness as Integrated Information:
a Provisional Manifesto},
journal = {Biol. Bull.},
year = {2008},
OPTkey = {•},
OPTvolume = {215},
OPTnumber = {•},
OPTpages = {•},
OPTmonth = {•},
OPTnote = {•},
OPTannote = {•}
}

@Book{Boltz,
author = {G. Winkler},
ALTeditor = {•},
title = {Image Analysis, Random Fields and Markov Chain Monte Carlo Methods},
publisher = {Springer},
year = {2003},
OPTkey = {•},
OPTvolume = {•},
OPTnumber = {•},
OPTseries = {•},
OPTaddress = {•},
OPTedition = {•},
OPTmonth = {•},
OPTnote = {•},
OPTannote = {•}
}
%\nocite{MeasIntInf}
%\nocite{Tononi2004}

%%%%%%%%%%%%%%%%%%%%%%%%%%%%%%%%%%%%%%%%%%
\end{document}